\newtheorem{lemma}{Lemma}
\newtheorem{definition}{Definition}
\newtheorem{theorem}{Theorem}
\newtheorem{conjecture}{Conjecture}
\newtheorem{example}{Example}
\newtheorem{note}{Note}
\title{Wireless Bidirectional Relaying and Latin Squares}
\begin{document}

\author{
\authorblockN{Vishnu Namboodiri, Vijayvaradharaj T. Muralidharan and B. Sundar Rajan}
\authorblockA{Dept. of ECE, IISc, Bangalore 560012, India, Email:{$\lbrace$vishnukk, tmvijay, bsrajan$\rbrace$}@ece.iisc.ernet.in
}
}

\maketitle
\thispagestyle{empty}	
\begin{abstract}
The design of modulation schemes for the physical layer network-coded two way relaying scenario is considered with the protocol which employs two phases: Multiple access (MA) Phase and Broadcast (BC) Phase. It was observed by Koike-Akino et al. that adaptively changing the network coding map used at the relay according to the channel conditions greatly reduces the impact of multiple access interference which occurs at the relay during the MA Phase and all these network coding maps should satisfy a requirement called the {\it exclusive law}. We highlight the issues associated with the scheme proposed by Koike-Akino et al. and propose a scheme which solves these issues. We show that every network coding map that satisfies the exclusive law is representable by a Latin Square and conversely, and this relationship can be used to get the network coding maps satisfying the exclusive law. Using the structural properties of the Latin Squares for a given set of parameters, the problem of finding all the required maps is reduced to finding a small set of maps for $M-$PSK constellations. This is achieved using the notions of isotopic and transposed Latin Squares. Even though, the completability of partially filled $M \times M$ Latin Square using $M$ symbols is an open problem, two specific cases where such a completion is always possible are identified and explicit construction procedures are provided. The Latin Squares constructed using the first procedure, helps towards reducing the total number of network coding maps used. The second procedure helps in the construction of certain Latin Squares for $M$-PSK signal set from the Latin squares obtained for $M/2$-PSK signal set.
\end{abstract}
\section{Background and Preliminaries}

We consider the two-way wireless relaying scenario shown in Fig.\ref{relay_channel}, where bi-directional data transfer takes place between the nodes A and B with the help of the relay R. It is assumed that all the three nodes operate in half-duplex mode, i.e., they cannot transmit and receive simultaneously in the same frequency band. The relaying protocol consists of the following two phases: the \textit{multiple access} (MA) phase, during which A and B simultaneously transmit to R and the \textit{broadcast} (BC) phase during which R transmits to A and B. Network coding is employed at R in such a way that A (B) can decode the message of B (A), given that A (B) knows its own message. 
\begin{figure}[htbp]
\centering
\subfigure[MA Phase]{
\includegraphics[totalheight=1in,width=2in]{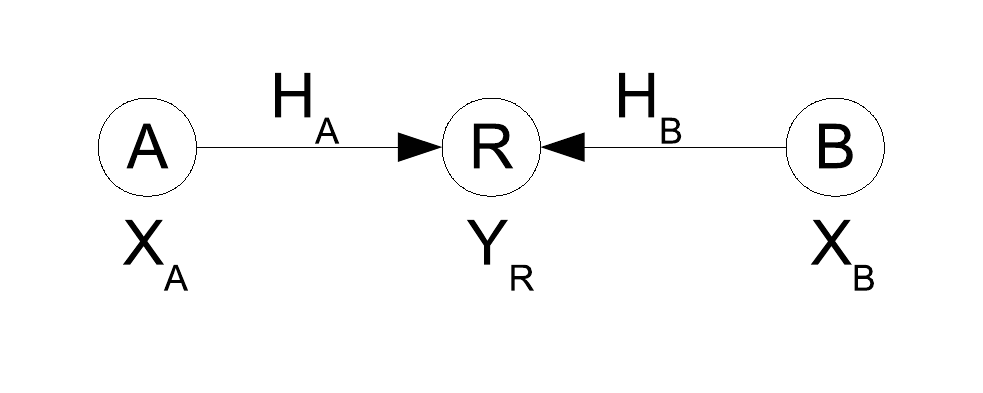}
\label{fig:phase1}
}

\subfigure[BC Phase]{
\includegraphics[totalheight=1in,width=2in]{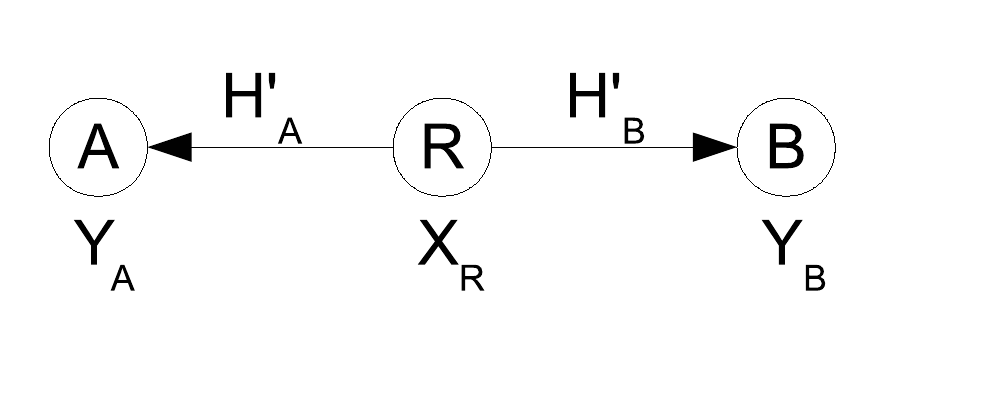}
\label{fig:phase2}
}
\caption{The Two Way Relay Channel}
\label{relay_channel}
\end{figure}
\vspace{-0.1 cm}
 
 The concept of physical layer network coding has attracted a lot of attention in recent times. The idea of physical layer network coding for the two way relay channel was first introduced in \cite{ZLL}, where the multiple access interference occurring at the relay was exploited so that the communication between the end nodes can be done using a two stage protocol. Information theoretic studies for the physical layer network coding scenario were reported in \cite{KMT},\cite{PoY}. The design principles governing the choice of modulation schemes to be used at the nodes for uncoded transmission were studied in \cite{APT1}. An extension for the case when the nodes use convolutional codes was done in \cite{APT2}. A multi-level coding scheme for the two-way relaying scenario was proposed in \cite{HeN}.

It was observed in \cite{APT1} that for uncoded transmission, the network coding map used at the relay needs to be changed adaptively according to the channel fade coefficient, in order to minimize the impact of the multiple access interference.
\subsection{Signal Model}
\subsubsection*{Multiple Access (MA) Phase}

Let $\mathcal{S}$ denote the symmetric $M$-PSK constellation used at A and B, where $M=2^\lambda$, $\lambda$ being a positive integer. Assume that A (B) wants to transmit an $\lambda$-bit binary tuple to B (A). Let $\mu: \mathbb{F}_{2^\lambda} \rightarrow \mathcal{S}$ denote the mapping from bits to complex symbols used at A and B. Let $x_A= \mu(s_A)$, $x_B=\mu(s_B)$ $\in \mathcal{S}$ denote the complex symbols transmitted by A and B respectively, where $s_A,s_B \in \mathbb{F}_{2^\lambda}$. The received signal at $R$ is given by,
\begin{align}
\nonumber
Y_R=H_{A} x_A + H_{B} x_B +Z_R,
\end{align}
where $H_A$ and $H_B$ are the fading coefficients associated with the A-R and B-R links respectively. The additive noise $Z_R$ is assumed to be $\mathcal{CN}(0,\sigma^2)$, where $\mathcal{CN}(0,\sigma^2)$ denotes the circularly symmetric complex Gaussian random variable with variance $\sigma ^2$. We assume a block fading scenario, with the ratio $ H_{B}/H_{A}$ denoted as $z=\gamma e^{j \theta}$, where $\gamma \in \mathbb{R}^+$ and $-\pi \leq \theta < \pi,$ is referred as the {\it fading state} and for simplicity, also denoted by $(\gamma, \theta).$ Also, it is assumed that $z$ is distributed according to a continuous probability distribution.   
 
 Let $\mathcal{S}_{R}(\gamma,\theta)$ denote the effective constellation at the relay during the MA Phase, i.e., 
\begin{align} 
\nonumber
 \mathcal{S}_{R}(\gamma,\theta)=\left\lbrace x_i+\gamma e^{j \theta} x_j \vert x_i,x_j \in \mathcal{S}\right \rbrace.
 \end{align}
Let $d_{min}(\gamma e^{j\theta})$ denote the minimum distance between the points in the constellation $\mathcal{S}_{R}(\gamma,\theta)$, i.e.,

{\footnotesize
\begin{align}
\label{eqn_dmin} 
d_{min}(\gamma e^{j\theta})=\hspace{-0.5 cm}\min_{\substack {{(x_A,x_B),(x'_A,x'_B)}\\{ \in \mathcal{S} \times \mathcal{S}} \\ {(x_A,x_B) \neq (x'_A,x'_B)}}}\hspace{-0.5 cm}\vert \left(x_A-x'_A\right)+\gamma e^{j \theta} \left(x_B-x'_B\right)\vert.
\end{align}
}
 
 From \eqref{eqn_dmin}, it is clear that there exists values of $\gamma e^{j \theta}$ for which $d_{min}(\gamma e^{j\theta})=0$. Let $\mathcal{H}=\lbrace \gamma e^{j\theta} \in \mathbb{C} \vert d_{min}(\gamma,\theta)=0 \rbrace$. The elements of $\mathcal{H}$ are said to be the singular fade states. An alternative definition of singular fade state is as follows:

\begin{definition}
 A fade state $\gamma e^{j \theta}$ is said to be a singular fade state, if the cardinality of the signal set $\mathcal{S}_{R}(\gamma, \theta)$ is less than $M^2$.
\end{definition}
 
 For example, consider the case when symmetric 4-PSK signal set used at the nodes A and B, i.e., $\mathcal{S}=\lbrace (\pm 1 \pm j)/\sqrt{2} \rbrace$. For $\gamma e^{j \theta}=(1+j)/2$, $d_{min}(\gamma e^{j \theta})=0$, since,
\begin{align*} 
 \left\vert \left( \dfrac{1+j}{\sqrt{2}}-\dfrac{1-j}{\sqrt{2}} \right) + \dfrac{(1+j)}{2} \left( \dfrac{-1-j}{\sqrt{2}} - \dfrac{1+j}{\sqrt{2}} \right)\right\vert=0.
 \end{align*}
\noindent 
Alternatively, when $\gamma e^{j \theta}=(1+j)/2$, the constellation $\mathcal{S}_{R}(\gamma,\theta)$ has only 12 ($<$16) points. 
 Hence $\gamma e^{j \theta}=(1+j)/2$ is a singular fade state for the case when 4-PSK signal set is used at A and B.
 
 
 
Let $(\hat{x}_A,\hat{x}_B) \in \mathcal{S} \times \mathcal{S}$ denote the Maximum Likelihood (ML) estimate of $({x}_A,{x}_B)$ at R based on the received complex number $Y_{R}$, i.e.,
 \begin{align}
 (\hat{x}_A,\hat{x}_B)=\arg\min_{({x}'_A,{x}'_B) \in \mathcal{S} \times \mathcal{S}} \vert Y_R-H_{A}{x}'_A-H_{B}{x}'_B\vert.
 \end{align}
\subsubsection*{Broadcast (BC) Phase}

Depending on the value of $\gamma e^{j \theta}$, R chooses a map $\mathcal{M}^{\gamma,\theta}:\mathcal{S} \times \mathcal{S} \rightarrow \mathcal{S}'$, where $\mathcal{S}'$ is the signal set (of size between $M$ and $M^2$) used by R during $BC$ phase. The elements in $\mathcal{S} \times \mathcal{S}$ which are mapped on to the same complex number in $\mathcal{S}'$ by the map $\mathcal{M}^{\gamma,\theta}$ are said to form a cluster. Let $\lbrace \mathcal{L}_1, \mathcal{L}_2,...,\mathcal{L}_l\rbrace$ denote the set of all such clusters. The formation of clusters is called clustering, denoted by $\mathcal{C}^{\gamma,\theta}$. 
 

The received signals at A and B during the BC phase are respectively given by,
\begin{align}
Y_A=H'_{A} X_R + Z_A,\;Y_B=H'_{B} X_R + Z_B,
\end{align}
where $X_R=\mathcal{M}^{\gamma,\theta}(\hat{x}_A,\hat{x}_B) \in \mathcal{S'}$ is the complex number transmitted by R. The fading coefficients corresponding to the R-A and R-B links are denoted by $H'_{A}$ and $H'_{B}$ respectively and the additive noises $Z_A$ and $Z_B$ are $\mathcal{CN}(0,\sigma ^2$).

In order to ensure that A (B) is able to decode B's (A's) message, the clustering $\mathcal{C}^{\gamma,\theta}$ should satisfy the exclusive law \cite{APT1}, i.e.,

{\footnotesize
\begin{align}
\left.
\begin{array}{ll}
\nonumber
\mathcal{M}^{\gamma,\theta}(x_A,x_B) \neq \mathcal{M}^{\gamma,\theta}(x'_A,x_B), \; \mathrm{where} \;x_A \neq x'_A \; \mathrm{,} \;x_B \in  \mathcal{S},\\
\nonumber
\mathcal{M}^{\gamma,\theta}(x_A,x_B) \neq \mathcal{M}^{\gamma,\theta}(x_A,x'_B), \; \mathrm{where} \;x_B \neq x'_B \; \mathrm{,} \;x_A \in \mathcal{S}.
\end {array}
\right\} \\
\label{ex_law}
\end{align}
\vspace{-.3 cm}
}

From an information theoretic perspective, the mapping $\mathcal{M}^{\gamma,\theta}$ needs to satisfy the exclusive law for the reason outlined below. Consider the ideal situation where the additive noises at the nodes are zero. It is assumed that the fading state $\gamma e^{j \theta} \notin \mathcal{H}$. The assumption is required since R can decode unambiguously to an element in $\mathcal{S} \times \mathcal{S}$ only if $\gamma e^{j \theta}$ is not a singular fade state and is justified since $\gamma e^{j \theta}$ takes values from a continuous probability distribution and the cardinality of $\mathcal{H}$ is finite. During the MA Phase, assume the relay jointly decodes correctly to the pair $(x_A,x_B)$ and transmits $X_R=\mathcal{M}^{\gamma,\theta}(x_A,x_B)$ during the BC Phase. The received complex symbols at A and B are respectively $Y_A=H'_A X_R$ and $Y_B=H'_B X_R$. At node A, the amount of uncertainty about $x_B$ which gets resolved after observing $Y_A$, $I(x_B;Y_A/x_A)=H(x_B\vert x_A)-H(x_B\vert Y_A,x_A)=H(x_B)-H(x_B \vert X_R, x_A)$ (since $x_B$ and $x_A$ are independent). Since $H(x_B \vert X_R, x_A)=0$ if and only if the mapping $\mathcal{M}^{\gamma,\theta}$ satisfies the exclusive law, the amount of uncertainty about $x_B$ ($x_A$) which gets resolved at A (B) is maximized if and only if the clustering satisfies the exclusive law. 

\begin{definition}
The cluster distance between a pair of clusters $\mathcal{L}_i$ and $\mathcal{L}_j$ is the minimum among all the distances calculated between the points $x_A+\gamma e^{j\theta} x_B ,x'_A+\gamma e^{j\theta} x'_B \in \mathcal{S}_R(\gamma,\theta)$, where $(x_A,x_B) \in \mathcal{L}_i$ and $(x'_A,x'_B) \in \mathcal{L}_j$.
\end{definition}
\begin{definition}
The \textit{minimum cluster distance} of the clustering $\mathcal{C}^{\gamma,\theta}$ is the minimum among all the cluster distances, i.e.,

{\footnotesize
\begin{align}
\nonumber
d_{min}(\mathcal{C}^{\gamma, \theta})=\hspace{-0.8 cm}\min_{\substack {{(x_A,x_B),(x'_A,x'_B)}\\{ \in \mathcal{S}\times\mathcal{S},} \\ {\mathcal{M}^{\gamma,\theta}(x_A,x_B) \neq \mathcal{M}^{\gamma,\theta}(x'_A,x'_B)}}}\hspace{-0.8 cm}\vert \left( x_A-x'_A\right)+\gamma e^{j \theta} \left(x_B-x'_B\right)\vert.
\end{align}
}

\end{definition}
The minimum cluster distance determines the performance during the MA phase of relaying. The performance during the BC phase is determined by the minimum distance of the signal set $\mathcal{S}'$. Throughout, we restrict ourselves to optimizing the performance during the MA phase. For values of $\gamma e^{j \theta}$ in the neighborhood of the singular fade states, the value of $d_{min}(\gamma e^{j\theta})$ is greatly reduced, a phenomenon referred as {\it distance shortening}. To avoid distance shortening, for each singular fade state, a clustering needs to be chosen such that the minimum cluster distance at the singular fade state is non-zero and is also maximized.  

A clustering $\mathcal{C}^{\lbrace h \rbrace}$ is said to remove a singular fade state $ h \in \mathcal{H}$, if the minimum cluster distance of the clustering $\mathcal{C}^{\lbrace h \rbrace}$ for $\gamma e^{j \theta}=h$ is greater than zero. 
Let $\mathcal{C}_{\mathcal{H}}=\left\lbrace \mathcal{C}^{\lbrace h\rbrace} : h \in \mathcal{H} \right\rbrace$ denote the set of all such clusterings. Let $d_{min}({\mathcal{C}^{\lbrace h\rbrace}},\gamma,\theta)$ denote the minimum cluster distance of the clustering $\mathcal{C}^{\lbrace h\rbrace}$ evaluated at $\gamma e^{j\theta}$. For $\gamma e^{j \theta} \notin \mathcal{H}$, the clustering $\mathcal{C}^{\gamma,\theta}$ is chosen to be $\mathcal{C}^{\lbrace h\rbrace}$, which satisfies $d_{min}({\mathcal{C}^{\lbrace h\rbrace}},\gamma,\theta) \geq d_{min}({\mathcal{C}^{\lbrace h' \rbrace}},\gamma,\theta), \forall h \neq h' \in \mathcal{H}$.

\begin{note}
The clusterings which belong to the set $\mathcal{C}_{\mathcal{H}}$ need not be distinct, since a single clustering can remove more than one singular fade state.
\end{note}

\begin{example}
In the case of BPSK, if the fade state is $\gamma=1$ and $\theta=0$ the distance between the pairs $(0,1)(1,0)$ is zero as in Fig.\ref{fig:BPSK}(a).The following clustering remove this singular fade state:
$$\{\{(0,1)(1,0)\},\{(1,1)(0,0)\}\}$$
The minimum cluster distance is non zero for this clustering.
\end{example}

 \begin{figure}[t]
\centering
\vspace{-.8 cm}
\includegraphics[totalheight=2.5in,width=2.5in]{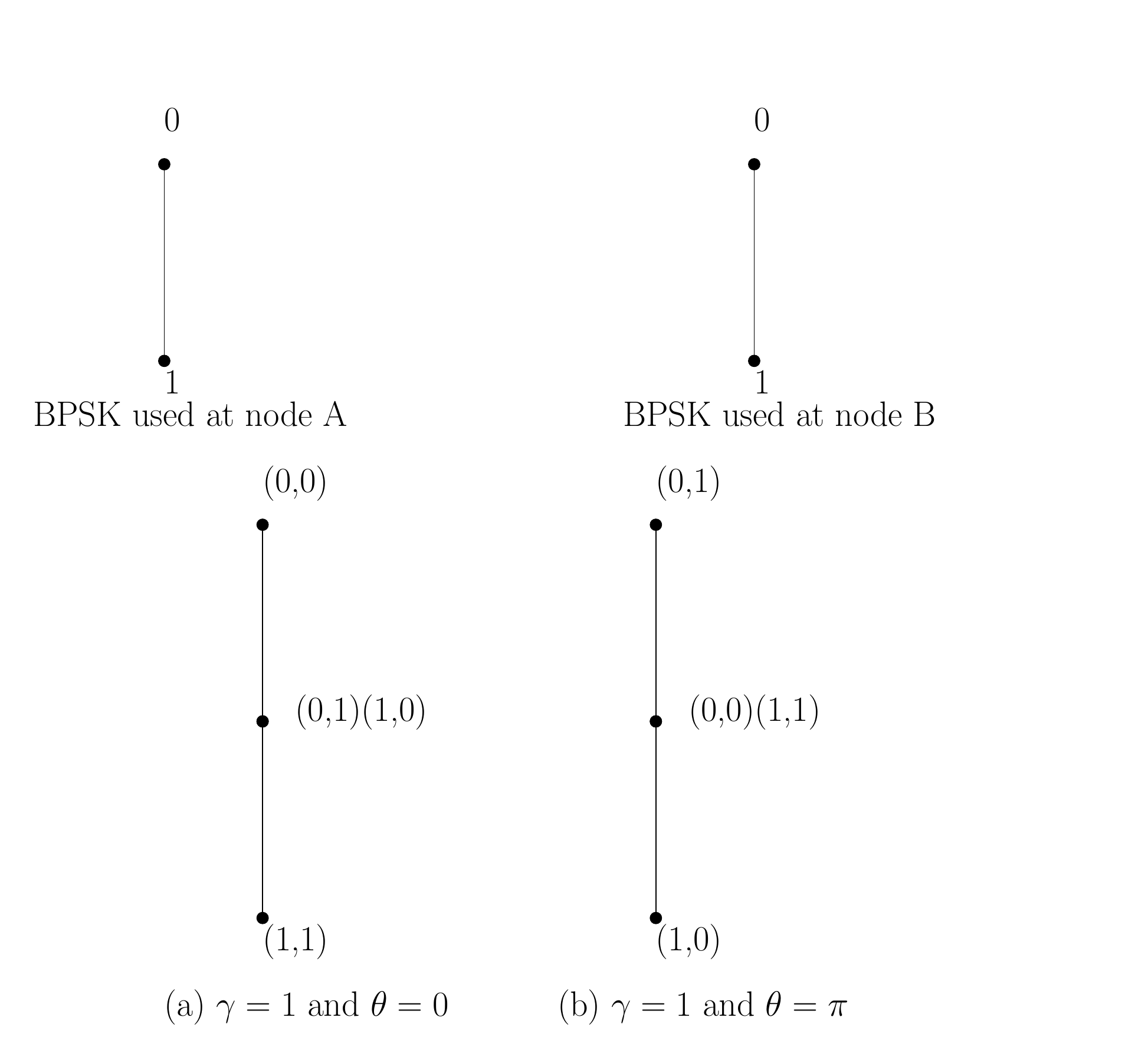}
\caption{Effective Constellation at the relay for singular fade states, when the end nodes use BPSK constellation.}     
\label{fig:BPSK}        
\end{figure}

\subsection{Issues with Koike-Akino Popovski Tarokh's approach \cite{APT1}}
It is assumed that the channel state information is not available at the transmitting nodes A and B during the MA phase. A block fading scenario is assumed and the clustering used by the relay is indicated to A and B by using overhead bits. 

The procedure suggested in \cite{APT1} to obtain the set of all clusterings, was using a computer search algorithm (closest neighbour clustering algorithm), which involved varying the fade state values over the entire complex plane, i.e., $0 \leq  \gamma < \infty$, $0 \leq \theta < 2\pi$ and finding the clustering for each value of channel realization. The total number of network codes which would result is known only after the algorithm is run for all possible realizations $\gamma e^{j \theta}$ which is uncountably infinite and hence the number of overhead bits required is not known beforehand. Moreover, performing such an exhaustive search is extremely difficult in practice, especially when the cardinality of the signal set $M$ is large.

The implementation complexity of the scheme suggested in \cite{APT1} is extremely high. It appears that, for each realization of the singular fade state, the closest neighbour clustering algorithm \cite{APT1} needs to be run at R to find the clustering. In contrast, we provide a simple criterion (Section IV A, \cite{VNR}) based on which a clustering from the set $\mathcal{C}_{\mathcal{H}}$ is chosen depending on the value of $\gamma e^{j \theta}$. In this way, the set of all values of $\gamma e^{j\theta}$ (the complex plane) is quantized into different regions, with a clustering from the set $\mathcal{C}_{\mathcal{H}}$ used in a particular region. 


In the closest neighbour clustering algorithm suggested in \cite{APT1}, the network coding map is obtained by considering the entire distance profile. The disadvantages of such an approach are two-fold. 

\begin{itemize}
\item
Considering the entire distance profile, instead of the minimum cluster distance alone which contributes dominantly to the error probability, results in an extremely large number of network coding maps. For example, for 8-PSK signal set, the closest neighbour clustering algorithm results in more than 5000 maps.
\item
 The closest neighbour clustering algorithm tries to optimize the entire distance profile, even after clustering signal points which contribute the minimum distance. As a result, for several channel conditions, the number of clusters in the clustering obtained is greater than the number of clusters in the clustering obtained by taking the minimum distance alone into consideration. This results in a degradation in performance during the BC Phase, since the relay uses a signal set with cardinality equal to the number of clusters. 
 \end{itemize}

In \cite{APT1}, to overcome the two problems mentioned above, another algorithm is proposed, in which for a given $\gamma e^{j \theta}$, an exhaustive search is performed among all the network coding maps obtained using the closest neighbour clustering algorithm and a map with minimum number of clusters is chosen. The difficulties associated with the implementation of the closest neighbour clustering algorithm carry over to the implementation of this algorithm as well.

To avoid all these problems we suggest a scheme, which is based on the removal of all the singular fade states. Since the number of singular fade states is finite (the exact number of singular fade states and their location in the complex plane are discussed in Section II), the total number of network coding maps used is upper bounded by the number of singular fade states. In fact, the total number of network coding maps required is shown to be lesser than the total number of singular fade states in Section VI. In other words, the total number of network coding maps required is known exactly, which determines the number of overhead bits required. It is shown in section III that the problem of obtaining clusterings which remove all the singular fade states reduces to completing a finite number of partially filled Latin Squares, which totally avoids the problem of performing exhaustive search for an uncountably infinite number of values.


In \cite{APT1}, an important observation was that for the case of QPSK modulation during the MA Phase, there exists several channel conditions under which the use of unconventional 5-ary signal constellation results in a larger throughput. It appears as if the mitigation of the multiple access interference, comes at the cost of degraded performance during the BC phase, because of the use of a signal set with a larger cardinality during the BC phase. By providing all the clusterings explicitly, it is shown in Section III B that the use of a signal set with larger cardinality is not required for 8-PSK signal set. In other words, the mitigation of multiple access interference does not come at the cost of degraded performance during the BC phase, for 8-PSK signal set.

The contributions and organization of the paper are as follows:
\begin{itemize}
\item It is shown that the requirement of satisfying the exclusive law is same as the clustering being represented by a Latin Square and can be used to get the clustering which removes singular fade states. In other words, it is shown that the problem of finding a clustering which removes a singular fade state reduces to filling a partially filled Latin Square. 
\item Using the properties of the set of Latin Squares for a given set of parameters, the problem of finding the set of maps corresponding to all the singular fade states can be simplified to finding the same for only for a small subset of singular fade states. Specifically, it is shown that 
\begin{enumerate}
\item For the set of all singular fade states lying on a circle, from a Latin Square corresponding to one singular fade state, Latin Squares for the other singular fade states can be obtained by appropriate permutation of the columns of the first Latin Square.
\item There is a one-to-one correspondence between a Latin Square corresponding to a  singular fade state on a circle of radius $r$ and a Latin Square corresponding to a singular fade state on a circle of radius $\frac{1}{r}.$  
\end{enumerate}
\item It is shown that the bit-wise XOR mapping can remove the singular fade state $(\gamma=1, \theta=0)$ for any  $M$-PSK, (i.e., for $M$ any power of 2) 
\item  For any $M$-PSK signal set, all the clusterings which can remove the singularities can be obtained with the aid of Latin Squares along with their isotopes. As an example, this is shown explicitly for QPSK signal set and 8PSK signal set.
\item It was shown in \cite{APT1} that for the case when QPSK signal set is used during the MA phase, there exists certain values of fade state for which a clustering with cardinality 5 needs to be used to maximize the minimum cluster distance. While the use of clusterings with cardinality 5, reduces the impact of multiple access interference, it adversely impacts the performance during the BC phase. For 8-PSK signal set, clusterings which remove the singular fade states, all of which have a cardinality of 8 are explicitly given, which implies that there is no trade-off between the MAC Phase and the BC phase.
\item
Even though, the problem of completability using $M$ symbols of an $M \times M$ Latin Square in general is unsolved, it is shown that the structure of the partially filled Latin Square in the problem considered allows the construction of explicit Latin Squares, for some singular fade states. In other words, by providing some explicit constructions, for $M$-PSK signal set, it is shown that the partially filled Latin Squares corresponding to some singular fade states can be completed using $M$ symbols itself.
\item
It was observed in \cite{APT1} that for 4 PSK signal set there exists clusterings which remove more than one singular fade state. For any $M$-PSK signal set, certain special cases when multiple singular fade states are removed by the same Latin Square are identified. A construction algorithm is provided to obtain such Latin Squares. Each one of the Latin Squares obtained using the algorithm provided removes $M^2/8$ singular fade states. 
\item
An explicit construction procedure is provided to show that certain Latin Squares for $M$-PSK signal set, are obtainable from the Latin Squares for $M/2$-PSK signal set. 
\item
When the end nodes use constellations of different sizes $M(=2^\lambda)$ and $N(=2^\mu)$, to get the clusterings it is required to fill Latin Rectangles. These Latin rectangles are obtained by removing appropriate columns of the Latin Squares constructed for the case when both the end nodes use constellations of same size which is $\max(M,N)$.
\end{itemize}

\section{SINGULAR FADE STATES FOR M-PSK SIGNAL SET}

Throughout the paper the points in the symmetric $M$-PSK signal set are assumed to be of the form $e^{j (2k+1) \pi/M},0 \leq k \leq M-1$ and $M$ is of the form $2^\lambda$, where $\lambda$ is a positive integer.
Let $\Delta\mathcal{S}$ denote the difference constellation of the $M$-PSK signal set $\mathcal{S}$, i.e., $\Delta\mathcal{S}=\lbrace s_i-s'_i \vert  s_i, s'_i \in \mathcal{S}\rbrace$.

For any M-PSK signal set, the set $\Delta\mathcal{S}$ is of the form,

{\footnotesize
\begin{align}
\nonumber
\Delta\mathcal{S}=&\left\lbrace 0\right\rbrace\cup \left\lbrace 2\sin(\pi n /M) e^{j k 2 \pi/M}  \vert{n \; \textrm{odd} }\right\rbrace\\
\nonumber
&\hspace{2 cm}\cup\left\lbrace 2\sin(\pi n /M) e^{j (k 2 \pi/M +  \pi/M)}\vert{n \; \textrm{even} }\right\rbrace,
\end{align}
}where $1 \leq n \leq M/2$ and $0 \leq k \leq M-1$.

In other words, the non-zero points in $\Delta\mathcal{S}$ lie on $M/2$ circles of radius $2\sin(\pi n/M), 1 \leq n \leq M/2$ with each circle containing $M$ points. The phase angles of the $M$ points on each circle is $2 k \pi/M$, if $n$ is odd and $2k \pi/M+\pi/M$ if $n$ is even, where $0 \leq k \leq M-1$. For example the difference constellation for 4-PSK and 8-PSK signal sets are shown in Fig. \ref{4psk_diff} and Fig. \ref{8psk_diff} respectively.

\begin{figure}[htbp]
\centering
\includegraphics[totalheight=3.5in,width=6in]{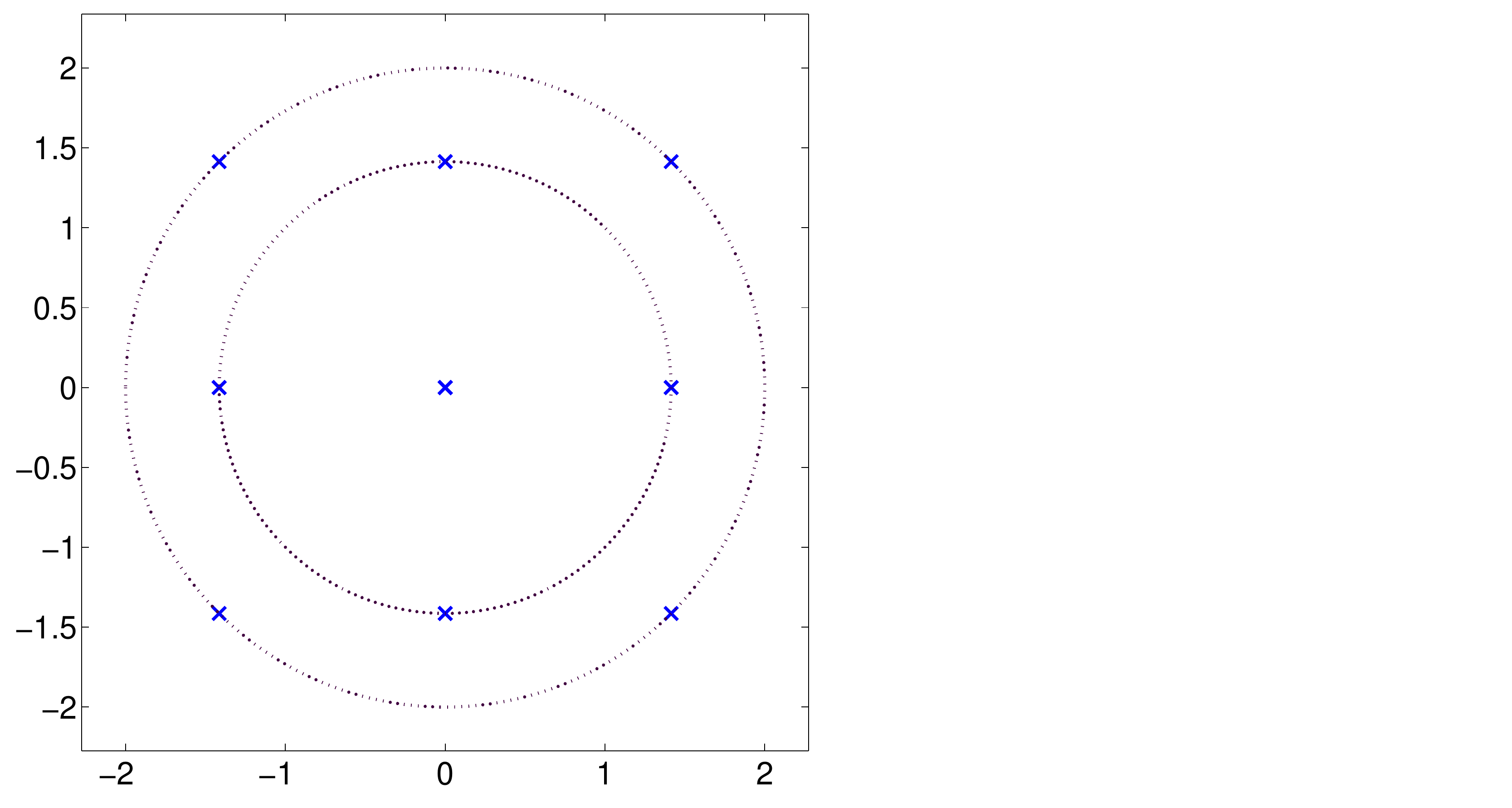}
\caption{Difference constellation for 4-PSK signal set}	
\label{4psk_diff}	
\end{figure}

\begin{figure}[htbp]
\centering
\includegraphics[totalheight=3in,width=5in]{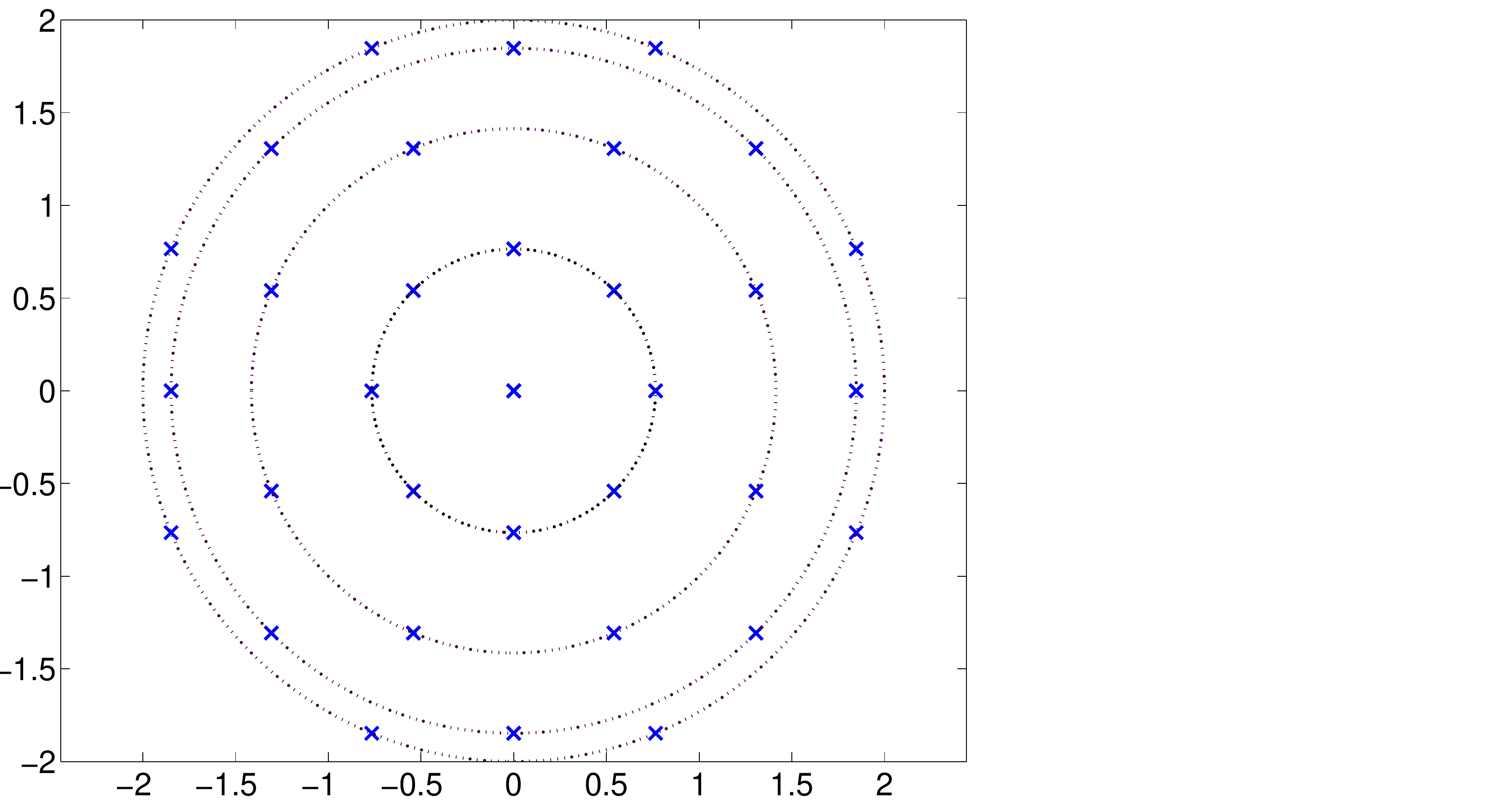}
\caption{Difference constellation for 8-PSK signal set}	
\label{8psk_diff}	
\end{figure}

Let us define,

\begin{align}
\nonumber
x_{k,n}=\left\lbrace
\begin{array}{ll}
\nonumber
2\sin(\pi n /M) e^{j k 2 \pi/M} {\;\textrm{if}\; n \; \textrm{is} \; \textrm{odd}, }\\
2\sin(\pi n /M) e^{j (k 2 \pi/M +  \pi/M)} {\;\textrm{if}\; n \; \textrm{is} \; \textrm{even}, }
\end{array}
\right.\\
\label{eqn_diff}
\end{align}
where $1 \leq n \leq M/2$ and $0 \leq k \leq M-1$.

From \eqref{eqn_dmin}, it follows that the singular fade states are of the form, 

{\footnotesize
\vspace{-0.2 cm}
\begin{align*}
\gamma_{s} e^{j \theta_{s}}=-x_{k,n}/x_{k',n'}, 
\end{align*}
}for some $x_{k,n},x_{k',n'} \in \Delta\mathcal{S}$.

\begin{lemma}
\label{lemma_trig}
For integers $k_1$, $k_2$, $l_1$ and $l_2$, where $$1 \leq k_1,k_2,l_1,l_2 \leq \frac{M}{2},k_1 \neq k_2  \textrm{ and }  l_1 \neq l_2,$$

{\footnotesize
\vspace{-0.4 cm}
\begin{align}
\nonumber
\dfrac{\sin(k_1 \pi/M)}{\sin(k_2 \pi/M)}=\dfrac{\sin(l_1 \pi/M)}{\sin(l_2 \pi/M)},
\end{align}
}if and only if $k_1 = l_1$ and  $k_2 = l_2$.
\begin{proof}
See \cite{VNR}.
\end{proof}
\end{lemma}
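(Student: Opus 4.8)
The forward implication is immediate, so the work is all in the reverse direction, and the plan is to convert the hypothesis into a linear relation among cosines of rational multiples of $\pi$ and then invoke a linear-independence fact that is special to $M$ being a power of $2$. Since each angle $k_i\pi/M,\,l_i\pi/M$ lies in $(0,\pi/2]$, all four sines are strictly positive and the hypothesis is equivalent to $\sin(k_1\pi/M)\sin(l_2\pi/M)=\sin(k_2\pi/M)\sin(l_1\pi/M)$; applying $2\sin X\sin Y=\cos(X-Y)-\cos(X+Y)$ and the evenness of $\cos$ turns this into
\begin{equation}
\cos\Big(\tfrac{|k_1-l_2|\pi}{M}\Big)+\cos\Big(\tfrac{(k_2+l_1)\pi}{M}\Big)=\cos\Big(\tfrac{|k_2-l_1|\pi}{M}\Big)+\cos\Big(\tfrac{(k_1+l_2)\pi}{M}\Big).
\tag{$\star$}
\end{equation}
So it suffices to show that $(\star)$, together with $1\le k_i,l_i\le M/2$, forces $k_1=l_1$ and $k_2=l_2$.

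First I would clear the trivial sub-cases: if $k_2=l_2$ the original product relation gives $\sin(k_1\pi/M)=\sin(l_1\pi/M)$, hence $k_1=l_1$ because $\sin$ is injective on $[\pi/M,\pi/2]$, and symmetrically if $k_1=l_1$; moreover the same product relation shows $k_1<l_1\iff k_2<l_2$, so I may assume $k_1<l_1$ and $k_2<l_2$. Then $|k_1-l_2|,|k_2-l_1|\in[0,M/2-1]$ while $k_1+l_2,k_2+l_1\in[3,M-1]$. Now I would use that $M=2^\lambda$: the $2M$-th cyclotomic polynomial is $x^M+1$, so $\cos(\pi/M)=\tfrac12(\zeta_{2M}+\zeta_{2M}^{-1})$ has degree $\phi(2M)/2=M/2$ over $\mathbb{Q}$, and since $\cos(k\pi/M)=T_k(\cos(\pi/M))$ with the Chebyshev polynomial $T_k$ of degree $k$, the transition matrix from $\{1,\cos(\pi/M),\dots,\cos(\pi/M)^{M/2-1}\}$ is triangular and invertible; hence $\{1,\cos(\pi/M),\cos(2\pi/M),\dots,\cos((M/2-1)\pi/M)\}$ is a $\mathbb{Q}$-basis of $\mathbb{Q}(\cos(\pi/M))$, and in particular these $M/2$ reals are linearly independent over $\mathbb{Q}$.

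The last step, and the bulk of the work, is to push $(\star)$ through this basis. Every ``difference'' term $\cos(r\pi/M)$ with $0\le r\le M/2-1$ is already a basis element (or $1$ when $r=0$), while every ``sum'' term $\cos(s\pi/M)$ with $3\le s\le M-1$ equals $\cos(s\pi/M)$ if $s\le M/2-1$, equals $0$ if $s=M/2$, and equals $-\cos((M-s)\pi/M)$ with $M-s\in[1,M/2-1]$ if $s\ge M/2+1$. Substituting these into $(\star)$ yields a $\mathbb{Z}$-linear combination of the basis elements that is zero, so all coefficients must vanish; this forces the arguments of the surviving terms to match up in $\pm$ pairs, and there are only a handful of such cancellation patterns (essentially: $|k_1-l_2|=|k_2-l_1|$ paired with $k_1+l_2=k_2+l_1$ up to a reflection $s\mapsto M-s$; or $|k_1-l_2|$ matched against a reflected sum; or a sum matched against itself). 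Each pattern, when translated back, gives a relation such as $l_1-k_1=l_2-k_2$ with $|k_1-l_2|=|k_2-l_1|$, or $|k_1-l_2|=k_1+l_2$, or $k_1+l_2=k_2+l_1=M/2$; and each of these is incompatible with $k_1<l_1$, $k_2<l_2$ (for example $|k_1-l_2|=k_1+l_2$ is impossible since $k_1,l_2\ge1$, and the first pair of relations forces $k_1=k_2$ and then $l_1=l_2$), contradicting $k_1\neq l_1$. I expect this case enumeration to be the only real obstacle: it is elementary but tedious, whereas all the conceptual content sits in the linear independence of the $\cos(j\pi/M)$, which is exactly where the hypothesis $M=2^\lambda$ is used.
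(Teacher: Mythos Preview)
The paper does not actually prove this lemma here: its entire proof is the citation ``See \cite{VNR}'', so there is no in-paper argument against which to compare your sketch. Taken on its own merits, your route is sound. The product-to-sum reduction to $(\star)$ is correct, and the crucial algebraic fact---that for $M=2^\lambda$ the numbers $\{\cos(j\pi/M):0\le j\le M/2-1\}$ are $\mathbb{Q}$-linearly independent---is established exactly as you say, via $[\mathbb{Q}(\cos(\pi/M)):\mathbb{Q}]=\phi(2M)/2=M/2$ and the triangular Chebyshev change of basis. This is precisely where $M=2^\lambda$ is used; for general $M$ the degree drops and the independence fails, so the lemma genuinely needs that hypothesis.

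The only place your sketch is thin is the final cancellation enumeration. After folding, $(\star)$ becomes $e_a+\epsilon_b e_{\tilde b}-e_c-\epsilon_d e_{\tilde d}=0$ with $a,c\in[0,M/2-1]$, $\epsilon_b,\epsilon_d\in\{-1,0,1\}$, and $\tilde b,\tilde d\in[1,M/2-1]$ when nonzero. The patterns are not only the three you list: one must also dispose of the asymmetric cases $\epsilon_b=0,\epsilon_d\ne0$ (and vice versa), the case where three or all four folded indices coincide, and the mixed-sign matching $a=\tilde b$ with $\epsilon_b=-1$ paired against $c=\tilde d$ with $\epsilon_d=-1$ (which unwinds to $l_1=l_2=M/2$ and then $k_1=k_2$). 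None of these survive the constraints $k_1<l_1$, $k_2<l_2$, $k_1\ne k_2$, but each needs its own line; your examples (``$|k_1-l_2|=k_1+l_2$ is impossible'', ``the first pair forces $k_1=k_2$'') are correct but not exhaustive. Since that enumeration \emph{is} the proof, write it out in full rather than asserting it.
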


The following lemma  gives the location of the singular fade states in the complex plane.
\begin{lemma}
\label{lemma_singularity}
The singular fade states other than zero lie on $M^2/4-M/2+1$ circles with $M$ points on each circle, with the radii of the circles given by $\sin(k_1\pi/M)/\sin(k_2 \pi/M)$, where $1 \leq k_1,k_2 \leq M/2$. The phase angles of the $M$ points on each one of the circles are given by $k2\pi/M$, $0 \leq k \leq M-1$, if both $k_1$ and $k_2$ are odd or both are even and $k2\pi/M+\pi/M$, $0 \leq k \leq M-1$, if only one among $k_1$ and $k_2$ is odd.
%
\end{lemma}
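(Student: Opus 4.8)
The plan is to argue directly from the displayed representation of the nonzero singular fade states, $\gamma_s e^{j\theta_s} = -x_{k,n}/x_{k',n'}$, where now $x_{k,n}$ and $x_{k',n'}$ are the \emph{nonzero} elements of $\Delta\mathcal{S}$ described in \eqref{eqn_diff} (the excluded ``zero'' singular fade state being the one coming from the zero element of $\Delta\mathcal{S}$), and to handle the modulus and the argument separately: the modulus produces the list of radii and the circle count, and the argument produces the $M$ phase angles on each circle.

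First I would look at the modulus. Since $|x_{k,n}| = 2\sin(\pi n/M)$ depends only on $n$, every nonzero singular fade state has modulus $\sin(\pi n/M)/\sin(\pi n'/M)$ with $1 \le n, n' \le M/2$. If $n = n'$ this equals $1$; if $n \ne n'$, strict monotonicity of $\sin$ on $(0,\pi/2]$ shows it is never $1$, and Lemma~\ref{lemma_trig} shows that two such ratios arising from distinct ordered pairs are never equal. Hence the distinct radii are the value $1$ together with the $(M/2)(M/2-1)$ pairwise distinct values indexed by the ordered pairs $(n,n')$ with $n \ne n'$, i.e. $M^2/4 - M/2 + 1$ circles, and moreover each circle of radius $\ne 1$ is pinned to a single ordered pair $(n,n')$.

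Next the argument. Writing $-1 = e^{j\pi}$ and using that $M = 2^\lambda$, so $M/2 \in \mathbb{Z}$ and $\pi = (M/2)(2\pi/M)$ is an integer multiple of $2\pi/M$, I would split on the parities of $(n,n')$. When $n$ and $n'$ have the same parity the $\pi/M$ terms in \eqref{eqn_diff} cancel, giving $\arg(\gamma_s e^{j\theta_s}) \equiv (M/2 + k - k')\,2\pi/M \pmod{2\pi}$, an angle of the form $k\,2\pi/M$; when exactly one of $n, n'$ is odd an uncancelled $\pm\pi/M$ remains, and after absorbing it into a shift of the integer part the angle takes the form $k\,2\pi/M + \pi/M$. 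Since each radius $\ne 1$ corresponds to one ordered pair $(n,n')$ and hence to one parity type, the phase prescription of the statement is consistent on each circle; and as $(k,k')$ range over $\{0,\dots,M-1\}^2$ the residue of $k-k'$ mod $M$ takes all $M$ values, so each circle carries exactly $M$ distinct points and every listed point is attained. The radius-$1$ circle arises only from $n = n'$ (same parity), matching the ``both odd or both even'' clause.

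I do not anticipate a genuine obstacle. The only step with real content is the distinctness and count of the radii, which is exactly what Lemma~\ref{lemma_trig} is built to supply; the one delicate piece of bookkeeping is tracking the surviving $\pm\pi/M$ in the mixed-parity case, where a sign or off-by-one slip would be easy, together with the trivial observation that $2\sin(\pi n/M) > 0$ for $1 \le n \le M/2$ so that the ratio is always well defined.
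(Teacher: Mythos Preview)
The paper does not actually supply a proof of this lemma in the text; it is stated and then immediately used, with the supporting Lemma~\ref{lemma_trig} deferred to \cite{VNR}. Your argument is correct and is precisely the one the paper's setup is engineered to enable: the parametrization \eqref{eqn_diff} of the nonzero points of $\Delta\mathcal{S}$ reduces the modulus of $-x_{k,n}/x_{k',n'}$ to $\sin(\pi n/M)/\sin(\pi n'/M)$, and Lemma~\ref{lemma_trig} together with monotonicity of $\sin$ on $(0,\pi/2]$ gives the count $M^2/4 - M/2 + 1$ and pins each non-unit radius to a unique ordered pair $(n,n')$; the parity split on $(n,n')$ then determines whether a $\pi/M$ offset survives in the argument, and the surjectivity of $(k,k')\mapsto k-k'\bmod M$ gives exactly $M$ points per circle. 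There is nothing to correct; this is the intended proof.
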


From Lemma \ref{lemma_singularity}, it follows that if $\gamma e^{j\theta}$ is a singular fade state, $\frac{1}{\gamma} e^{-j \theta}$ is also a singular fade state.
\begin{example}
For the case when 4-PSK signal set is used during the MA Phase, the singular fade states lie on three circles as shown in Fig. \ref{4psk_sing}.
\end{example}
\begin{example}
For the case when 8-PSK signal set is used during the MA Phase, the singular fade states lie on thirteen circles as shown in Fig. \ref{8psk_sing}.
\end{example}

\begin{figure}[htbp]
\centering
\includegraphics[totalheight=3in,width=5.5in]{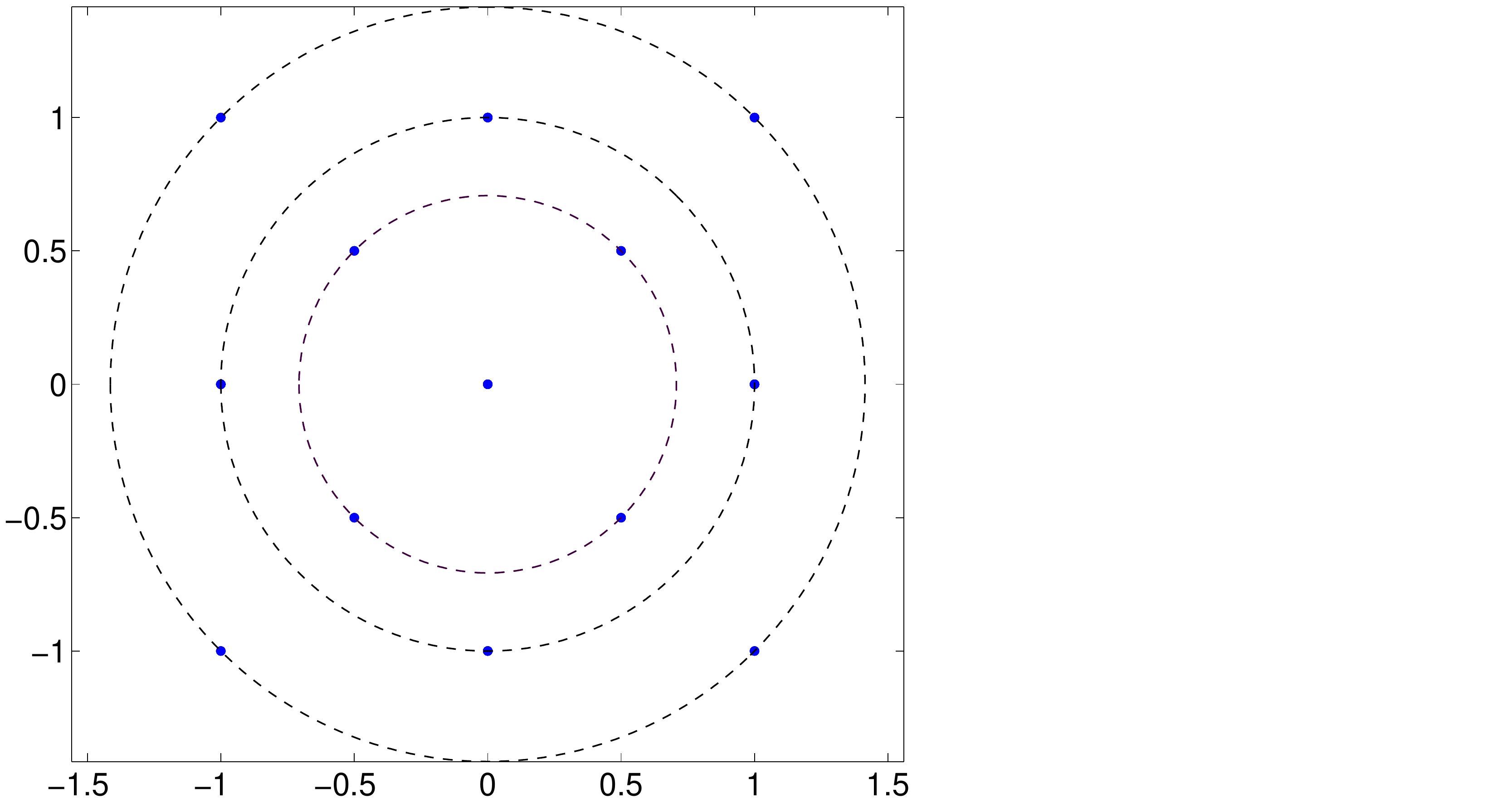}
\caption{Singular fade states for 4-PSK signal set}	
\label{4psk_sing}	
\end{figure}

\begin{figure}[htbp]
\centering
\includegraphics[totalheight=2.5in,width=4.5in]{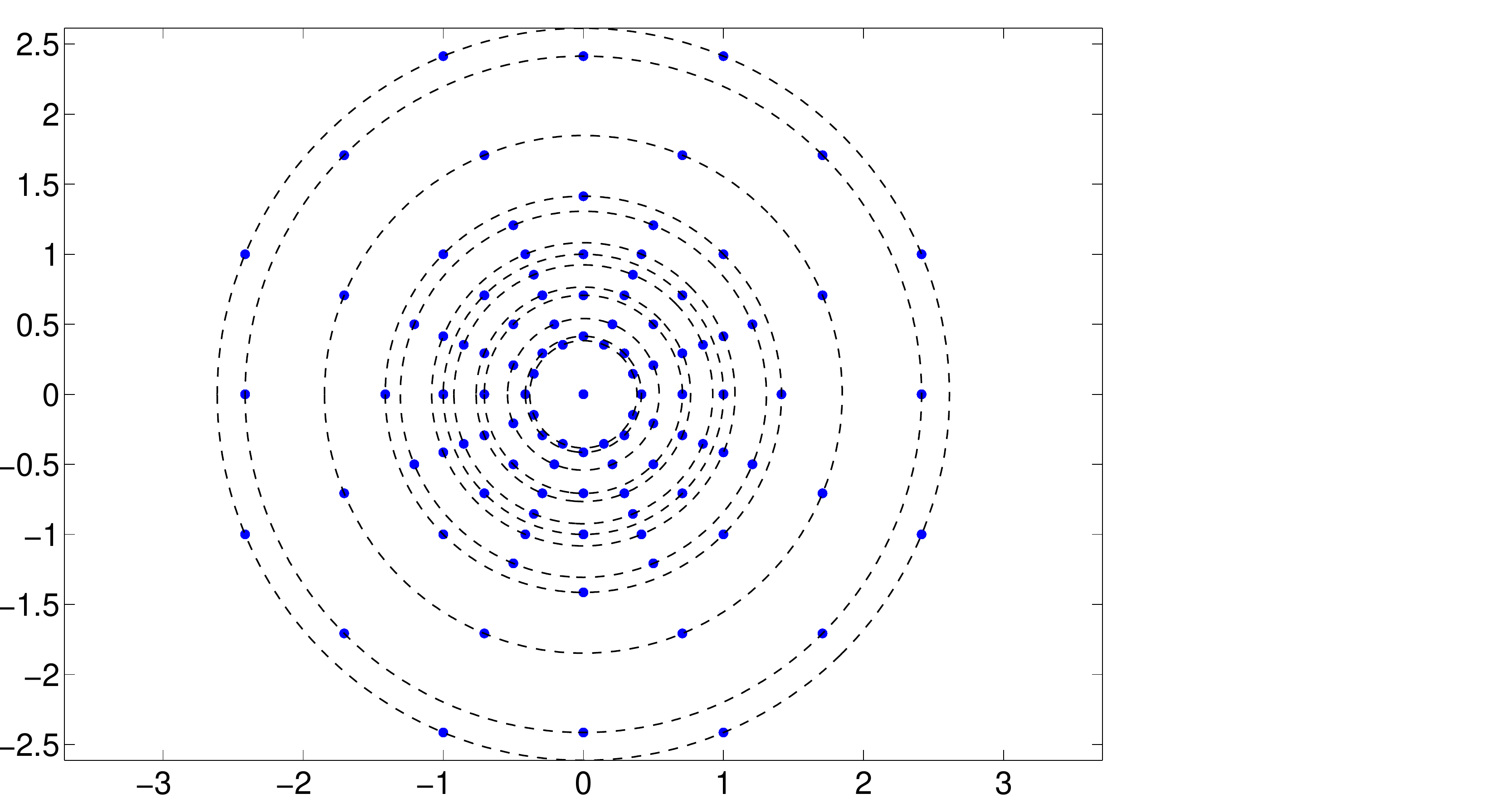}
\caption{Singular fade states for 8-PSK signal set}	
\label{8psk_sing}	
\end{figure}

\section{The Exclusive Law and Latin Squares}

\textit{Definition 4:} \cite{Rod} A Latin Square L of order $M$ on the symbols from the set $\mathbb{Z}_t=\{0,1, \cdots ,t-1\}$ is an \textit{M} $\times$ \textit{M}  array, in which each cell contains one symbol and each symbol occurs at most once in each row and column. 

Let the nodes A and B use the same constellation of size $M.$ Consider an \textit{M} $\times$ \textit{M} array at the relay with the rows (columns) indexed by the constellation point used by node A (B), i.e., symbols from the set \{{0, 1, 2,}\dotso, {$M$-1}\}. The relay is allowed to use any constellation with size $\textit{t}\geq\text{M}$ (using $t > M$ may lead to some advantages, see \cite{APT1}).    Our aim is to cluster the $M^{2}$ slots in the $\textit{M}\times\textit{M}$ array such that the exclusive law is satisfied. To do so, we will fill in the slots in the array with the elements of set $\mathbb{Z}_{t}$ in such a way that \eqref{ex_law} is satisfied, and the clusters are obtained by taking all the row-column pairs $(i,j),$ $i,j \in {0,1,\cdots,M-1},$ such that the entry in the $(i,j)-$th slot is the same symbol from  $\mathbb{Z}_{t}.$ The specific symbols from $\mathbb{Z}_t$ are not important, but it is the set of clusters that are important. 
Now, it is easy to see that if the exclusive law need to be satisfied, then the clustering should be such that an element in a row and also in a column cannot be repeated in the same row and column. Thus all the relay clusterings which satisfy the exclusive law form Latin Squares. Hence, we have the following:

\textit{ All the relay clusterings which satisfy the exclusive law forms Latin Squares, when the end nodes use constellations of same size.}

With this observation, the study  of clustering which satisfies the exclusive law can be equivalently carried out as the study  of  Latin Squares with appropriate parameters.

\subsection{Removing Singular fade states and Constrained Latin Squares}

The relay can manage with constellations of size $M$ in BC phase, but it is observed that in some cases relay may not be able to remove the singular fade states with $t=M$ and results in severe performance degradation in the MA phase \cite{APT1}. Let $(k,l)(k^{\prime},l^{\prime})$ be the pairs which give same point in the effective constellation $\mathcal{S}_R$ at the relay for a singular fade state, where $k,k^{\prime},l,l^{\prime} \in \{0,1,....,M-1\}$. Let $k,k^{\prime}$ be the constellation points used by node A and $l,l^{\prime}$ be the constellation points used by node B. If they are not clustered together, the minimum cluster distance will be zero. To avoid this, those pairs should be in same cluster. This requirement is termed as {\it singularity-removal constraint}. So, we need to obtain Latin Squares which can remove singular fade states and with minimum value for $t.$ Therefore, initially we will fill the slots in the $\textit{M}\times\textit{M}$ array  such that for the slots corresponding to a singularity-removal constraint the same element will be used to fill slots. This removes that particular singular fade state. Such a partially filled Latin Square is called a {\it Constrained Partial Latin Square}. After this,  to make this a Latin Square, we will try to fill the other slots of the  partially filled, Constrained Partial Latin Square with minimum number of symbols from the set $\mathbb{Z}_{t}.$
\begin{example}
Consider the case where both A and B uses BPSK as shown in Fig. \ref{fig:BPSK}, where the effect of noise is neglected. There are two singular fade states, one at $(\gamma =1, \theta=0)$ and the other at $(\gamma =1, \theta =\pi).$  We try to eliminate the singular fade states one by one. First to remove $(\gamma=1, \theta=0),$  the symbol in $0^{th}$ row $1^{st}$ column (henceforth the slot (0,1)) and symbol in (1,0) should be same.  Otherwise, the  minimum cluster distance will be zero. We are using symbol 1 (choice of this symbol will not alter the clustering) for this and we will get the Constrained Partial Latin Square as in Table. \ref{LL}.  This uniquely completes to the Latin Square in Table. \ref{LLL}. Notice that this will remove the singular fade state  $(\gamma = 1, \theta = \pi)$ also.  This Latin Square corresponds to the bit-wise XOR mapping,  but with higher order constellations the number of singular fade states increases and bit-wise XOR cannot remove (will be seen in the sequel) all the singular fade states. 
\end{example}
\begin{table}
\parbox{.45\linewidth}{
\centering
 \begin{tabular}{|c|c|}
\hline  & 1 \\ 
\hline 1 &  \\ 
\hline 
\end{tabular} 
\caption{Partially filled Latin Square}
\label{LL}
}
\hfill
\parbox{.45\linewidth}{
\centering
\begin{tabular}{|c|c|}
\hline 0 & 1 \\ 
\hline 1 & 0 \\ 
\hline
\end{tabular} 
\caption{Completely Filled Latin Square}
\label{LLL}
}
\end{table} 
Let the $M-$PSK points be $e^{j(2k+1) \pi/M}$, where $0 \leq k \leq M-1$. For simplicity, by the point $k$ we mean the point $e^{j(2k+1) \pi/M}.$  The following lemma shows that bit-wise XOR mapping removes the singular fade state $(\gamma=1,\theta=0)$, for any $M$-PSK signal set. 
%
\begin{lemma}
When the user nodes  use $2^{\lambda}$-PSK constellations, the singular fade state $(\gamma=1, \theta=0)$ is removed by bit-wise XOR mapping (denoted by $\oplus$), for all $\lambda.$
\end{lemma}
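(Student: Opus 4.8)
The plan is to show directly that the XOR clustering --- which trivially satisfies the exclusive law, being the addition (Cayley) table of the additive group $\mathbb{F}_{2}^{\lambda}$ --- has strictly positive minimum cluster distance at $(\gamma=1,\theta=0)$. Since that fade state is the scalar $z=1$, the effective constellation is $\mathcal{S}_R(1,0)=\{x_k+x_l\}$, and by the singularity-removal principle recalled above this amounts to proving: whenever two \emph{distinct} ordered pairs $(k,l)$ and $(k',l')$ of indices in $\{0,\dots,M-1\}$ satisfy $x_k+x_l=x_{k'}+x_{l'}$, one has $k\oplus l=k'\oplus l'$.

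First I would clean up the collision equation. Writing $\zeta:=e^{j2\pi/M}$ for a primitive $M$-th root of unity, the $M$-PSK points are $x_k=e^{j\pi/M}\zeta^k$, so $x_k+x_l=x_{k'}+x_{l'}$ is equivalent to $\zeta^k+\zeta^l-\zeta^{k'}-\zeta^{l'}=0$. The key structural fact, which uses $M=2^\lambda$, is that $\zeta^{M/2}=-1$ and $1,\zeta,\dots,\zeta^{M/2-1}$ are linearly independent over $\mathbb{Q}$ (equivalently $\Phi_{2^\lambda}(x)=x^{M/2}+1$, so $[\mathbb{Q}(\zeta):\mathbb{Q}]=\varphi(M)=M/2$). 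Reducing every exponent modulo $M$ and using $\zeta^{M/2}=-1$ turns each $\zeta^e$ into $\pm\zeta^{\,e\bmod(M/2)}$, so the relation becomes a $\mathbb{Q}$-linear dependence among $1,\zeta,\dots,\zeta^{M/2-1}$, forcing every coefficient to vanish.

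Then I would run the (short) case analysis of \emph{how} the four terms $+\zeta^k,+\zeta^l,-\zeta^{k'},-\zeta^{l'}$ can cancel after reduction. The reduced exponents must match up in opposite-sign pairs (they cannot split $3+1$ or $2+1+1$, since an unmatched term would leave a nonzero $\pm1$ coefficient). The three matchings give: (i) $\zeta^k$ against $\zeta^{k'}$ and $\zeta^l$ against $\zeta^{l'}$, forcing $k=k'$, $l=l'$ --- excluded, as it is the same pair; (ii) $\zeta^k$ against $\zeta^{l'}$ and $\zeta^l$ against $\zeta^{k'}$, forcing $k=l'$, $l=k'$, i.e.\ $(k',l')=(l,k)$ is the transpose; (iii) $\zeta^k+\zeta^l=0$ and $\zeta^{k'}+\zeta^{l'}=0$, which forces $l\equiv k+M/2$ and $l'\equiv k'+M/2\pmod M$. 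The degenerate subcase where two (or all four) reduced exponents coincide --- so that a coefficient becomes $\pm2$ --- is disposed of the same way: a $\pm2$ coefficient cannot be balanced by two unit-coefficient terms except by collapsing back to case (i), so nothing new appears.

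Finally I would translate back to XOR. In case (ii), $k\oplus l=l\oplus k$ trivially. In case (iii), since $M/2=2^{\lambda-1}$ and $0\le k<2^\lambda$, adding $2^{\lambda-1}$ modulo $2^\lambda$ merely toggles the top bit, so $l\equiv k+M/2\pmod M$ is literally $l=k\oplus(M/2)$, giving $k\oplus l=M/2$, and likewise $k'\oplus l'=M/2$. Hence in every non-trivial collision the two pairs carry the same XOR label, so no two colliding pairs land in different clusters of the XOR map; the minimum cluster distance at $(\gamma=1,\theta=0)$ is therefore positive, i.e.\ the XOR mapping removes this singular fade state for all $\lambda$. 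I expect the only real work to be the bookkeeping in the cancellation case analysis --- in particular, ruling out the coincident-exponent degenerate cases and checking that ``differ by $M/2$ modulo $M$'' is exactly ``differ by the XOR mask $M/2$'' --- rather than anything deep; alternatively the same conclusion can be reached by elementary trigonometry (matching $|x_k+x_l|=2|\cos((k-l)\pi/M)|$ and the bisector direction, then using $0\le k,l,k',l'\le M-1$ to eliminate the spurious solution), but the root-of-unity argument is cleaner.
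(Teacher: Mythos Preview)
Your proof is correct and reaches exactly the same two families of collisions as the paper --- the transpose case $(k',l')=(l,k)$ and the antipodal case $l\equiv k+M/2$, $l'\equiv k'+M/2$ --- and then verifies that XOR is constant on each. The route, however, is genuinely different. The paper writes the collision condition as an amplitude--phase identity
\[
\frac{\sin\!\big(\pi(k-k')/M\big)}{\sin\!\big(\pi(l'-l)/M\big)}\,e^{j\pi(k+k'-l-l')/M}=1,
\]
equates modulus and argument separately, and solves the resulting pair of elementary equations; this is precisely the trigonometric alternative you mention in your last sentence. You instead invoke the algebraic fact that $\Phi_{2^\lambda}(x)=x^{M/2}+1$, so $1,\zeta,\dots,\zeta^{M/2-1}$ are $\mathbb{Q}$-linearly independent, turning the collision into a vanishing $\{\pm1\}$-combination of basis vectors and forcing a pairwise cancellation. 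What your approach buys is a uniform, sign-bookkeeping argument that makes the ``only three pairings'' structure transparent and generalises cleanly (it is essentially the Lam--Leung / R\'edei classification of vanishing sums of $p^a$-th roots of unity specialised to $p=2$). What the paper's approach buys is that it stays entirely elementary --- no cyclotomic input --- and plugs directly into the amplitude/phase framework used throughout the rest of the paper (Lemmas~\ref{lemma_sing_const1}--\ref{lemma_sing_const2}). Your handling of the degenerate coincident-exponent subcases is terse but sound; if you want to tighten it, the cleanest phrasing is to rewrite $\zeta^k+\zeta^l-\zeta^{k'}-\zeta^{l'}=0$ as a vanishing sum of four $M$-th roots of unity $\zeta^k+\zeta^l+\zeta^{k'+M/2}+\zeta^{l'+M/2}=0$ and note that for $M=2^\lambda$ any such sum decomposes into antipodal pairs, which immediately yields your three matchings without a separate degeneracy check.
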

\begin{proof}
For $\lbrace (k,k^{\prime}), (l,l^{\prime})\rbrace$ to be a singularity constraint corresponding to the singular fade state $(\gamma=1, \theta=0)$,

{\footnotesize
\vspace{-.3 cm}
\begin{align}
\nonumber
&-\frac{e^{ \frac{jk\pi}{M}}-e^{\frac{jk^{\prime}\pi}{M}}}{e^{\frac{jl\pi}{M}}-e^{\frac{jl^{\prime}\pi}{M}}}=1,\textrm{ i.e.,}\\
\label{eqn_xor}
&\frac{\sin\left[\pi(k-k^{\prime})/2^{\lambda}\right]} {\sin \left[\pi(l^{\prime}-l)/2^{\lambda}\right]}e^{j \frac{\pi}{M}(k+k;-l-l')}=1.
\end{align}
}

Equating the amplitudes on both sides of \eqref{eqn_xor}, one gets $$ \sin\left[\pi(k-k^{\prime})/2^{\lambda}\right] = \sin \left[\pi(l^{\prime}-l)/2^{\lambda}\right], $$ 
which leads to  the following two cases:

\noindent 
{\it Case (i):} $k-k^{\prime}= l^{\prime}-l$ \\
{\it Cases (ii):}$\dfrac{\pi(k-k^{\prime})}{2^{\lambda}} = \pi-\dfrac{\pi(l^{\prime}-l)}{2^{\lambda}} \implies k-l = k^{\prime}-l^{\prime}+2^{\lambda}.$
Equating the phases on both the sides of \eqref{eqn_xor} leads to 
\begin{equation}
\label{thetaresult}
\frac{\pi}{2^{\lambda}}(k+k^{\prime}-l-l^{\prime})=0 \implies k+k^{\prime} = l+l^{\prime}.
\end{equation} 
Combining {\it Case (i)} above and \eqref{thetaresult} gives $(k^{\prime},l^{\prime}) = (l,k),$
i.e.,  the singularity-removal constraint is of the form $\{(k,l)(l,k)\}$. In other words, the clustering should satisfy this symmetry.

Combining {\it Case (ii)} above and \eqref{thetaresult} leads to $k= l+2^{\lambda-1}$ irrespective of $k^{\prime},l^{\prime}$ and $k^{\prime} = l^{\prime}+2^{\lambda-1}$ irrespective of $k,l.$ In other words,  $\{(l+2^{\lambda-1},l)\}$, $l \in \{0,1,....,2^{\lambda}\}$ is the set of singularity-removal constraints.

From the above, one can conclude that a clustering which removes the singular fade state ($\gamma=1, \theta=0$) should have \\
(i) A symmetric Latin Square, meaning that the cells $(k,l)$ and $(l,k)$ should have the same symbol.\\
(ii) A Latin Square with the symbols in the cells  $\{(l+2^{\lambda-1},l)\},$ and  $l \in \{0,1,....,2^{\lambda}\}$ being the same.

The Latin Square produced by bit-wise XOR mapping is clearly symmetric. Moreover, the quantity  $(l+2^{\lambda-1})\oplus l$ is always equal to $2^{\lambda-1}$ for all values of $l,$ i.e., the symbols in all the cells of the set   $\{(l+2^{\lambda-1},l)\}, ~~ l \in \{0,1,....,2^{\lambda}\}$ are the same. Hence the  XOR map removes the singular fade state ($\gamma=1, \theta=0$).
\end{proof}
\textit{Definition 5:} \cite{Sto} Two Latin Squares $L$ and $L$ $^{\prime}$ (using the same symbol set) are isotopic if there is a triple $(\textit{f,g,h}),$ where $f$ is a row permutation, $g$ is a column permutation and $h$ is a symbol permutation, such that applying these permutations on  $L$ gives $L^{\prime}.$
\begin{lemma}
\label{lemma_col_perm}
Two Latin Squares $L$ and $L^\prime$ which remove the singular fade states $(\gamma, \theta)$ and $(\gamma, \theta^{\prime})$, respectively, (i.e., two singular fade states on the same circle), are Isotopic that are obtainable one from another by a column permutation alone. If $\theta'-\theta = k\frac{2\pi}{M}$, $L'$ can be obtained by cyclic shifting of the columns of $L$, $k$ times in the anticlockwise direction. 
\end{lemma}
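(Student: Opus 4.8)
The plan is to argue at the level of the singularity-removal constraints and to exploit the $M$-fold rotational symmetry of the $M$-PSK constellation. Write $x_k := e^{j(2k+1)\pi/M}$ for the point labelled $k$, with indices read modulo $M$ throughout. Since $x_{k+k_0} = x_k\, e^{j2k_0\pi/M}$, every difference-constellation element scales uniformly under a shift of indices: $x_{k+k_0} - x_{k'+k_0} = (x_k - x_{k'})\, e^{j2k_0\pi/M}$. Recall also that a pair $\{(k,l),(k',l')\}$ is a singularity-removal constraint for a fade state $\gamma e^{j\theta}$ exactly when $\gamma e^{j\theta} = -(x_k - x_{k'})/(x_l - x_{l'})$.

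First I would compute the effect of shifting the two column indices of a constraint by $k_0$. Using the scaling identity,
$$-\frac{x_k - x_{k'}}{x_{l+k_0} - x_{l'+k_0}} = -\frac{x_k - x_{k'}}{x_l - x_{l'}}\, e^{-j2k_0\pi/M} = \gamma\, e^{j(\theta - 2k_0\pi/M)},$$
and since the modulus is untouched, the image pair $\{(k,l+k_0),(k',l'+k_0)\}$ is a singularity-removal constraint for a fade state of the same radius $\gamma$ but phase $\theta - 2k_0\pi/M$. By Lemma \ref{lemma_singularity} the $M$ singular fade states on the circle of radius $\gamma$ have precisely the phases obtained from $\theta$ by adding integer multiples of $2\pi/M$; hence, for $\theta'$ on that circle with $\theta' = \theta - 2k_0\pi/M$, the shift-by-$k_0$ map carries the constraint set of $(\gamma,\theta)$ \emph{onto} that of $(\gamma,\theta')$, its inverse being the shift by $-k_0$.

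Finally I would transfer this to Latin squares. Given $L$ that removes $(\gamma,\theta)$ and a target $\theta'$ with $\theta'-\theta = k\,\tfrac{2\pi}{M}$, pick $k_0$ with $\theta-2k_0\pi/M = \theta'$ and define $L'$ by cyclically shifting the columns of $L$, i.e. $L'(a,b) := L(a,b+k_0)$; a column permutation of a Latin square is a Latin square, so $L'$ is one. For any constraint $\{(a,b),(a',b')\}$ of $(\gamma,\theta')$, the pair $\{(a,b+k_0),(a',b'+k_0)\}$ is a constraint of $(\gamma,\theta)$, so $L(a,b+k_0) = L(a',b'+k_0)$, that is $L'(a,b) = L'(a',b')$; hence $L'$ removes $(\gamma,\theta')$. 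Running the same argument in reverse shows $L \leftrightarrow L'$ is a bijection between the Latin squares removing the two fade states, realised by the isotopy $(f,g,h) = (\mathrm{id}, g, \mathrm{id})$ with $g$ the cyclic column shift — matching $k_0$ to the ``$k$ times anticlockwise'' statement is only bookkeeping about the column labelling. The single point that genuinely needs care is the word \emph{onto} above: one must know that $(\gamma,\theta')$ is itself a singular fade state of the same radius and that no constraint is lost, which is exactly what Lemma \ref{lemma_singularity} supplies; everything else reduces to the elementary fact that rotating the PSK constellation by $2\pi/M$ is implemented on the array by a cyclic shift of columns.
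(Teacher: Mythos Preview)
Your proof is correct and rests on the same underlying idea as the paper's --- the $M$-fold rotational symmetry of the $M$-PSK constellation, which turns a phase shift of the fade state into a cyclic relabelling of B's symbols and hence a cyclic column shift of the array. The paper argues this at a physical level: a rotation of $z$ by $\theta'-\theta$ is reinterpreted as rotating B's constellation, and since the singular fade states on a circle are spaced by $2\pi/M$, this rotation is exactly a relabelling of PSK points, so $L'$ arises from $L$ by the corresponding column permutation. You instead work directly and algebraically with the singularity-removal constraints, using the identity $x_{l+k_0}-x_{l'+k_0}=(x_l-x_{l'})e^{j2k_0\pi/M}$ to exhibit an explicit bijection between the constraint sets of $(\gamma,\theta)$ and $(\gamma,\theta')$, and then transport $L$ to $L'$ cell-by-cell. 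Your route is more self-contained and rigorous (it does not lean on the heuristic ``rotation of the channel equals rotation of the constellation''), while the paper's phrasing makes the geometric picture more immediate; but the content is the same, and your remark that matching the sign of $k_0$ to ``$k$ times anticlockwise'' is bookkeeping is accurate.
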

\begin{proof}
Let $L$ and $L^\prime$, respectively remove the singular fade states $(\gamma, \theta)$ and $(\gamma, \theta^\prime).$ 

The effect of rotation in the $z-$plane by an angle $\theta^\prime - \theta$ due to channel fade coefficients $H_A$ and $H_B$ can be viewed equivalently as a relative rotation of the constellation used by B by an angle $\theta^\prime - \theta$ with respect to the constellation used by A and no relative rotation between the channel fade coefficients $H_A$ and $H_B.$ Let $S$ and $S^{\prime}$ be the resulting rotated constellations after rotation in the constellation of node B corresponding to an angle $\theta^\prime - \theta.$ 

Since there are $M$ singular fade states for a specific $\gamma$, (shown in \cite{VNR}), and they are all spaced by same angular separation,  $\theta^\prime - \theta$ is an integer multiple of $2\pi/M$ which is an  angular separation of the $M$-PSK constellation points. That is, a rotation in the channel by an angle $\theta^\prime - \theta$  is equivalent to a rotation in the constellation points in the $M$-PSK constellation. So, we can obtain the Latin Square L$^{\prime}$ by column permutations in L, since the columns are indexed by constellation points used by node B. This means, if we obtain the Latin Square for a singular fade state $(\gamma, \theta),$ then by appropriately shifting the columns we obtain the Latin Squares that remove all the other singular fade states of the form $(\gamma, \theta^\prime).$ This completes the proof.
\end{proof}

From \ref{lemma_col_perm}, it follows that for each circle, it is enough if we obtain one Latin Square which removes a singular fade state on that circle. The Latin Squares which remove the other singular fade states can be obtained by column permutation. For example, all the Latin Squares which remove the singular fade states on the unit circle can be obtained from the bit-wise XOR map by column permutation.

\textit{Definition 6:} A Latin Square $L^T$ is said to be the Transpose of a Latin Square $L$, if $L^T(i,j)=L(j,i)$ for all $i,j \in \{0,1,2,..,M-1\}.$

Recall from Section II that if $\gamma e^{j\theta}$ is a singular fade state, then $\frac{1}{\gamma}e^{-j \theta}$  is also a singular fade state. The following Lemma shows that the transpose of the Latin Square which removes $\gamma e^{j\theta}$ removes the singular fade state $\frac{1}{\gamma}e^{-j \theta}$.
\begin{lemma}
\label{lemma_trans}
If the Latin Square $L$ removes the singular fade state $(\gamma, \theta),$ then the  Latin Square $L^T$ removes the singular fade state $(\frac{1}{\gamma}, -\theta).$
\end{lemma}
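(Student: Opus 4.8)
The plan is to show directly that $L^{T}$ removes $(\frac{1}{\gamma},-\theta)$ by relating the minimum cluster distance of $L^{T}$ at $(\frac{1}{\gamma},-\theta)$ to that of $L$ at $(\gamma,\theta)$. Everything rests on one elementary identity: for any constellation points $x_{a},x_{b}$,
\begin{align}
\nonumber
x_{a}+\frac{1}{\gamma}e^{-j\theta}x_{b}=\frac{1}{\gamma}e^{-j\theta}\left(x_{b}+\gamma e^{j\theta}x_{a}\right),
\end{align}
so that passing from the fade state $\gamma e^{j\theta}$ to its reciprocal $\frac{1}{\gamma}e^{-j\theta}$ amounts to interchanging the roles of the two users (rows $\leftrightarrow$ columns) followed by multiplication by the \emph{nonzero} scalar $\frac{1}{\gamma}e^{-j\theta}$. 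In particular, a pair $\{(k,l),(k',l')\}$ is a singularity-removal constraint for $\gamma e^{j\theta}$ if and only if $\{(l,k),(l',k')\}$ is one for $\frac{1}{\gamma}e^{-j\theta}$ (divide the defining relation $(x_{k}-x_{k'})+\gamma e^{j\theta}(x_{l}-x_{l'})=0$ by $\gamma e^{j\theta}$).

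Next I would spell out the correspondence between the two clusterings. By Definition~6, $(a,b)$ and $(a',b')$ lie in the same cluster of $L^{T}$ iff $L^{T}(a,b)=L^{T}(a',b')$, i.e. iff $L(b,a)=L(b',a')$, i.e. iff $(b,a)$ and $(b',a')$ lie in the same cluster of $L$. Hence the coordinate swap $(a,b)\mapsto(b,a)$ carries the clusters of $L^{T}$ bijectively onto the clusters of $L$, and in particular faithfully transports the relation ``lies in a different cluster.'' Now take any $(a,b),(a',b')$ with $L^{T}(a,b)\neq L^{T}(a',b')$. Using the displayed identity,
\begin{align}
\nonumber
\left|\left(x_{a}-x_{a'}\right)+\frac{1}{\gamma}e^{-j\theta}\left(x_{b}-x_{b'}\right)\right|=\frac{1}{\gamma}\left|\left(x_{b}-x_{b'}\right)+\gamma e^{j\theta}\left(x_{a}-x_{a'}\right)\right|,
\end{align}
and the right-hand side is a distance between two points of $\mathcal{S}_{R}(\gamma,\theta)$ belonging to distinct clusters of $L$. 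Taking the minimum over all such pairs on both sides yields $d_{min}(L^{T},\frac{1}{\gamma},-\theta)=\frac{1}{\gamma}\,d_{min}(L,\gamma,\theta)$, where I write $d_{min}(L,\cdot,\cdot)$ for the minimum cluster distance of the clustering represented by $L$. Since $L$ removes $(\gamma,\theta)$, the right-hand side is strictly positive, hence so is the left-hand side; that is, $L^{T}$ removes $(\frac{1}{\gamma},-\theta)$. Finally, the transpose of a Latin Square is again a Latin Square (each symbol still occurs at most once per row and column), so $L^{T}$ is a legitimate clustering satisfying the exclusive law.

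I do not anticipate a serious obstacle: the statement is essentially a symmetry, and the only points requiring care are that the bijection between the clusters of $L$ and of $L^{T}$ is exactly the swap $(a,b)\mapsto(b,a)$ — so that ``different cluster'' is preserved in both directions — and that the multiplier $\frac{1}{\gamma}e^{-j\theta}$ is nonzero, so it neither creates nor destroys coincidences. (One could instead argue purely combinatorially using Section~II: the singular fade states have the form $-x_{k,n}/x_{k',n'}$, a set visibly invariant under $h\mapsto 1/h$ with $(k,n)\leftrightarrow(k',n')$, and the same coordinate swap matches the singularity-removal constraints accordingly; but the cluster-distance computation above has the advantage of also recording the quantitative relation $d_{min}(L^{T},\frac{1}{\gamma},-\theta)=\frac{1}{\gamma}\,d_{min}(L,\gamma,\theta)$.)
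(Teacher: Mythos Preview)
Your proof is correct and takes a genuinely different route from the paper. The paper argues at the level of singularity-removal constraints: it writes an arbitrary constraint $\{(k_1,l_1),(k_2,l_2)\}$ for $(\gamma,\theta)$, uses the explicit $M$-PSK formulas $\gamma=\sin(\pi(k_1-k_2)/M)/\sin(\pi(l_2-l_1)/M)$ and $\theta=\frac{\pi}{M}(k_1+k_2-l_1-l_2)$, and then checks by direct trigonometric computation that the transposed constraint $\{(l_1,k_1),(l_2,k_2)\}$ has parameters $(1/\gamma,-\theta)$. You instead bypass the constraint bookkeeping and the PSK-specific formulas entirely, working directly with the minimum cluster distance via the scalar identity $x_a+\frac{1}{\gamma}e^{-j\theta}x_b=\frac{1}{\gamma}e^{-j\theta}(x_b+\gamma e^{j\theta}x_a)$. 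Your argument is constellation-agnostic (it never uses that $\mathcal{S}$ is $M$-PSK), and it delivers the bonus quantitative relation $d_{min}(L^T,\frac{1}{\gamma},-\theta)=\frac{1}{\gamma}\,d_{min}(L,\gamma,\theta)$, which the paper's proof does not. The paper's approach, on the other hand, makes the constraint-level bijection explicit, which dovetails with the later CPLS constructions in Sections~V--VI where one manipulates constraints cell by cell.
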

\begin{proof}
Let $\{(k_1,l_1)(k_2,l_2)\}$ be a singularity-removal constraint for the singular fade state $(\gamma, \theta).$  Then,\\ 
$$\gamma=\dfrac{\sin\left(\frac{\pi(k_{1}-k_{2})}{M}\right)}{\sin\left(\frac{\pi(l_{2}-l_{1})}{M}\right)} \mbox{  and  }\theta=\dfrac{\pi}{M}(k_{1}+k_{2}-l_{1}-l_{2}).$$
Taking transpose in the constraint we will obtain $\{(l_1,k_1)(l_2,k_2\}.$ Let this constraint correspond to the singular fade state $(\gamma^\prime,  \theta^\prime).$  Then,

$$\gamma^\prime =\frac{\sin\left(\frac{\pi(l_{1}-l_{2})}{M}\right)}{\sin\left(\frac{\pi(k_{2}-k_{1})}{M}\right)} =\dfrac{\sin\left(\frac{\pi(l_{2}-l_{1})}{M}\right)}{\sin\left(\frac{\pi(k_{1}-k_{2})}{M}\right)}=1/\gamma.$$
\noindent
Similarly,
$$\theta^\prime=\frac{\pi}{M}(l_{1}+l_{2}-k_{1}-k_{2}) =-\frac{\pi}{M}(k_{1}+k_{2}-l_{1}-l_{2}) =-\theta.$$
This completes the proof.
%
\end{proof}

 Lemma \ref{lemma_trans} implies that it is enough to obtain the Latin Squares which remove those singular fade states which lie on or inside the unit circle centered at the origin. The Latin Squares which remove the singular fade states which lie outside the unit circle can be obtained by taking transpose.

\section{Illustrations}

\subsection{End nodes use QPSK}
There are 12 singular fade states (shown in Fig.\ref{4psk_sing}), when both the end nodes A and B use QPSK as in Fig.\ref{fig:nodes}.


\begin{figure}[t]
\centering
\vspace{-2 cm}
\includegraphics[totalheight=4in,width=3in]{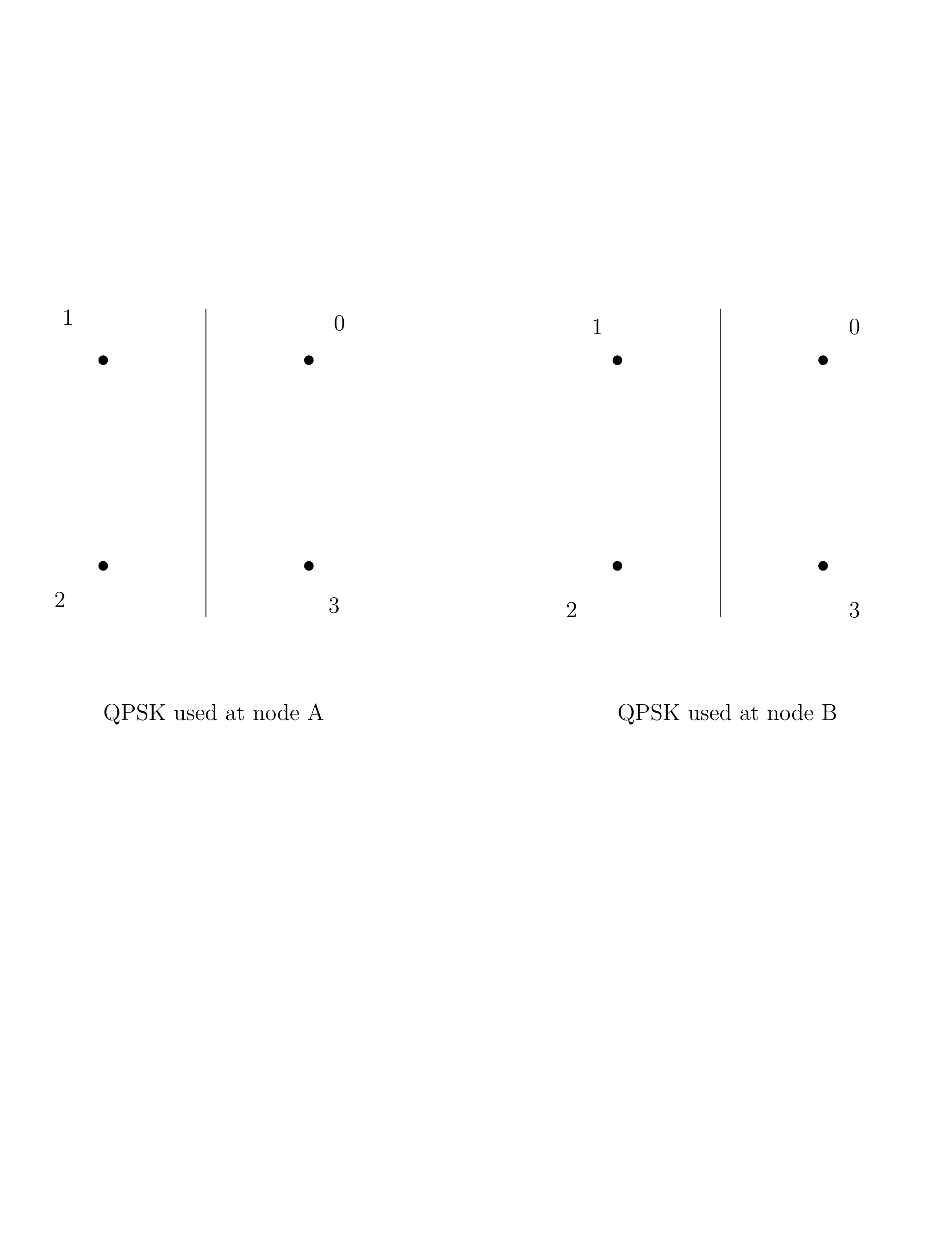}
\vspace{-4cm}
\caption{QPSK Constellations used at the end nodes}	
\label{fig:nodes}	
\end{figure}

The singular fade states are
\begin{align}
\nonumber
\gamma=1 &;\hspace{10pt} \theta=0,+\pi/2,-\pi/2,\pi\\
\nonumber
\gamma=1/\sqrt{2} &;\hspace{10pt} \theta=+\pi/4,+3\pi/4,-\pi/4,-3\pi/4\\
\nonumber
\gamma=\sqrt{2}&;\hspace{10pt}\theta=+\pi/4,+3\pi/4,-\pi/4,-3\pi/4.
\end{align}

We remove singular fade states one by one. Consider first, the case $(\gamma=1, \theta=0).$  
The singularity-removal constraints  are 
\begin{center}
$\{(0,1)(1,0)\};~~ \{(0,2)(1,3)(2,0)(3,1)\};~~ \{(0,3)(3,0)\};$ \\
$ \{(1,2)(2,1)\}; ~~\{(2,3)(3,2)\}.$
\end{center}

Satisfying these constraints,  a Latin Square can be constructed with $t$=4, in three different ways, $L_{1}, L_{2}$ and $L_{3}$ as shown in figure \ref{L1}, figure \ref{L2} and figure \ref{L3}. All these three clusterings corresponding to each Latin Square give the same performance on the basis of first minimum cluster distance in the MA phase. But the advantage with the one shown in figure \ref{L1} is that it removes singular fade state at $(\gamma=1, \theta=\pi)$ also. This is easily verified, by seeing that  after two cyclic shifts in the columns of $L_1$ the clustering that it results in is the same as the old one. This is explicitly  shown in Fig \ref{fig:latinshift}. 


The singularity-removal constraints for $(\gamma=1, \theta=\pi)$ are 
\begin{center}
$\{(0,3)(1,2)\}; ~~ \{(0,0)(1,1)(2,2)(3,3)\}; ~~ \{(0,1)(3,2)\};$ \\
$\{(1,0)(2,3)\}; ~~\{(2,1)(3,0)\}.$
\end{center}
\noindent
 The Latin squares to remove this singular fade state, $L_{1}, L_{4}$ and $L_{5}$ are shown in figure \ref{L1}, figure \ref{L4} and figure \ref{L5} respectively.
 
%
 
\begin{figure}
\centering
\vspace{0 cm}
\includegraphics[totalheight=1.4in,width=3in]{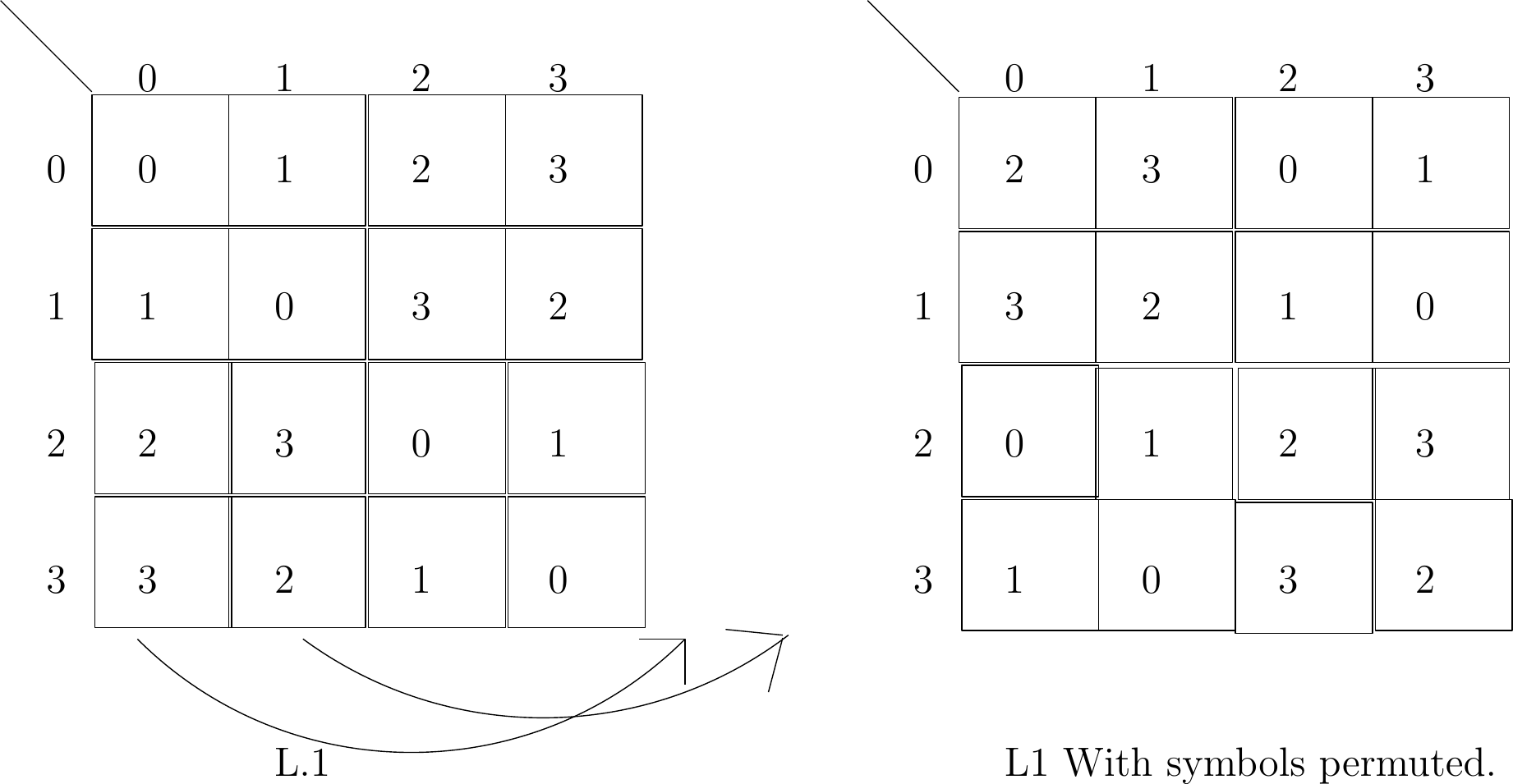}
\vspace{0 cm}
\caption{Two rotations of $L_{1}$ gives same clustering as that of $L_{1}$}	
\label{fig:latinshift}	
\end{figure}

In order to reduce the total number of different clusterings we select the clustering corresponds to the Latin Square $L_{1}$ shown in figure \ref{L1} as the clustering to remove both these singular fade states. The corresponding clustering, $\mathcal{C}_0$ is
\begin{center}
 \{(0,1)(1,0)(2,3)(3,2)\}, ~~ \{(0,2)(1,3)(2,0)(3,1)\}, \\
\{(0,3)(3,0)(1,2)(2,1)\}, ~~ \{(0,0)(1,1)(2,2)(3,3)\}.
\end{center}

Now consider the singular fade state $(\gamma=1, \theta=\pi/2).$ The singularity-removal constraints are 
\begin{center}
$\{(0,0)(1,3)\};~~\{(0,1)(1,2)(2,3)(3,0)\}; ~~\{(0,2)(3,3)\}$\\
$\{(1,1)(2,0)\}; ~~\{(2,2)(3,1)\}$
\end{center}
In this case also Latin Square can be constructed with $t=4$, in three different ways as shown in figure \ref{L6}, figure \ref{L7} and figure \ref{L8}. But as in earlier case out of these three one, $L_{6}$ (shown in figure \ref{L6}) will remove singular fade state $(\gamma=1, \theta=-\pi/2)$. The singularity-removal constraints for $(\gamma=1, \theta=-\pi/2)$ are 
\begin{center}
$\{(0,2)(1,1)\}; ~~\{(0,3)(1,0)(2,1)(3,2)\}; ~~\{(0,0)(3,1)\};$ \\
$\{(1,3)(2,2)\}; ~~ \{(2,0)(3,3)\}.$
\end{center}


 All the Latin Squares which remove the singular fade state $(\gamma=1, \theta=-\pi/2)$ are shown in figure \ref{L6}, figure \ref{L9} and figure \ref{L10}. We will select that clustering which reduces total number of different clusterings, i.e, $L_{6}$ as was done before. The corresponding clustering, $\mathcal{C}_1$ is as follows:

\begin{center}
\{(0,0)(1,3)(2,2)(3,1)\}, ~~ \{(0,1)(1,2)(2,3)(3,0)\}, \\

\{(0,2)(3,3)(1,1)(2,0)\}, ~~ \{(0,3)(1,0)(2,1)(3,2)\}.
\end{center}

The interesting point here is that the Latin Squares $L_{1}$ and $L_{6}$  are Isotopic Latin Squares. That is, clustering corresponding to $L_{6}$ is obtained by, cyclically shifting the columns of $L_{1}$ (since columns are indexed by constellation points used by node B) as in Fig.\ref{fig:Obtaining L.6 from L.1 by column shifting}.

\begin{figure}[t]
\centering
\vspace{-1 cm}
\includegraphics[totalheight=2in,width=3in]{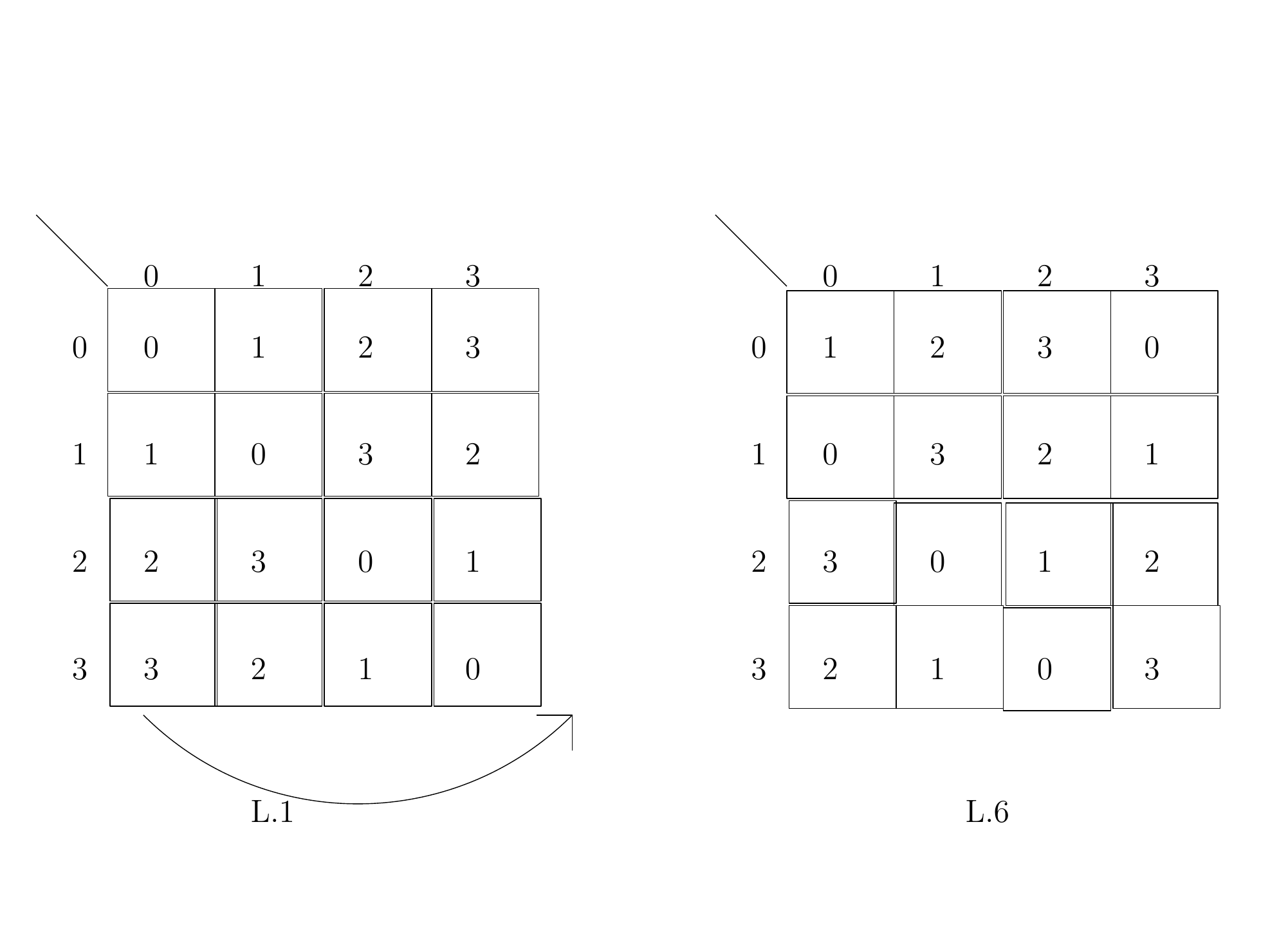}
\vspace{-.8 cm}
\caption{Obtaining $L_{6}$ from $L_{1}$ by column shifting}	
\label{fig:Obtaining L.6 from L.1 by column shifting}	
\end{figure}

Now we have removed four singular fade states till now all of them on the unit circle. Consider next,  $(\gamma=1/\sqrt{2}, \theta=\pi/4).$ The singularity-removal constraints are  
\begin{center}
$\{(0,1)(1,3)\};~~\{(0,2)(3,0)\};~~\{(1,2)(2,0)\};~~\{(2,3)(3,1)\}.$
\end{center}
The corresponding partially filled Latin Square is shown in figure {\ref{parfil1}.

It cannot be completed with $t$=4. That means that the relay has to use a constellation of size more than four. We can see that it can be completed with $t$=5. We get two clusterings as given in figure \ref{L11} and figure \ref{L12}. Both will remove this singular fade state and use constellation of size five. 

The clustering corresponding to Latin Square $L_{11}$, denoted by $\mathcal{C}_2$ is 
\begin{center}

\{(0,0)(2,3)(3,1)\},~~ \{(0,1)(1,3)(2,2)\},~~ \{(0,2)(1,1)(3,0)\},

\{(0,3)(1,0)(2,1)(3,2)\},~~ \{(1,2)(2,0)(3,3)\}.
\end{center}

The clustering corresponding to Latin Square $L_{12}$, denoted by $\mathcal{C}_3$  is 
\begin{center}
\{(0,0)(1,1)(2,2)(3,3)\},~~ \{(0,1)(1,3)(3,2)\}, ~~ \{(0,2)(2,1)(3,0)\},

\{(0,3)(1,2)(2,0)\},~~ \{(1,0)(2,3)(3,1)\}\}.
\end{center}

Now considering the next singular fade state $(\gamma=\sqrt{2}, \theta=\pi/4),$ by the same procedure as before, the singularity-removal constraints are 
\begin{center}
\{(0,1)(2,0)\},~ \{(0,2)(2,3)\},~\{(1,2)(3,1)\},~\{(1,3)(3,0)\}.
\end{center}
The partially filled Latin Square is shown in figure \ref{parfil2}. This cannot be completed with $t=4$, but by $t=5$ it can be completed in two ways as in $L_{13}$ and $L_{14}$ shown in figure \ref{L13} and figure \ref{L14}. The corresponding clusterings are shown in the Table \ref{table1}.
\begin{figure}[h]
\centering
\subfigure[]{
{
\begin{tabular}{|c|c|c|c|}
\hline 0 & 1 & 2 & 3\\ 
\hline 1 & 0 & 3 & 2\\ 
\hline 2 & 3 & 0 & 1\\ 
\hline 3 & 2 & 1 & 0\\ 
\hline 
\end{tabular}}
\label{L1}
}
\subfigure[]{
{\begin{tabular}{|c|c|c|c|}
\hline 0 & 1 & 2 & 3\\ 
\hline 1 & 3 & 0 & 2\\ 
\hline 2 & 0 & 3 & 1\\ 
\hline 3 & 2 & 1 & 0\\ 
\hline
\end{tabular}} 
\label{L2}
}
\subfigure[]{
{
\begin{tabular}{|c|c|c|c|}
\hline 0 & 1 & 2 & 3\\ 
\hline 1 & 0 & 3 & 2\\ 
\hline 2 & 3 & 1 & 0\\ 
\hline 3 & 2 & 0 & 1\\ 
\hline
\end{tabular}}
\label{L3}
}
\caption[]{Completely Filled Latin Squares for $\gamma=1, \theta=0$ }
\end{figure}
\begin{figure}[h]
\centering
\subfigure[]{
{
 \begin{tabular}{|c|c|c|c|}
\hline 0 & 1 & 2 & 3\\ 
\hline 2 & 0 & 3 & 1\\ 
\hline 1 & 3 & 0 & 2\\ 
\hline 3 & 2 & 1 & 0\\ 
\hline 
\end{tabular}}
\label{L4}
}
\subfigure[]{
{\begin{tabular}{|c|c|c|c|}
\hline 0 & 1 & 2 & 3\\ 
\hline 1 & 0 & 3 & 2\\ 
\hline 3 & 2 & 0 & 1\\ 
\hline 2 & 3 & 1 & 0\\ 
\hline
\end{tabular}} 
\label{L5}
}

\caption[]{Completely Filled Latin Squares for $\gamma=1, \theta=\pi$ }
\end{figure}
\begin{figure}[h]
\centering
\subfigure[]{
{
\begin{tabular}{|c|c|c|c|}
\hline 1 & 2 & 3 & 0\\ 
\hline 0 & 3 & 2 & 1\\ 
\hline 3 & 0 & 1 & 2\\ 
\hline 2 & 1 & 0 & 3\\ 
\hline 
\end{tabular}}
\label{L6}
}
\subfigure[]{
{\begin{tabular}{|c|c|c|c|}
\hline 1 & 2 & 3 & 0\\ 
\hline 3 & 0 & 2 & 1\\ 
\hline 0 & 3 & 1 & 2\\ 
\hline 2 & 1 & 0 & 3\\ 
\hline
\end{tabular}} 
\label{L7}
}
\subfigure[]{
{
\begin{tabular}{|c|c|c|c|}
\hline 1 & 2 & 3 & 0\\ 
\hline 0 & 3 & 2 & 1\\ 
\hline 3 & 1 & 0 & 2\\ 
\hline 2 & 0 & 1 & 3\\ 
\hline
\end{tabular}}
\label{L8}
}
\caption[]{Completely Filled Latin Squares for $\gamma=1, \theta=\pi/2$ }
\end{figure}
\begin{figure}[h]
\centering
\subfigure[]{
{
 \begin{tabular}{|c|c|c|c|}
\hline 1 & 2 & 3 & 0\\ 
\hline 0 & 3 & 1 & 2\\ 
\hline 3 & 0 & 2 & 1\\ 
\hline 2 & 1 & 0 & 3\\ 
\hline 
\end{tabular}}
\label{L9}
}
\subfigure[]{
{\begin{tabular}{|c|c|c|c|}
\hline 1 & 2 & 3 & 0\\ 
\hline 0 & 3 & 2 & 1\\ 
\hline 2 & 0 & 1 & 3\\ 
\hline 3 & 1 & 0 & 2\\ 
\hline
\end{tabular}} 
\label{L10}
}

\caption[]{Completely Filled Latin Squares for $\gamma=1, \theta=-\pi/2$ }
\end{figure}
\begin{figure}[h]
\centering
\subfigure[]{
{
 \begin{tabular}{|c|c|c|c|}
\hline   & 0 & 1 &  \\ 
\hline   &   & 2 & 0\\ 
\hline 2 &   &   & 3\\ 
\hline 1 & 3 &   &  \\ 
\hline 
\end{tabular}}
\label{parfil1}
}
\subfigure[]{
{\begin{tabular}{|c|c|c|c|}
\hline   & 0 & 1 &  \\ 
\hline   &   & 2 & 3\\ 
\hline 0 &   &   & 1\\ 
\hline 3 & 2 &   &  \\ 
\hline 
\end{tabular} } 
\label{parfil2}
}

\caption[]{Partially Filled Latin Squares for $\gamma=1/\sqrt{2}, \theta=\pi/4$ and  for $\gamma=\sqrt{2}, \theta=\pi/4$ }
\end{figure}
\begin{figure}[h]
\centering
\subfigure[]{
{
\begin{tabular}{|c|c|c|c|}
\hline 3 & 0 & 1 & 4\\ 
\hline 4 & 1 & 2 & 0\\ 
\hline 2 & 4 & 0 & 3\\ 
\hline 1 & 3 & 4 & 2\\ 
\hline 
\end{tabular}}
\label{L11}
}
\subfigure[]{
{\begin{tabular}{|c|c|c|c|}
\hline 4 & 0 & 1 & 2\\ 
\hline 3 & 4 & 2 & 0\\ 
\hline 2 & 1 & 4 & 3\\ 
\hline 1 & 3 & 0 & 4\\ 
\hline
\end{tabular} } 
\label{L12}
}

\caption[]{$L_{11}$ for $\gamma=1/\sqrt{2}, \theta=\pi/4$ and $\gamma=\sqrt{2}, \theta=3\pi/4$ and $L_{12}$ for $\gamma=1/\sqrt{2}, \theta=\pi/4$ and $\gamma=\sqrt{2}, \theta=-\pi/4$}
\end{figure}
\begin{figure}[h]
\centering
\subfigure[]{
{
\begin{tabular}{|c|c|c|c|}
\hline 4 & 0 & 1 & 2\\ 
\hline 1 & 4 & 2 & 3\\ 
\hline 0 & 3 & 4 & 1\\ 
\hline 3 & 2 & 0 & 4\\ 
\hline 
\end{tabular}}
\label{L13}
}
\subfigure[]{
{
\begin{tabular}{|c|c|c|c|}
\hline 2 & 0 & 1 & 4\\ 
\hline 4 & 1 & 2 & 3\\ 
\hline 0 & 4 & 3 & 1\\ 
\hline 3 & 2 & 4 & 0\\ 
\hline
\end{tabular}}
\label{L14}
}
\caption[]{$L_{13}$ for $\gamma=\sqrt{2}, \theta=\pi/4$ and $\gamma=1/\sqrt{2}, \theta=-\pi/4$  and $L_{14}$ for $\gamma=\sqrt{2}, \theta=\pi/4$ and $\gamma=1/\sqrt{2}, \theta=3\pi/4$}
\end{figure}
\begin{figure}[h]
\centering
\subfigure[]{
{
\begin{tabular}{|c|c|c|c|}
\hline 0 & 1 & 2 & 3\\ 
\hline 1 & 4 & 0 & 2\\ 
\hline 2 & 0 & 3 & 4\\ 
\hline 3 & 2 & 4 & 1\\ 
\hline 
\end{tabular} }
\label{L15}
}
\subfigure[]{
{
\begin{tabular}{|c|c|c|c|}
\hline 0 & 1 & 2 & 3\\ 
\hline 1 & 3 & 4 & 2\\ 
\hline 2 & 4 & 1 & 0\\ 
\hline 3 & 2 & 0 & 4\\ 
\hline
\end{tabular}}
\label{L16}
}
\caption[]{$L_{15}$ for $\gamma=\sqrt{2}, \theta=-3\pi/4$ and $\gamma=1/\sqrt{2}, \theta=3\pi/4$  and $L_{16}$ for $\gamma=\sqrt{2}, \theta=3\pi/4$ and $\gamma=1/\sqrt{2}, \theta=-3\pi/4$}
\end{figure}
\begin{figure}[h]
\centering
\subfigure[]{
{
 \begin{tabular}{|c|c|c|c|}
\hline 0 & 1 & 2 & 3\\ 
\hline 4 & 3 & 1 & 0\\ 
\hline 3 & 2 & 4 & 1\\ 
\hline 1 & 4 & 0 & 2\\ 
\hline 
\end{tabular}  }
\label{L17}
}
\subfigure[]{
{
\begin{tabular}{|c|c|c|c|}
\hline 0 & 1 & 2 & 3\\ 
\hline 2 & 4 & 1 & 0\\ 
\hline 4 & 0 & 3 & 1\\ 
\hline 1 & 3 & 4 & 2\\ 
\hline
\end{tabular}}
\label{L18}
}
\caption[]{$L_{17}$ for $\gamma=\sqrt{2}, \theta=-\pi/4$ and $\gamma=1/\sqrt{2}, \theta=-3\pi/4$ and $L_{18}$ for $\gamma=\sqrt{2}, \theta=-3\pi/4$ and $\gamma=1/\sqrt{2}, \theta=-\pi/4$}
\end{figure}
\begin{table*}
\centering
 \caption{Clusterings Obtained for different singular fade states when the end nodes use QPSK constellations}
 \label{table1}
 \begin{tabular}{|c|c|c|c|}

\hline Sl.No & Singular fade states & Clustering & Cluster\\ 
\hline 1& $\gamma=1,\theta=0$                                           & $\mathcal{C}_0$ &$\{$\{(0,1)(1,0)(2,3)(3,2)\}$,$\{(0,2)(1,3)(2,0)(3,1)\}$,$\{(0,3)(3,0)(1,2)(2,1)\}$,$\{(0,0)(1,1)(2,2)(3,3)\}$\}$\\ 
\hline 2&$\gamma=1,\theta=\pi/2$                                           & $\mathcal{C}_1$ &$\{$\{(0,0)(1,3)(2,2)(3,1)\}$,$\{(0,1)(1,2)(2,3)(3,0)\}$,$\{(0,2)(3,3)(1,1)(2,0)$\},$\{(0,3)(1,0)(2,1)(3,2)$\}\}$\\ 
\hline 3&$\gamma=1,\theta=\pi$                                           & $\mathcal{C}_0$ &$\{$\{(0,1)(1,0)(2,3)(3,2)\}$,$\{(0,2)(1,3)(2,0)(3,1)\}$,$\{(0,3)(3,0)(1,2)(2,1)\}$,$\{(0,0)(1,1)(2,2)(3,3)\}$\}$\\ 
\hline 4&$\gamma=1,\theta=-\pi/2$                                           & $\mathcal{C}_1$ &$\{$\{(0,0)(1,3)(2,2)(3,1)\}$,$\{(0,1)(1,2)(2,3)(3,0)\}$,$\{(0,2)(3,3)(1,1)(2,0)$\},$\{(0,3)(1,0)(2,1)(3,2)$\}\}$\\ 
\hline 5.a&$\gamma=1/\sqrt{2},\theta=\pi/4$                                           & $\mathcal{C}_2$ &$\{$\{(0,0)(2,3)(3,1)\}$,$\{(0,1)(1,3)(2,2)\}$,$\{(0,2)(1,1)(3,0)\}$,$\{(0,3)(1,0)(2,1)(3,2)$\},$\{(1,2)(2,0)(3,3)$\}\}$\\ 
 5.b&                                           & $\mathcal{C}_3$ &$\{$\{(0,0)(1,1)(2,2)(3,3)\}$,$\{(0,1)(1,3)(3,2)\}$,$\{(0,2)(2,1)(3,0)\}$,$\{(0,3)(1,2)(2,0)$\},$\{(1,0)(2,3)(3,1)$\}\}$\\ 
\hline 6.a&$\gamma=\sqrt{2},\theta=\pi/4$                                           & $\mathcal{C}_4$ &$\{$\{(0,0)(1,1)(2,2)(3,3)\}$,$\{(0,1)(2,0)(3,2)\}$,$\{(0,2)(1,0)(2,3)\}$,$\{(0,3)(1,2)(3,1)$\},$\{(1,3)(2,1)(3,0)$\}\}$\\ 
 6.b&                                          & $\mathcal{C}_5$ &$\{$\{(0,0)(1,2)(3,1)\}$,$\{(0,1)(2,0)(3,3)\}$,$\{(0,2)(1,1)(2,3)\}$,$\{(0,3)(1,0)(2,1)(3,2)$\},$\{(1,3)(2,2)(3,0)$\}\}$\\ 
\hline 7.a&$\gamma=1/\sqrt{2},\theta=3\pi/4$                                           & $\mathcal{C}_6$ &$\{$\{(0,0)(1,2)(2,1)\}$,$\{(0,1)(1,0)(3,3)\}$,$\{(0,2)(1,3)(2,0)(3,1)\}$,$\{(0,3)(2,2)(3,0)$\},$\{(1,1)(2,3)(3,2)$\}\}$\\ 
 7.b&                                          & $\mathcal{C}_5$ &$\{$\{(0,0)(1,2)(3,1)\}$,$\{(0,1)(2,0)(3,3)\}$,$\{(0,2)(1,1)(2,3)\}$,$\{(0,3)(1,0)(2,1)(3,2)$\},$\{(1,3)(2,2)(3,0)$\}\}$\\ 
\hline 8.a&$\gamma=\sqrt{2},\theta=3\pi/4$                                           & $\mathcal{C}_2$ &$\{$\{(0,0)(2,3)(3,1)\}$,$\{(0,1)(1,3)(2,2)\}$,$\{(0,2)(1,1)(3,0)\}$,$\{(0,3)(1,0)(2,1)(3,2)$\},$\{(1,2)(2,0)(3,3)$\}\}$\\ 
 8.b&                                           & $\mathcal{C}_7$ &$\{$\{(0,0)(2,3)(3,2)\}$,$\{(0,1)(1,0)(2,2)\}$,$\{(0,2)(1,3)(2,0)(3,1)\}$,$\{(0,3)(1,1)(3,0)$\},$\{(1,2)(2,1)(3,3)$\}\}$\\ 
\hline 9.a&$\gamma=1/\sqrt{2},\theta=-3\pi/4$                                           & $\mathcal{C}_7$ &$\{$\{(0,0)(2,3)(3,2)\}$,$\{(0,1)(1,0)(2,2)\}$,$\{(0,2)(1,3)(2,0)(3,1)\}$,$\{(0,3)(1,1)(3,0)$\},$\{(1,2)(2,1)(3,3)$\}\}$\\
 9.b &                                          & $\mathcal{C}_8$ &$\{$\{(0,0)(1,3)(3,2)\}$,$\{(0,1)(1,2)(2,3)(3,0)\}$,$\{(0,2)(2,1)(3,3)\}$,$\{(0,3)(1,1)(2,0)$\},$\{(1,0)(2,2)(3,1)$\}\}$\\
\hline 10.a&$\gamma=\sqrt{2},\theta=-3\pi/4$                                           & $\mathcal{C}_6$ &$\{$\{(0,0)(1,2)(2,1)\}$,$\{(0,1)(1,0)(3,3)\}$,$\{(0,2)(1,3)(2,0)(3,1)\}$,$\{(0,3)(2,2)(3,0)$\},$\{(1,1)(2,3)(3,2)$\}\}$\\ 
 10.b&                                         & $\mathcal{C}_9$ &$\{$\{(0,0)(1,3)(2,1)\}$,$\{(0,1)(1,2)(2,3)(3,0)\}$,$\{(0,2)(1,0)(3,3)\}$,$\{(0,3)(2,2)(3,1)$\},$\{(1,1)(2,0)(3,2)$\}\}$\\ 
\hline 11.a&$\gamma=1/\sqrt{2},\theta=-\pi/4$                                           & $\mathcal{C}_4$ &$\{$\{(0,0)(1,1)(2,2)(3,3)\}$,$\{(0,1)(2,0)(3,2)\}$,$\{(0,2)(1,0)(2,3)\}$,$\{(0,3)(1,2)(3,1)$\},$\{(1,3)(2,1)(3,0)$\}\}$\\ 
 11.b&                                         & $\mathcal{C}_9$ &$\{$\{(0,0)(1,3)(2,1)\}$,$\{(0,1)(1,2)(2,3)(3,0)\}$,$\{(0,2)(1,0)(3,3)\}$,$\{(0,3)(2,2)(3,1)$\},$\{(1,1)(2,0)(3,2)$\}\}$\\
\hline 12.a&$\gamma=\sqrt{2},\theta=-\pi/4$                                           & $\mathcal{C}_3$ &$\{$\{(0,0)(1,1)(2,2)(3,3)\}$,$\{(0,1)(1,3)(3,2)\}$,$\{(0,2)(2,1)(3,0)\}$,$\{(0,3)(1,2)(2,0)$\},$\{(1,0)(2,3)(3,1)$\}\}$\\  
 12.b&                                         & $\mathcal{C}_8$ &$\{$\{(0,0)(1,3)(3,2)\}$,$\{(0,1)(1,2)(2,3)(3,0)\}$,$\{(0,2)(2,1)(3,3)\}$,$\{(0,3)(1,1)(2,0)$\},$\{(1,0)(2,2)(3,1)$\}\}$\\
\hline 
\end{tabular} 
\vspace{-.1 cm}
\end{table*}


Next, consider the singular fade state $(\gamma=1/\sqrt{2}, \theta=3\pi/4).$ The singularity-removal constraints are 
\begin{center}
\{(0,0)(1,2)\},~\{(0,1)(3,3)\},~\{(1,1)(2,3)\},~\{(2,2)(3,0)\}.
\end{center}
The corresponding Latin Squares are shown in figure \ref{L14} and figure \ref{L15}. 

For the singular fade state $(\gamma=\sqrt{2}, \theta=3\pi/4)$, the singularity-removal constraints are 
\begin{center}
\{(0,0)(2,3)\},~\{(0,1)(2,2)\},~\{(1,1)(3,0)\},~\{(1,2)(3,3)\}
\end{center}
and the corresponding Latin Squares are shown in figure \ref{L11} and figure \ref{L16} .

Similarly, for the singular fade state $(\gamma=1/\sqrt{2}, \theta=-3\pi/4),$ the singularity-removal constraints are 
\begin{center}
\{(0,0)(3,2)\},~\{(0,3)(1,1)\},~\{(1,0)(2,2)\},~\{(2,1)(3,3)\}
\end{center}
with the  corresponding  Latin Squares as shown in  figure \ref{L16} and figure \ref{L17}.

The singularity-removal constraints for singular fade state $(\gamma=\sqrt{2}, \theta=-3\pi/4)$ are 
\begin{center}
\{(0,0)(2,1)\},~\{(0,3)(2,2)\},~\{(1,0)(3,3)\},~\{(1,1)(3,2)\}.
\end{center}
 and the Latin Squares are given in figure \ref{L15} and  figure \ref{L18}. The singularity-removal constraints for singular fade state $(\gamma=1/\sqrt{2}, \theta=-\pi/4)$ are 
\begin{center} 
\{(0,2)(1,0)\},~\{(0,3)(3,1)\},~\{(1,3)(2,1)\},~\{(2,0)(3,2)\}.
\end{center}
The Latin Squares are given in figure.\ref{L13} and  figure.\ref{L18}. The singularity-removal constraints for singular fade state $\gamma=\sqrt{2}$ and $\theta=-\pi/4$ are 
\begin{center}
\{(0,2)(2,1)\},~\{(0,3)(2,0)\},~\{(1,0)(3,1)\},~\{(1,3)(3,2)\}.
\end{center}
The Latin Squares are given in figure \ref{L12} and  figure \ref{L17} .

It is observed that to remove all other singular fade states not lying on unit circle the relay needs a constellation of size five. Table \ref{table1} shows the singular fade states and the corresponding clusterings. There are two clusterings to remove a singular fade state for all singular fade states except for those with  $\gamma=1.$  We can select any one. Anyone from the two $\{\mathcal{C}_2,\mathcal{C}_3 \}$ can be selected to remove singular fade state $(\gamma=1/\sqrt{2}, \theta=\pi/4).$ After that, by column permutations we can remove the singular fade states with $(\gamma=1/\sqrt{2},$ and $\theta=+3\pi/4,-\pi/4,-3\pi/4$. By taking transpose of the Latin Square for $(\gamma=1/\sqrt{2}, \theta=\pi/4)$ we can remove singular fade state $(\gamma=\sqrt{2}, \theta=-\pi/4).$  After that by column permutations we can remove the singular fade states with $\gamma=\sqrt{2}$ and $\theta=+3\pi/4,+\pi/4,-3\pi/4$. If we select $\mathcal{C}_2$ to remove $(\gamma=1/\sqrt2, \theta=\pi/4),$ we will get the following set of clusterings $\{\mathcal{C}_0,\mathcal{C}_1,\mathcal{C}_2,\mathcal{C}_4,\mathcal{C}_6,\mathcal{C}_8\}$ to remove all the singular fade states. In the other case, when we select $\mathcal{C}_3$ to remove $(\gamma=1/\sqrt{2},\theta=\pi/4)$ we will get the following set of clusterings $\{\mathcal{C}_0,\mathcal{C}_1,\mathcal{C}_3,\mathcal{C}_5,\mathcal{C}_7,\mathcal{C}_9\}$ to remove all the singular fade states.

\subsection{End nodes use 8PSK}
Consider the scenario where the end nodes use 8-PSK constellation as shown in Fig.\ref{8psk}.

\begin{figure}[htbp]
\centering
\includegraphics[totalheight=1.5in,width=1.5in]{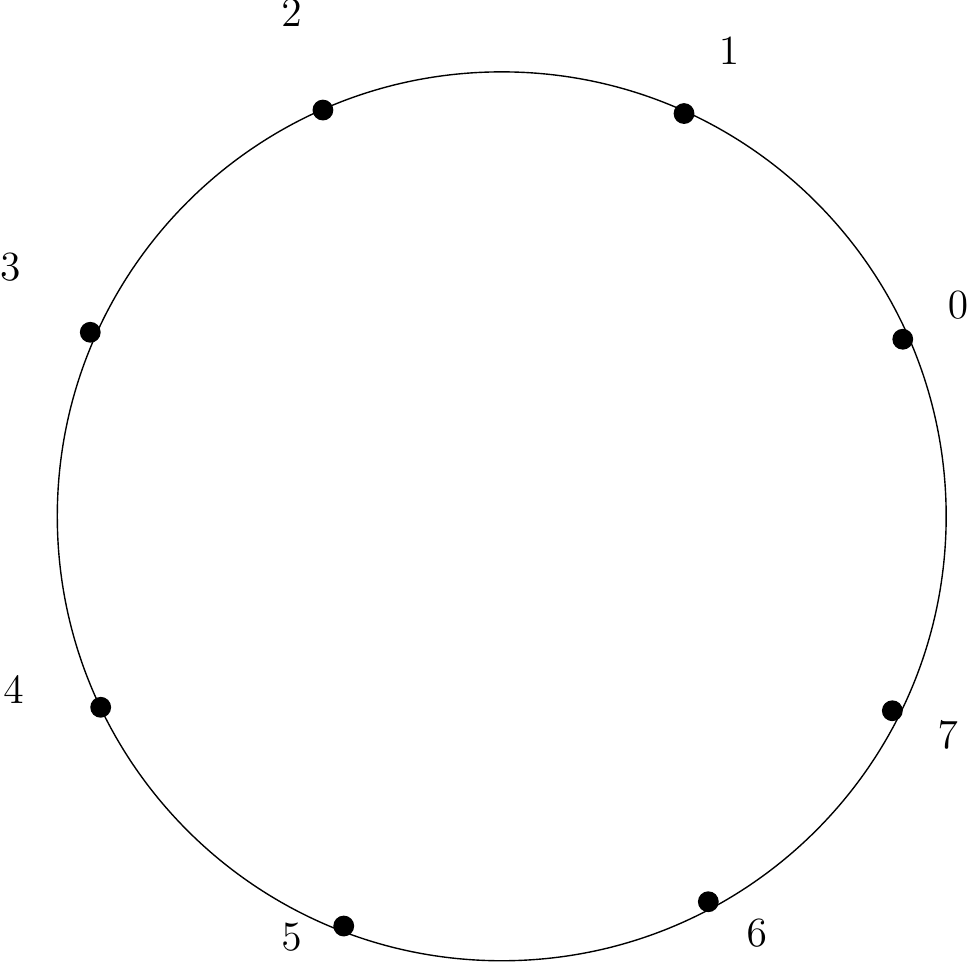}
\caption{8PSK constellation used by nodes A and B}	
\label{8psk}	
\end{figure}

For 8-PSK signal set, the 104 singular fade states are shown in Fig.\ref{8psk_sing}. These singular fade states lie in 13 circles, 8 in each circle separated by an angle $\pi/4$. It is sufficient to get Latin Squares for these 13 values, then by isotopic property of the Latin Squares all other singular fade states are removed. Further by taking transposes of the Latin Square for a singular fade state $\gamma, \theta$, another singular fade state $1/\gamma, -\theta$ is removed. As discussed XOR removes singular fade state $\gamma=1, \theta=0$. Finally it is seen that obtaining six Latin Squares for different singular fade state values for $\gamma < 1$ (or $\gamma > 1$) and the XOR is enough to remove all the 104 singular fade states. The 104 singular fade states are,\\
for $\theta=2m\pi/8$ , where $m=0,1,2\cdots7$\\\\
$\gamma=\dfrac{\sin\pi/8}{\sin\pi/8},\dfrac{\sin\pi/8}{\sin3\pi/8}, \dfrac{\sin3\pi/8}{\sin\pi/8}, \dfrac{\sin4\pi/8}{\sin2\pi/8}$ and $\dfrac{\sin2\pi/8}{\sin4\pi/8}$.\\\\
for $\theta=(2m+1)\pi/8$ , where $m=0,1,2\cdots7$\\\\
$\gamma=\dfrac{\sin\pi/8}{\sin4\pi/8},\dfrac{\sin\pi/8}{\sin2\pi/8}, \dfrac{\sin2\pi/8}{\sin3\pi/8}, \dfrac{\sin3\pi/8}{\sin4\pi/8},\dfrac{\sin4\pi/8}{\sin3\pi/8},$\\
\hspace*{.75cm}$ \dfrac{\sin3\pi/8}{\sin2\pi/8}, \dfrac{\sin2\pi/8}{\sin\pi/8}$ and $\dfrac{\sin4\pi/8}{\sin\pi/8}$.

%
%

We consider singular fade states one by one. The singularity constraints for $\gamma=1, \theta=0$ are 
\begin{center}
$\{(0,1)(1,0)\}$, $\{(0,2)(2,0)\}$, $\{(0,3)(3,0)\}$, $\{(0,5)(5,0)\}$\\*
$\{(0,6)(6,0)\}$, $\{(0,7)(7,0)\}$, $\{(1,2)(2,1)\}$, $\{(1,3)(3,1)\}$\\*
$\{(1,4)(4,1)\}$, $\{(1,6)(6,1)\}$, $\{(1,7)(7,1)\}$, $\{(2,3)(3,2)\}$\\*
$\{(2,4)(4,2)\}$, $\{(2,6)(6,2)\}$, $\{(2,7)(7,2)\}$, $\{(3,4)(4,3)\}$\\*
$\{(3,5)(5,3)\}$, $\{(3,6)(6,3)\}$, $\{(4,5)(5,4)\}$, $\{(4,6)(6,4)\}$\\*
$\{(4,7)(7,4)\}$, $\{(5,6)(6,5)\}$, $\{(5,7)(7,5)\}$, $\{(6,7)(7,6)\}$\\*
$\{(0,4)(1,5)(2,6)(3,7)(4,0)(5,1)(6,2)(7,3)\}$\\*
\end{center}

XOR can remove this singular fade state. Further by appropriately column shifting all the singular fade states with $\gamma=1$ is removed. For example consider $\gamma=1, \theta=\pi/4$. The singularity constraints are 
\begin{center}
$\{(0,0)(1,7)\}$, $\{(0,1)(2,7)\}$, $\{(0,2)(3,7)\}$, $\{(0,4)(5,7)\}$\\*
$\{(0,5)(6,7)\}$, $\{(0,6)(7,7)\}$, $\{(1,1)(2,0)\}$, $\{(1,2)(3,0)\}$\\*
$\{(1,3)(4,0)\}$, $\{(1,5)(6,0)\}$, $\{(1,6)(7,0)\}$, $\{(2,2)(3,1)\}$\\*
$\{(2,3)(4,1)\}$, $\{(2,5)(6,1)\}$, $\{(2,6)(7,1)\}$, $\{(3,3)(4,2)\}$\\*
$\{(3,4)(5,2)\}$, $\{(3,5)(6,2)\}$, $\{(4,4)(5,3)\}$, $\{(4,5)(6,3)\}$\\*
$\{(4,6)(7,3)\}$, $\{(5,5)(6,4)\}$, $\{(5,6)(7,4)\}$, $\{(6,6)(7,5)\}$\\*
$\{(0,3)(1,4)(2,5)(3,6)(4,7)(5,0)(6,1)(7,2)\}$\\*
\end{center}

Next consider $\gamma=\dfrac{\sin \pi/8}{\sin 3\pi/8} =0.414, \theta=0$. The singularity constraints are 

\begin{center}
$\{(0,2)(1,7)\}$, $\{(0,3)(1,6)\}$, $\{(0,5)(7,2)\}$, $\{(0,6)(7,1)\}$\\*
$\{(1,3)(2,0)\}$, $\{(1,4)(2,7)\}$, $\{(2,4)(3,1)\}$, $\{(2,5)(3,0)\}$\\*
$\{(3,5)(4,2)\}$, $\{(3,6)(4,1)\}$, $\{(4,6)(5,3)\}$, $\{(4,7)(5,2)\}$\\*
$\{(5,0)(6,3)\}$, $\{(5,7)(6,4)\}$, $\{(6,0)(7,5)\}$, $\{(6,1)(7,4)\}$\\*
\end{center}

\begin{figure}[h]
\centering
\subfigure[$\gamma=1, \theta=0$]{
{
\begin{tabular}{|c|c|c|c|c|c|c|c|}
\hline 0 & 1 & 2 & 3 & 4 & 5 & 6 & 7 \\ 
\hline 1 & 0 & 3 & 2 & 5 & 4 & 7 & 6 \\
\hline 2 & 3 & 0 & 1 & 6 & 7 & 4 & 5 \\  
\hline 3 & 2 & 1 & 0 & 7 & 6 & 5 & 4 \\ 
\hline 4 & 5 & 6 & 7 & 0 & 1 & 2 & 3 \\ 
\hline 5 & 4 & 7 & 6 & 1 & 0 & 3 & 2 \\
\hline 6 & 7 & 4 & 5 & 2 & 3 & 0 & 1 \\ 
\hline 7 & 6 & 5 & 4 & 3 & 2 & 1 & 0 \\
\hline 
\end{tabular}}
\label{8L1}
}
\subfigure[$\gamma=1, \theta=\pi/4$]{
{
\begin{tabular}{|c|c|c|c|c|c|c|c|}
\hline 1 & 2 & 3 & 4 & 5 & 6 & 7 & 0 \\ 
\hline 0 & 3 & 2 & 5 & 4 & 7 & 6 & 1 \\
\hline 3 & 0 & 1 & 6 & 7 & 4 & 5 & 2 \\  
\hline 2 & 1 & 0 & 7 & 6 & 5 & 4 & 3\\ 
\hline 5 & 6 & 7 & 0 & 1 & 2 & 3 & 4\\ 
\hline 4 & 7 & 6 & 1 & 0 & 3 & 2 & 5\\
\hline 7 & 4 & 5 & 2 & 3 & 0 & 1 & 6\\ 
\hline 6 & 5 & 4 & 3 & 2 & 1 & 0 & 7\\
\hline 
\end{tabular}}
\label{8L2}
}
\subfigure[$\gamma=\dfrac{\sin \pi/8}{\sin 3\pi/8}, \theta=0$]{
{
\begin{tabular}{|c|c|c|c|c|c|c|c|}
\hline 0 & 1 & 2 & 3 & 4 & 5 & 6 & 7 \\ 
\hline 5 & 4 & 7 & 6 & 1 & 0 & 3 & 2\\
\hline 6 & 3 & 0 & 5 & 2 & 7 & 4 & 1 \\  
\hline 7 & 2 & 1 & 4 & 3 & 6 & 5 & 0 \\ 
\hline 4 & 5 & 6 & 7 & 0 & 1 & 2 & 3 \\ 
\hline 1 & 0 & 3 & 2 & 5 & 4 & 7 & 6 \\
\hline 2 & 7 & 4 & 1 & 6 & 3 & 0 & 5 \\ 
\hline 3 & 6 & 5 & 0 & 7 & 2 & 1 & 4 \\
\hline 
\end{tabular}}
\label{8L3}
}
\subfigure[$\gamma=\dfrac{\sin \pi/8}{\sin 3\pi/8}, \theta=0$]{
{
\begin{tabular}{|c|c|c|c|c|c|c|c|}
\hline 1 & 2 & 3 & 4 & 5 & 6 & 7 & 0\\ 
\hline 4 & 7 & 6 & 1 & 0 & 3 & 2 & 5\\
\hline 3 & 0 & 5 & 2 & 7 & 4 & 1 & 6\\  
\hline 2 & 1 & 4 & 3 & 6 & 5 & 0 & 7\\ 
\hline 5 & 6 & 7 & 0 & 1 & 2 & 3 & 4\\ 
\hline 0 & 3 & 2 & 5 & 4 & 7 & 6 & 1\\
\hline 7 & 4 & 1 & 6 & 3 & 0 & 5 & 2\\ 
\hline 6 & 5 & 0 & 7 & 2 & 1 & 4 & 3\\
\hline 
\end{tabular}}
\label{8L4}
} 
\caption[]{Latin Squares Corresponding to Different singular fade states }
\end{figure}
\begin{figure}[h]
\centering
\subfigure[$\gamma=\dfrac{\sin 3\pi/8}{\sin \pi/8}, \theta=-\pi/4$]{
{
\begin{tabular}{|c|c|c|c|c|c|c|c|}
\hline 1 & 4 & 3 & 2 & 5 & 0 & 7 & 6 \\ 
\hline 2 & 7 & 0 & 1 & 6 & 3 & 4 & 5\\
\hline 3 & 6 & 5 & 4 & 7 & 2 & 1 & 0 \\  
\hline 4 & 1 & 2 & 3 & 0 & 5 & 6 & 7 \\ 
\hline 5 & 0 & 7 & 6 & 1 & 4 & 3 & 2 \\ 
\hline 6 & 3 & 4 & 5 & 2 & 7 & 0 & 1 \\
\hline 7 & 2 & 1 & 0 & 3 & 6 & 5 & 4 \\ 
\hline 0 & 5 & 6 & 7 & 4 & 1 & 2 & 3 \\
\hline 
\end{tabular}}
\label{8L5}
}
\subfigure[$\gamma=\dfrac{\sin 2\pi/8}{\sin 4\pi/8}, \theta=0$]{
{
\begin{tabular}{|c|c|c|c|c|c|c|c|}
\hline 4 & 0 & 6 & 3 & 1 & 5 & 7 & 2\\ 
\hline 2 & 7 & 0 & 1 & 3 & 6 & 5 & 4\\
\hline 0 & 4 & 7 & 2 & 5 & 1 & 6 & 3\\  
\hline 3 & 6 & 4 & 0 & 2 & 7 & 1 & 5\\ 
\hline 5 & 1 & 2 & 6 & 0 & 4 & 3 & 7\\ 
\hline 7 & 2 & 1 & 5 & 6 & 3 & 4 & 0\\
\hline 1 & 5 & 3 & 7 & 4 & 0 & 2 & 6\\ 
\hline 6 & 3 & 5 & 4 & 7 & 2 & 0 & 1\\
\hline 
\end{tabular} }
\label{8L6}
}
\subfigure[$\gamma=\dfrac{\sin 2\pi/8}{\sin4\pi/8}, \theta=\pi/2$]{
{
\begin{tabular}{|c|c|c|c|c|c|c|c|}
\hline 6 & 3 & 1 & 5 & 7 & 2 & 4 & 0\\ 
\hline 0 & 1 & 3 & 6 & 5 & 4 & 2 & 7\\
\hline 7 & 2 & 5 & 1 & 6 & 3 & 0 & 4\\  
\hline 4 & 0 & 2 & 7 & 1 & 5 & 3 & 6\\ 
\hline 2 & 6 & 0 & 4 & 3 & 7 & 5 & 1\\ 
\hline 1 & 5 & 6 & 3 & 4 & 0 & 7 & 2\\
\hline 3 & 7 & 4 & 0 & 2 & 6 & 1 & 5\\ 
\hline 5 & 4 & 7 & 2 & 0 & 1 & 6 & 3\\
\hline 
\end{tabular}}
\label{8L7}
}
\subfigure[$\gamma=\dfrac{\sin 4\pi/8}{\sin 2\pi/8}, \theta=-\pi/2$]{
{
\begin{tabular}{|c|c|c|c|c|c|c|c|}
\hline 6 & 0 & 7 & 4 & 2 & 1 & 3 & 5\\ 
\hline 3 & 1 & 2 & 0 & 6 & 5 & 7 & 4\\
\hline 1 & 3 & 5 & 2 & 0 & 6 & 4 & 7\\  
\hline 5 & 6 & 1 & 7 & 4 & 3 & 0 & 2\\ 
\hline 7 & 5 & 6 & 1 & 3 & 4 & 2 & 0\\ 
\hline 2 & 4 & 3 & 5 & 7 & 0 & 6 & 1\\
\hline 4 & 2 & 0 & 3 & 5 & 7 & 1 & 6\\ 
\hline 0 & 7 & 4 & 6 & 1 & 2 & 5 & 3\\
\hline 
\end{tabular} }
\label{8L8}
} 
\caption[]{Latin Squares Corresponding to Different singular fade states }
\end{figure}
\begin{figure}[h]
\centering
\subfigure[$\gamma=\dfrac{\sin \pi/8}{\sin 4\pi/8}, \theta=\pi/8$]{
{
\begin{tabular}{|c|c|c|c|c|c|c|c|}
\hline 4 & 3 & 0 & 6 & 2 & 1 & 7 & 5 \\ 
\hline 1 & 5 & 3 & 2 & 4 & 7 & 0 & 6 \\
\hline 5 & 6 & 7 & 1 & 3 & 0 & 4 & 2 \\  
\hline 3 & 2 & 6 & 5 & 0 & 4 & 1 & 7 \\ 
\hline 0 & 4 & 2 & 7 & 6 & 3 & 5 & 1 \\ 
\hline 6 & 7 & 5 & 0 & 1 & 2 & 3 & 4 \\
\hline 7 & 0 & 1 & 4 & 5 & 6 & 2 & 3 \\ 
\hline 2 & 1 & 4 & 3 & 7 & 5 & 6 & 0 \\
\hline 
\end{tabular}}
\label{8L9}
}
\subfigure[$\gamma=\dfrac{\sin \pi/8}{\sin 2\pi/8}, \theta=\pi/8$]{
{
\begin{tabular}{|c|c|c|c|c|c|c|c|}
\hline 6 & 0 & 7 & 1 & 2 & 4 & 3 & 5\\ 
\hline 7 & 3 & 4 & 2 & 5 & 1 & 6 & 0\\
\hline 4 & 2 & 0 & 7 & 1 & 6 & 5 & 3\\  
\hline 0 & 7 & 1 & 4 & 3 & 5 & 2 & 6\\ 
\hline 2 & 4 & 3 & 5 & 6 & 0 & 7 & 1\\ 
\hline 5 & 1 & 6 & 0 & 7 & 3 & 4 & 2\\
\hline 1 & 6 & 5 & 3 & 4 & 2 & 0 & 7\\ 
\hline 3 & 5 & 2 & 6 & 0 & 7 & 1 & 4\\
\hline 
\end{tabular}  }
\label{8L10}
}
\subfigure[$\gamma=\dfrac{\sin 2\pi/8}{\sin 3\pi/8}, \theta=\pi/8$]{
{
 \begin{tabular}{|c|c|c|c|c|c|c|c|}
\hline 7 & 5 & 0 & 1 & 2 & 3 & 4 & 6 \\ 
\hline 0 & 1 & 3 & 4 & 5 & 6 & 7 & 2 \\
\hline 6 & 4 & 5 & 7 & 3 & 2 & 1 & 0 \\  
\hline 4 & 0 & 2 & 3 & 1 & 7 & 6 & 5 \\ 
\hline 2 & 3 & 4 & 6 & 7 & 5 & 0 & 1 \\ 
\hline 5 & 6 & 7 & 2 & 0 & 1 & 3 & 4 \\
\hline 3 & 2 & 1 & 0 & 6 & 4 & 5 & 7 \\ 
\hline 1 & 7 & 6 & 5 & 4 & 0 & 2 & 3 \\
\hline 
\end{tabular} }
\label{8L11}
}
\subfigure[$\gamma=\dfrac{\sin 3\pi/8}{\sin 4\pi/8}, \theta=\pi/8$]{
{
\begin{tabular}{|c|c|c|c|c|c|c|c|}
\hline 0 & 1 & 2 & 3 & 6 & 5 & 4 & 7\\ 
\hline 6 & 5 & 0 & 7 & 3 & 4 & 2 & 1\\
\hline 5 & 6 & 7 & 0 & 4 & 3 & 1 & 2\\  
\hline 1 & 7 & 3 & 6 & 0 & 2 & 5 & 4\\ 
\hline 3 & 0 & 1 & 4 & 2 & 7 & 6 & 5\\ 
\hline 4 & 3 & 6 & 2 & 5 & 1 & 7 & 0\\
\hline 2 & 4 & 5 & 1 & 7 & 0 & 3 & 6\\ 
\hline 7 & 2 & 4 & 5 & 1 & 6 & 0 & 3\\
\hline 
\end{tabular} }
\label{8L12}
} 
\caption[]{Latin Squares Corresponding to Different singular fade states }
\end{figure}

\begin{table*}
\centering
 \caption{Clusterings Obtained for different singular fade states on each circle (for $\gamma \leq 1$) when the end nodes use 8-PSK constellations}
 \label{table2}
 \begin{tabular}{|c|c|c|}

\hline Sl.No & Singular fade states  & Cluster\\ 
\hline &    &$\{\{(0,1)(1,0)(2,3)(3,2)(5,4)(4,5)(6,7)(7,6)\},\{(0,2)(1,3)(2,0)(3,1)(4,6)(6,4)(5,7)(7,5)\}$ \\ 1&$\gamma=1,\theta=0$ & $\{(0,3)(3,0)(1,2)(2,1)(4,7)(5,6)(6,5)(7,4)\},\{(0,4)(4,0)(1,5)(5,1)(3,7)(2,6)(6,2)(7,3)\}$ \\ & & $\{(0,5)(1,4)(2,7)(3,6)(4,1)(5,0)(6,3)(7,2)\},\{(0,6)(6,0)(1,7)(7,1)(4,2)(5,3)(3,5)(2,4)\}$ \\ & & $\{(0,7)(7,0)(1,6)(6,1)(4,3)(5,2)(2,5)(3,4)\},\{(0,0)(1,1)(2,2)(3,3)(4,4)(5,5)(6,6),(7,7)\}\}$\\ 
\hline &   &$\{\{(0,0)(1,5)(2,2)(3,7)(4,4)(5,1)(6,6)(7,3)\},\{(0,1)(1,4)(2,7)(3,2)(4,5)(5,0)(6,6)(7,3)\}$ \\ & &$\{(0,2)(3,1)(1,7)(2,4)(4,6)(5,3)(6,0)(7,5)\},\{(0,3)(1,6)(2,1)(3,4)(4,7)(5,2)(6,5)(7,0)\}$ \\ 2&$\gamma=\dfrac{\sin\pi/8}{\sin3\pi/8},\theta=0$ &$\{(0,4)(1,1)(2,6)(3,3)(4,0)(5,5)(6,2)(7,7)\},\{(0,5)(3,6)(1,0)(2,3)(4,1)(5,4)(6,7)(7,2)\}$ \\ & &$\{(0,6)(3,5)(1,3)(2,0)(4,2)(5,3)(6,4)(7,1)$\},$\{(0,7)(3,0)(1,2)(2,5)(4,3)(5,6)(6,1)(7,4)\}\}$\\ 
\hline & &$\{\{(0,0)(1,7)(2,1)(3,2)(4,5)(5,6)(6,4)(7,3)\},\{(0,1)(1,2)(2,0)(3,3)(4,4)(5,7)(6,5)(7,6)\}$ \\ & &$\{(0,2)(3,1)(1,5)(2,6)(4,3)(5,4)(6,7)(7,0)\},\{(0,3)(1,4)(2,7)(3,0)(4,6)(5,5)(6,2)(7,1)\}$ \\3&$\gamma=\dfrac{\sin2\pi/8}{\sin4\pi/8},\theta=0$                                           &$\{(0,4)(1,3)(2,5)(3,6)(4,1)(5,2)(6,0)(7,7)\},\{(0,5)(1,6)(3,7)(2,4)(4,0)(5,3)(6,1)(7,2)\}$ \\ & &$\{(0,6)(3,5)(1,1)(2,2)(4,7)(5,0)(6,3)(7,4)$\},$\{(0,7)(1,0)(2,3)(3,4)(4,2)(5,1)(6,6)(7,5)\}\}$\\ 
\hline & &$\{\{(0,0)(1,4)(2,6)(3,5)(4,1)(5,7)(6,3)(7,2)\},\{(0,1)(1,2)(2,4)(3,0)(4,5)(5,6)(6,7)(7,3)\}$\\ & &$\{(0,2)(1,6)(2,5)(3,4)(4,0)(5,3)(6,1)(7,7)\},\{(0,3)(1,7)(2,1)(3,2)(4,4)(5,0)(6,5)(7,6)\}$ \\ 4&$\gamma=\dfrac{\sin\pi/8}{\sin4\pi/8},\theta=\pi/8$                                           &$\{(0,4)(1,3)(2,7)(3,1)(4,2)(5,5)(6,6)(7,0)\},\{(0,5)(1,0)(2,3)(3,6)(4,7)(5,4)(6,2)(7,1)\}$ \\ & &$\{(0,6)(1,5)(2,2)(3,7)(4,3)(5,1)(6,0)(7,4)\},\{(0,7)(1,1)(2,0)(3,3)(4,6)(5,2)(6,4)(7,5)\}\}$\\ 
\hline &                                           &$\{\{(0,0)(1,6)(2,5)(3,7)(4,4)(5,2)(6,1)(7,3)\},\{(0,1)(1,7)(2,2)(3,0)(4,5)(5,3)(6,6)(7,4)\}$ \\& &$\{(0,2)(1,0)(2,3)(3,1)(4,6)(5,4)(6,7)(7,5)\},\{(0,3)(1,5)(2,4)(3,2)(4,7)(5,1)(6,0)(7,6)\}$ \\ 5&$\gamma=\dfrac{\sin\pi/8}{\sin2\pi/8},\theta=\pi/8$ &$\{(0,4)(1,3)(2,1)(3,6)(4,0)(5,7)(6,5)(7,2)\},\{(0,5)(1,2)(2,0)(3,3)(4,1)(5,6)(6,4)(7,7)\}$ \\ & &$\{(0,6)(1,1)(2,7)(3,4)(4,2)(5,5)(6,3)(7,0)\},\{(0,7)(1,4)(2,6)(3,5)(4,3)(5,0)(6,2)(7,1)\}\}$\\ 
\hline &                                           &$\{\{(0,0)(1,6)(2,3)(3,5)(4,4)(5,2)(6,7)(7,1)\},\{(0,1)(1,4)(2,2)(3,7)(4,5)(5,0)(6,6)(7,3)\}$ \\ & &$\{(0,2)(1,0)(2,7)(3,1)(4,6)(5,4)(6,3)(7,5)\},\{(0,3)(1,1)(2,6)(3,4)(4,7)(5,5)(6,2)(7,0)\}$ \\ 6&$\gamma=\dfrac{\sin2\pi/8}{\sin3\pi/8},\theta=\pi/8$ &$\{(0,4)(1,7)(2,5)(3,2)(4,0)(5,3)(6,1)(7,6)\},\{(0,5)(1,2)(2,4)(3,3)(4,1)(5,6)(6,0)(7,7)\}$ \\ & &$\{(0,6)(1,3)(2,1)(3,0)(4,2)(5,7)(6,5)(7,4)\},\{(0,7)(1,5)(2,0)(3,6)(4,3)(5,1)(6,4)(7,2)\}\}$\\ 
\hline & &$\{\{(0,0)(1,2)(2,3)(3,4)(4,1)(5,7)(6,5)(7,6)\},\{(0,1)(1,7)(2,6)(3,0)(4,2)(5,5)(6,3)(7,4)\}$\\ & &$\{(0,2)(1,6)(2,7)(3,5)(4,4)(5,3)(6,0)(7,1)\},\{(0,3)(1,4)(2,5)(3,2)(4,0)(5,1)(6,6)(7,7)\}$ \\ 7&$\gamma=\dfrac{\sin3\pi/8}{\sin4\pi/8},\theta=\pi/8$                                           &$\{(0,4)(1,0)(2,1)(3,3)(4,6)(5,2)(6,7)(7,5)\},\{(0,5)(1,1)(2,0)(3,6)(4,7)(5,4)(6,2)(7,3)\}$ \\ & &$\{(0,6)(1,5)(2,4)(3,7)(4,3)(5,0)(6,7)(7,2)\},\{(0,7)(1,3)(2,2)(3,1)(4,5)(5,6)(6,4)(7,0)\}\}$\\ 

\hline 
\end{tabular} 
\vspace{-.1 cm}
\end{table*}
 The Latin Square shown in figure.\ref{8L3} removes this singular fade state. Further by column shifting all the singular fade states with $\gamma=0.414$ are removed. Consider  $\gamma=\dfrac{\sin \pi/8}{\sin 3\pi/8} =0.414, \theta=\pi/4$. The singularity constraints are 

\begin{center}
$\{(0,1)(1,6)\}$, $\{(0,2)(1,5)\}$, $\{(0,4)(7,1)\}$, $\{(0,5)(7,0)\}$\\*
$\{(1,2)(2,7)\}$, $\{(1,3)(2,6)\}$, $\{(2,3)(3,0)\}$, $\{(2,4)(3,7)\}$\\*
$\{(3,4)(4,1)\}$, $\{(3,5)(4,0)\}$, $\{(4,5)(5,2)\}$, $\{(4,6)(5,1)\}$\\*
$\{(5,7)(6,2)\}$, $\{(5,6)(6,3)\}$, $\{(6,7)(7,4)\}$, $\{(6,0)(7,3)\}$\\*
\end{center}
The Latin Square which removes the above singular fade state is shown in figure.\ref{8L4}. Latin Squares which removes the singular fade states with $\gamma=\dfrac{\sin 3\pi/8}{\sin \pi/8}$ are obtained by taking transposes of the Latin Squares which removes the singular fade states with $\gamma=\dfrac{\sin \pi/8}{\sin 3\pi/8}$. For example, consider $\gamma=\dfrac{\sin 3\pi/8}{\sin \pi/8}, \theta=-\pi/4$. The singularity constraints are 
\begin{center}
$\{(1,0)(6,1)\}$, $\{(2,0)(5,1)\}$, $\{(4,0)(1,7)\}$, $\{(5,0)(0,7)\}$\\*
$\{(2,1)(7,2)\}$, $\{(3,1)(6,2)\}$, $\{(3,2)(0,3)\}$, $\{(4,2)(7,3)\}$\\*
$\{(4,3)(1,4)\}$, $\{(5,3)(0,4)\}$, $\{(5,4)(2,5)\}$, $\{(6,4)(1,5)\}$\\*
$\{(7,5)(2,6)\}$, $\{(6,5)(3,6)\}$, $\{(7,6)(4,7)\}$, $\{(0,6)(3,7)\}$\\*
\end{center}

The Latin Square to remove above singular fading state (figure.\ref{8L5}) is obtained by taking transpose of the Latin Square shown in figure.\ref{8L4}. Then by column shifting all the singular fading states with $\gamma=\dfrac{\sin 3\pi/8}{\sin \pi/8}$ is removed.

Next Consider $\gamma=\sin \pi/4, \theta=0$. The singularity constraints are 
\begin{center}
$\{(0,3)(2,7)\}$, $\{(0,5)(6,1)\}$, $\{(1,4)(3,0)\}$, $\{(1,6)(7,2)\}$\\*
$\{(2,5)(4,1)\}$, $\{(3,6)(5,2)\}$, $\{(4,7)(6,3)\}$, $\{(5,0)(7,4)\}$\\*
\end{center}

The Latin Square shown in figure.\ref{8L6} removes this singular fade state. Then by shifting columns all the singular fade states with $\gamma=\sin \pi/4$ is removed. By taking transposes all the singular fade states with $\gamma=\dfrac{1}{\sin \pi/4}$ is removed.

For example, the Latin Square for singular fade state $\gamma=\sin \pi/4, \theta=\pi/2$ is obtained by two column shifting from Latin Square for $\gamma=\sin \pi/4, \theta=0$. The singularity constraints for $\gamma=\sin \pi/4, \theta=\pi/2$ are,
\begin{center}
$\{(0,1)(2,5)\}$, $\{(0,3)(6,7)\}$, $\{(1,2)(3,6)\}$, $\{(1,4)(7,0)\}$\\*
$\{(2,3)(4,7)\}$, $\{(3,4)(5,0)\}$, $\{(4,5)(6,1)\}$, $\{(5,6)(7,2)\}$\\*
\end{center}
The Latin Square to remove singular fade state  $\gamma=\sin \pi/4, \theta=\pi/2$ is shown in Table.\ref{8L7}.

Now consider the singular fade state $\gamma=\dfrac{1}{\sin \pi/4}, \theta=-\pi/2$. The singularity constraints for this singular fade state are  
\begin{center}
$\{(1,0)(5,2)\}$, $\{(3,0)(7,6)\}$, $\{(2,1)(6,3)\}$, $\{(4,1)(0,7)\}$\\*
$\{(3,2)(7,4)\}$, $\{(4,3)(0,5)\}$, $\{(5,4)(1,6)\}$, $\{(6,5)(2,7)\}$\\*
\end{center}
From the Latin Square to remove singular fade state  $\gamma=\sin \pi/4, \theta=\pi/2$ the Latin Square to remove singular fade state  $\gamma=\dfrac{1}{\sin \pi/4}, \theta=-\pi/2$ (figure.\ref{8L8}) is obtained by taking transpose.

Till now we have removed 40 singular fade states lying on the angles $\theta=2m\pi/8$, where $m=0,1,2\cdots7$. Now $\theta=(2m+1)\pi/8$, where $m=0,1,2\cdots7$. is considered.

The singularity constraints for the singular fade state $\gamma=\dfrac{\sin\pi/8}{\sin4\pi/8}$ and $\theta=\pi/8$ are
\begin{center}
$\{(0,2)(1,6)\}, \{(0,5)(7,1)\}, \{(1,3)(2,7)\}, \{(2,4)(3,0)\}$\\*
$\{(3,5)(4,1)\}, \{(4,6)(5,2)\}, \{(5,7)(6,3)\}, \{(6,0)(7,4)\}$\\*
\end{center}
 The Latin Square to remove this singular fade state is given in figure.\ref{8L9}. By shifting the columns of this Latin Square all other Latin Squares to remove singular fade states with $\gamma=\dfrac{\sin\pi/8}{\sin4\pi/8}$ is obtained. The transposes of these Latin Squares remove singular fade states with $\gamma=\dfrac{\sin4\pi/8}{\sin\pi/8}$.
 
The singularity constraints for the singular fade state $\gamma=\dfrac{\sin\pi/8}{\sin2\pi/8}$ and $\theta=\pi/8$ are
\begin{center}
$\{(0,1)(1,7)\}, \{(0,3)(1,5)\}, \{(0,4)(7,2)\}, \{(0,6)(7,0)\}$\\*
$\{(1,2)(2,0)\}, \{(1,4)(2,6)\}, \{(2,3)(3,1)\}, \{(2,5)(3,7)\}$\\*
$\{(3,4)(4,2)\}, \{(3,6)(4,0)\}, \{(4,5)(5,3)\}, \{(4,7)(5,1)\}$\\*
$\{(5,0)(6,2)\}, \{(5,6)(6,4)\}, \{(6,1)(7,3)\}, \{(6,7)(7,5)\}$\\*
\end{center}
 The Latin Square to remove this singular fade state is given in figure.\ref{8L10}. By shifting the columns of this Latin Square all other Latin Squares to remove singular fade states with $\gamma=\dfrac{\sin\pi/8}{\sin2\pi/8}$ is obtained. The transposes of these Latin Squares remove singular fade states with $\gamma=\dfrac{\sin2\pi/8}{\sin\pi/8}$.

The singularity constraints for the singular fade state $\gamma=\dfrac{\sin2\pi/8}{\sin3\pi/8}$ and $\theta=\pi/8$ are
\begin{center}
$\{(0,2)(2,7)\}, \{(0,3)(2,6)\}, \{(0,4)(6,1)\}, \{(0,5)(6,0)\}$\\*
$\{(1,3)(3,0)\}, \{(1,4)(3,7)\}, \{(1,5)(7,2)\}, \{(1,6)(7,1)\}$\\*
$\{(2,4)(4,1)\}, \{(2,5)(4,0)\}, \{(3,5)(5,2)\}, \{(3,6)(5,1)\}$\\*
$\{(4,6)(6,3)\}, \{(4,7)(6,2)\}, \{(5,0)(7,3)\}, \{(5,7)(7,4)\}$\\*
\end{center}
 The Latin Square to remove this singular fade state is given in figure.\ref{8L11}. By shifting the columns of this Latin Square all other Latin Squares to remove singular fade states with $\gamma=\dfrac{\sin2\pi/8}{\sin3\pi/8}$ is obtained. The transposes of these Latin Squares remove singular fade states with $\gamma=\dfrac{\sin3\pi/8}{\sin2\pi/8}$.

 The singularity constraints for the singular fade state $\gamma=\dfrac{\sin3\pi/8}{\sin4\pi/8}$ and $\theta=\pi/8$ are
\begin{center}
$\{(0,3)(3,7)\}, \{(0,4)(5,0)\}, \{(1,4)(4,0)\}, \{(1,5)(6,1)\}$\\*
$\{(2,5)(5,1)\}, \{(2,6)(7,2)\}, \{(3,6)(6,2)\}, \{(4,7)(7,3)\}$\\*
\end{center}
 The Latin Square to remove this singular fade state is given in figure.\ref{8L12}. By shifting the columns of this Latin Square all other Latin Squares to remove singular fade states with $\gamma=\dfrac{\sin3\pi/8}{\sin4\pi/8}$ is obtained. The transposes of these Latin Squares remove singular fade states with $\gamma=\dfrac{\sin4\pi/8}{\sin3\pi/8}$.
  From above discussions it is clear that by obtaining one Latin Square for each value of $\gamma \leq 1$ the other Latin Squares can be obtained by column permutation or by taking transpose. The clustering for the seven $\gamma$ values when end nodes use 8-PSK is explicitly given in \ref{table2}. All the clusterings are having  8 clusters only. Thus all the singular fade states are removed with 8 symbols only. That is, relay can use 8-PSK constellation.
 
\section{Singularity Removal Constraints FOR $M$-PSK SIGNAL SET}

It was shown in Section III A that the singularity removal constraints can be equivalently viewed as a Constrained Partial Latin Square (CPLS). In the following subsection, the structure of the CPLS for $M$-PSK signal set is obtained. 

\subsection{The Structure of Constrained Partial Latin Square for $M$-PSK Signal Set}
In this subsection, the structure of the CPLS is shown to be the one referred to as $t$-plex which is defined as follows:

\begin{definition}
A $n \times n$ Partially Filled Latin Square (PFLS) is said to be a $t$-plex of order $n$ with $q$ symbols, if exactly $t$ cells are filled in each row and in each column, with each one of the $q$ symbols filling exactly $nt/q$ cells in total.  A $n \times n$ PFLS which is a $t$-plex with $n$ symbols is simply referred to as a $t$-plex of order $n$.
\end{definition}

\begin{figure}
\centering
\includegraphics[totalheight=1in,width=1in]{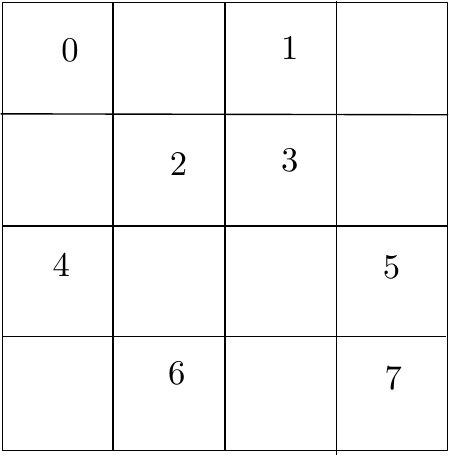}
\caption{Example showing a 2-plex of order 4 with 8 symbols}
\label{fig:k_plex_ex2}
\end{figure}

\begin{figure}
\centering
\includegraphics[totalheight=1in,width=1in]{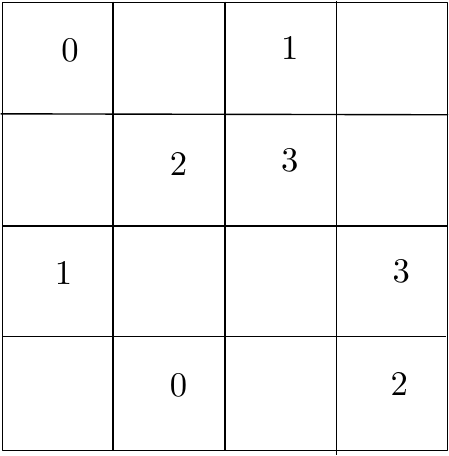}
\caption{Example showing a 2-plex of order 4}
\label{fig:k_plex_ex1}
\end{figure}

For example, Fig. \ref{fig:k_plex_ex2} shows a 2-plex of order 4 with 8 symbols and Fig. \ref{fig:k_plex_ex1} shows a 2-plex of order 4. Note that a $t$-plex of order $n$ need not be always completable to a filled Latin Square with $n$ symbols. For example, the 2-plex of order 4 shown in Fig. \ref{fig:k_plex_ex1} cannot be completed using 4 symbols, since the cell corresponding to the zeroth row and third column cannot be filled with any of the four symbols $0,1,2,3$.

%
%

The CPLS for the singular fade states which lie on the unit circle and the corresponding Latin Squares which remove them, were obtained in Section III A. The following lemma gives the structure of the CPLS for those singular fade states which lie on circles $\frac{\sin(\frac{\pi k}{M})}{\sin(\frac{\pi l}{M})}, k \neq l$, for the case when both $k$ and $l$ are odd or both $k$ and $l$ are even.
\begin{lemma}
\label{lemma_sing_const1}
For a singular fade state $\frac{\sin\left(\frac{\pi k}{M}\right)}{\sin\left(\frac{\pi l}{M}\right)}e^{j 2\pi m/M}$, where both $k$ and $l$ are odd or both are even, the singularity removal constraints are given by, 

{\footnotesize
\begin{align}
\nonumber
&\left\lbrace\left(i,i-m-\frac{M}{2}-\frac{\left(k-l\right)}{2}\right),\left(i-k,i-m+\frac{M}{2}-\frac{\left(k+l\right)}{2}\right)\right\rbrace,\\
\nonumber 
&\left\lbrace\left(i,i-m-\frac{\left(k+l\right)}{2}\right),\left(i-k,i-m-\frac{\left(k-l\right)}{2}\right)\right\rbrace,
\end{align}
}where $0 \leq i \leq M-1$.
\begin{proof}
For $\lbrace (k_1,l_1), (k_2,l_2)\rbrace$ to be a singularity constraint corresponding to the singular fade state $\frac{\sin\left(\frac{\pi k}{M}\right)}{\sin\left(\frac{\pi l}{M}\right)}$,

{\footnotesize
\begin{align}
\nonumber
-\frac{e^{ \frac{jk_1\pi}{M}}-e^{\frac{jk_2\pi}{M}}}{e^{\frac{jl_1\pi}{M}}-e^{\frac{jl_2\pi}{M}}}=\frac{\sin\left(\frac{\pi k}{M}\right)}{\sin\left(\frac{\pi l}{M}\right)}e^{j 2\pi m/M},\textrm{ i.e.,}
\end{align}
}
{\footnotesize
\begin{align}
\label{eqn_equate}
\frac{\sin(\dfrac{\pi(k_1-k_2)}{M})}{\sin(\dfrac{\pi(l_2-l_1)}{M})}e^{\frac{j(k_1+k_2-l_1-l_2)\pi}{M}}=\frac{\sin\left(\frac{\pi k}{M}\right)}{\sin\left(\frac{\pi l}{M}\right)}e^{j 2\pi m/M}.
\end{align}
}

Equating the amplitude and phase terms in \eqref{eqn_equate}, we get, $$k_1+k_2=l_1+l_2,k_1-k_2=k,l_2-l_1=l$$ or, $$k_1+k_2=l_1+l_2+2m,k_1-k_2=k,l_2-l_1=M-l.$$
Note that by Lemma \ref{lemma_trig}, separately equating the numerator and denominator of the amplitude on both the sides of \eqref{eqn_equate} is allowed.

Assuming $k_1=i$, the two solutions given in the statement of the lemma are obtained. 
\end{proof}  
\end{lemma}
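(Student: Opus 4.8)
The plan is to translate the defining condition of a singularity-removal constraint into a single complex equation, split that equation into its modulus and argument, use Lemma~\ref{lemma_trig} to pin down the differences of the cell indices, and finish with elementary linear algebra. Concretely, $\{(k_1,l_1),(k_2,l_2)\}$ is a singularity-removal constraint for the fade state $h=\frac{\sin(\pi k/M)}{\sin(\pi l/M)}e^{j2\pi m/M}$ exactly when the two corresponding points of $\mathcal{S}_R(\gamma,\theta)$ coincide, i.e. $x_{k_1}+hx_{l_1}=x_{k_2}+hx_{l_2}$, equivalently $h=-\frac{x_{k_1}-x_{k_2}}{x_{l_1}-x_{l_2}}$, where $x_p=e^{j(2p+1)\pi/M}$. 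Applying $e^{j\alpha}-e^{j\beta}=2j\sin\!\left(\frac{\alpha-\beta}{2}\right)e^{j(\alpha+\beta)/2}$ to numerator and denominator, and absorbing the overall minus sign into the denominator sine by swapping $l_1$ and $l_2$ there, I would reduce the condition to
\begin{equation}
\label{eqn_srcond}
\frac{\sin\!\left(\pi(k_1-k_2)/M\right)}{\sin\!\left(\pi(l_2-l_1)/M\right)}\,e^{j(k_1+k_2-l_1-l_2)\pi/M}=\frac{\sin(\pi k/M)}{\sin(\pi l/M)}\,e^{j2\pi m/M}.
\end{equation}

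The second step is to separate \eqref{eqn_srcond} into modulus and argument. For the modulus, both sides are positive once the sign of $k_1-k_2$ is fixed by reordering the unordered pair (recall $1\le k,l\le M/2$ from Lemma~\ref{lemma_singularity}), so Lemma~\ref{lemma_trig} applies: after reducing the index differences modulo $M$ into $[1,M/2]$ via $\sin x=\sin(\pi-x)$, one may equate numerator with numerator and denominator with denominator. Fixing $k_1-k_2=k$ by the reordering freedom, the surviving possibilities for the denominator are $l_2-l_1\equiv l\pmod M$ or $l_2-l_1\equiv M-l\pmod M$; these are the two cases of the statement. For the argument, $e^{j(k_1+k_2-l_1-l_2)\pi/M}=e^{j2\pi m/M}$ gives $k_1+k_2-l_1-l_2\equiv 2m\pmod{2M}$. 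The hypothesis that $k$ and $l$ have the same parity enters precisely here: it makes $\frac{k+l}{2}$ and $\frac{k-l}{2}$ integers, so the system can be solved with integer cell indices, and a check of ranges lets one replace the congruence by $k_1+k_2=l_1+l_2+2m$.

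The last step is pure linear algebra: in each of the two cases, solve the one-parameter under-determined system in $k_1,k_2,l_1,l_2$ by setting $k_1=i$. In the case $l_2-l_1=l$, one gets $k_2=i-k$, and combining $l_1+l_2=k_1+k_2-2m=2i-k-2m$ with $l_2-l_1=l$ yields $l_1=i-m-\frac{k+l}{2}$, $l_2=i-m-\frac{k-l}{2}$, i.e. the constraint $\big\{(i,\,i-m-\tfrac{k+l}{2}),\,(i-k,\,i-m-\tfrac{k-l}{2})\big\}$. In the case $l_2-l_1=M-l$, the same computation gives $l_1=i-m-\frac{M}{2}-\frac{k-l}{2}$, $l_2=i-m+\frac{M}{2}-\frac{k+l}{2}$, i.e. the other listed family. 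All index equalities are read modulo $M$, and letting $i$ run over $\{0,1,\dots,M-1\}$ enumerates every constraint.

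The step I expect to be the main obstacle is the sign/reflection bookkeeping around \eqref{eqn_srcond}: the identity $\sin x=\sin(\pi-x)$ introduces an ambiguity independently in the numerator, in the denominator, and (as $2m$ versus $2m-2M$) in the phase, so a priori there is a small tree of candidate solution families, and one must argue that once the reordering freedom and the same-parity hypothesis are used, exactly the two listed families survive while the rest are empty or are duplicates. Lemma~\ref{lemma_trig} is what legitimises separately equating the numerators and the denominators of the amplitudes, and thereby keeps this case analysis finite; once it is invoked, everything downstream is routine.
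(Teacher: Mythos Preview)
Your proposal is correct and follows essentially the same route as the paper: write the singularity condition as a complex equation, rewrite via $e^{j\alpha}-e^{j\beta}=2j\sin\!\left(\frac{\alpha-\beta}{2}\right)e^{j(\alpha+\beta)/2}$, separate modulus and phase, invoke Lemma~\ref{lemma_trig} to equate numerators and denominators separately, obtain the two cases $l_2-l_1\in\{l,\,M-l\}$, and finally set $k_1=i$ to parametrise the solutions. Your treatment is in fact somewhat more careful than the paper's---you make explicit the role of the same-parity hypothesis (to ensure $\tfrac{k\pm l}{2}\in\mathbb{Z}$) and the $\sin x=\sin(\pi-x)$ bookkeeping, and you correctly carry the phase condition $k_1+k_2-l_1-l_2\equiv 2m$ through both cases, whereas the paper's printed proof drops the $2m$ in the first case (a typo, since the stated constraints do contain $-m$).
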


\begin{figure}
\centering
\includegraphics[totalheight=2in,width=2in]{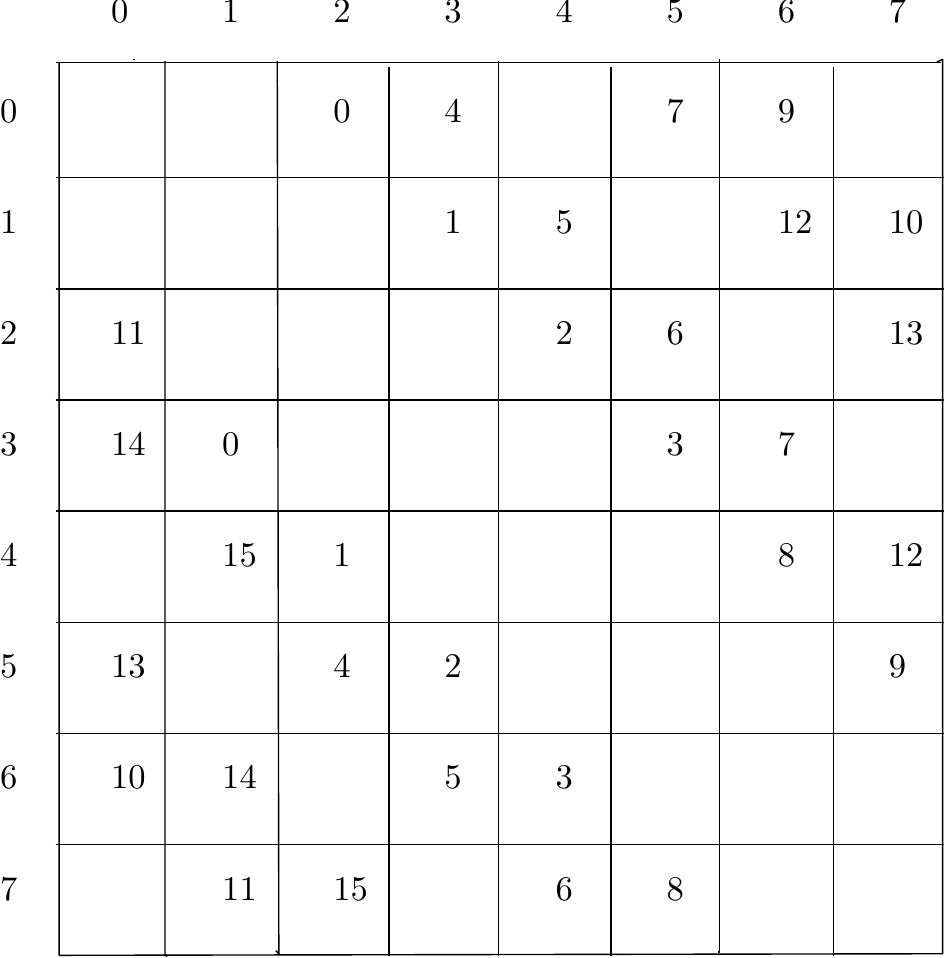}
\caption{CPLS corresponding to the singular fade state $\frac{\sin(\frac{3 \pi}{8})}{\sin(\frac{\pi}{8})}$}
\label{fig:PFLS_3_1}
\vspace{-.9 cm}
\end{figure}

For example, the CPLS corresponding to the singular fade state $\frac{\sin(\frac{3 \pi}{8})}{\sin(\frac{\pi}{8})}$, is shown in Fig. \ref{fig:PFLS_3_1}.

The following lemma gives the structure of the CPLS for those singular fade states which lie on circles $\frac{\sin(\frac{\pi k}{M})}{\sin(\frac{\pi k}{M})}, k \neq l$, for the case when only one among  $k$ and $l$ is odd.
\begin{lemma}
\label{lemma_sing_const2}
For a singular fade state $\frac{\sin\left(\frac{\pi k}{M}\right)}{\sin\left(\frac{\pi l}{M}\right)}e^{j \frac{(2m+1)\pi}{M}}$, where only one among $k$ and $l$ is odd, the singularity removal constraints are given by,

{\scriptsize
\begin{align*} 
&\left\lbrace\left(i,i-m-\frac{M}{2}-\frac{\left(k+1-l\right)}{2}\right),\left(i-k,i-m+\frac{M}{2}-\frac{\left(k+1+l\right)}{2}\right)\right\rbrace,\\ 
&\left\lbrace\left(i,i-m-\frac{\left(k+1+l\right)}{2}\right),\left(i-k,i-m-\frac{\left(k+1-l\right)}{2}\right)\right\rbrace,
\end{align*}
}where $0 \leq i \leq M-1$.
\begin{proof}
The proof is similar to the proof of Lemma \ref{lemma_sing_const1} and is omitted.
\end{proof}  
\end{lemma}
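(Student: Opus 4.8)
The plan is to follow the proof of Lemma~\ref{lemma_sing_const1} closely, tracking the single structural change that the target phase is now an \emph{odd} multiple of $\pi/M$. First I would write down the defining condition of a singularity‑removal constraint: $\{(k_1,l_1),(k_2,l_2)\}$ is such a constraint for $\frac{\sin(\pi k/M)}{\sin(\pi l/M)}e^{j(2m+1)\pi/M}$ exactly when
\[
-\frac{e^{j(2k_1+1)\pi/M}-e^{j(2k_2+1)\pi/M}}{e^{j(2l_1+1)\pi/M}-e^{j(2l_2+1)\pi/M}}=\frac{\sin(\pi k/M)}{\sin(\pi l/M)}\,e^{j(2m+1)\pi/M},
\]
using that the symbol $k$ labels the point $e^{j(2k+1)\pi/M}$. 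Applying the identity $e^{ja}-e^{jb}=2j\sin\!\bigl(\tfrac{a-b}{2}\bigr)e^{j(a+b)/2}$ to numerator and denominator, the ``$+1$'' terms in the exponents cancel in the quotient and the left side reduces to $\dfrac{\sin(\pi(k_1-k_2)/M)}{\sin(\pi(l_2-l_1)/M)}\,e^{j(k_1+k_2-l_1-l_2)\pi/M}$, the same form reached in Lemma~\ref{lemma_sing_const1}.

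Next I would equate moduli and arguments. Since a constraint is an unordered pair I may assume $k_1-k_2$ is congruent modulo $2M$ to a value in $[1,M/2]$; Lemma~\ref{lemma_trig} then forces the numerators and denominators of the two amplitude ratios to match separately, giving $k_1-k_2=k$ and $l_2-l_1\in\{l,\,M-l\}$ (the two ways to realize $|\sin(\pi(l_2-l_1)/M)|=\sin(\pi l/M)$; any residual sign is absorbed by swapping the pair or by shifting the argument by $\pi=M\cdot(\pi/M)$, which preserves parity). The argument equation then becomes $k_1+k_2-l_1-l_2\equiv 2m+1 \pmod{2M}$ --- the odd right‑hand side being the sole point of departure from Lemma~\ref{lemma_sing_const1}.

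Finally I would set $k_1=i$, hence $k_2=i-k$, and solve the resulting linear system with all column indices reduced modulo $M$: from $k_1+k_2=2i-k$ and the argument relation one gets $l_1+l_2=2(i-m)-(k+1)$, and combining this with $l_2-l_1=l$ yields the second constraint family while $l_2-l_1=M-l$ yields the first, the $\tfrac{M}{2}$ offsets reappearing exactly as in Lemma~\ref{lemma_sing_const1}. Letting $i$ run over $\{0,\dots,M-1\}$ produces all the constraints. The decisive observation is that, because precisely one of $k,l$ is odd, $k+1\pm l$ is even, so $(k+1\pm l)/2$ are genuine elements of $\mathbb{Z}_M$; this is exactly what the hypothesis of the lemma buys and what distinguishes these formulas from those of Lemma~\ref{lemma_sing_const1}.

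I expect the one genuinely delicate step to be the same as in Lemma~\ref{lemma_sing_const1}: rigorously justifying the separation of the modulus equation via Lemma~\ref{lemma_trig} and enumerating exactly which residues of $k_1-k_2$ and $l_2-l_1$ modulo $2M$ are admissible, so that precisely the two stated families arise and no spurious ones; the remaining trigonometric and modular manipulations are routine.
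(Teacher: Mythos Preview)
Your proposal is correct and follows essentially the same approach as the paper, which simply states that the proof mirrors that of Lemma~\ref{lemma_sing_const1}: write the constraint condition, reduce the quotient of exponentials to a sine ratio times a phase, equate moduli (invoking Lemma~\ref{lemma_trig}) and arguments, and solve with $k_1=i$. Your additional care in tracking the $(2k+1)$ labelling of the PSK points, in isolating the odd phase $2m+1$ as the sole change, and in noting that the parity hypothesis on $k,l$ is precisely what makes $(k+1\pm l)/2$ an integer, is more explicit than the paper but not a different argument.
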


The structure of the CPLS for $M$-PSK signal set, is shown to be a 4-plex of order $M$ with $2M$ symbols or a 2-plex of order $M$, as stated in the following lemma. 
\begin{lemma}
The CPLS obtained after filling in the singularity removal constraints corresponding to the singular fade state which lies on the circle with radius $\frac{\sin(\frac{\pi k}{M})}{\sin(\frac{\pi l}{M})}$, $k \neq l$, forms a 4-plex of order $M$ with $2M$ symbols if $k \neq M/2,l \neq M/2$ and forms a 2-plex of order $M$ otherwise.
\begin{proof}
\noindent 
{\it Case (i) $k \neq M/2$, $l \neq M/2$:}
From Lemma \ref{lemma_sing_const1} and Lemma \ref{lemma_sing_const2}, each value of $i$, $0 \leq i \leq M-1$, gives rise to two distinct constraints, which fills two cells in the $i$-th row. Similarly, the two constraints corresponding to $i'=i+k$ also fills two cells in the $i$-th row. Hence, in each row, 4 cells are filled. By a similar argument, 4 cells are filled in each column. Since the same symbol fills in two of the cells, a total of $4M/2=2M$ symbols are used in the partial filling of Latin Square. As a result, a $4$-plex with $2M$ symbols is obtained. 

\noindent
{\it Case (ii) $k = M/2$ or  $l = M/2$:} 
When $k = M/2$ or  $l = M/2$, from Lemma \ref{lemma_sing_const1} and Lemma \ref{lemma_sing_const2}, for each value of $i$, $0 \leq i \leq M-1$, it can be verified that the two singularity constraints are not distinct.  Only 2 cells are filled in each row and each column. As a result, a $2$-plex is obtained. 
\end{proof}
\end{lemma}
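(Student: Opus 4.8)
The plan is to read off the filled cells directly from the closed-form descriptions of the singularity-removal constraints in Lemma~\ref{lemma_sing_const1} (the case $k,l$ of equal parity) and Lemma~\ref{lemma_sing_const2} (the case $k,l$ of opposite parity), and then to count, row by row and column by column, how many cells they fill and how many distinct symbols they force. In both lemmas the constraints come in two families indexed by $i\in\{0,\dots,M-1\}$; write $A(i)$ for the first listed constraint and $B(i)$ for the second. In each of $A(i)$ and $B(i)$ the first cell lies in row $i$ and the second cell lies in row $i-k$ (all indices mod $M$), and the four column positions are affine functions of $i$ with unit slope. I would first treat Case~(i), $k\neq M/2$ and $l\neq M/2$, and then Case~(ii).

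For Case~(i), fix a row $r$. Since the second cell of a constraint with parameter $i$ sits in row $i-k$, the only parameters producing a cell in row $r$ are $i=r$ and $i=r+k$; each of $A(r),B(r),A(r+k),B(r+k)$ contributes exactly one cell to row $r$, so at most four cells of row $r$ are filled, and it remains to see that their column indices are pairwise distinct. Subtracting the relevant affine expressions, the six pairwise differences reduce mod $M$ to the constants $\pm(k-l)$, $\pm(k+l)$, $\pm(M/2-l)$ and $\pm(M/2+k)$ (with $k$ replaced by $k+1$ in the offsets for the Lemma~\ref{lemma_sing_const2} case). Each of these is nonzero mod $M$: $k\neq l$ with $1\le k,l\le M/2$ gives $k-l\not\equiv 0$ and $0<k+l\le M-1$; $l\neq M/2$ gives $M/2-l\not\equiv 0$; and $k\neq M/2$ gives $M/2+k\not\equiv 0$. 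Hence exactly four cells are filled in each row. By the same reasoning applied to columns — each of the four column-expressions is a bijection in $i$ mod $M$, so at most four parameters place a cell in a given column, and the six pairwise differences of the corresponding row indices are again $\pm(k-l)$, $\pm(k+l)$, $\pm(M/2-l)$, $\pm(M/2+k)$ — exactly four cells are filled in each column. Consequently all $4M$ cells placed by the $2M$ labelled constraints are distinct, the $2M$ constraints are themselves distinct, and since each constraint forces one common symbol on its two cells, exactly $2M$ symbols are used, each filling exactly $2=M\cdot 4/(2M)$ cells. This is precisely a $4$-plex of order $M$ with $2M$ symbols.

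For Case~(ii), say $k=M/2$ (the argument for $l=M/2$ is the same). Substituting $k=M/2$ into the formulas of Lemma~\ref{lemma_sing_const1} (resp.\ Lemma~\ref{lemma_sing_const2}) shows that $A(i)$ coincides, as an unordered pair of cells, with $B(i+M/2)$, and $B(i)$ with $A(i+M/2)$; when instead $l=M/2$ one finds $A(i)=B(i)$. Either way the $2M$ labelled constraints collapse into $M$ distinct two-cell constraints (no further collapse occurs, as constraints with distinct parameters land in distinct rows), so only two cells are filled in each row and in each column and only $M$ symbols occur, each on exactly $2=M\cdot 2/M$ cells — a $2$-plex of order $M$. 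The one place that requires genuine care, which the statement's own proof disposes of with ``it can be verified,'' is exactly this modular bookkeeping: the six pairwise differences in Case~(i) and the two coincidences in Case~(ii), while invoking the parity hypotheses of Lemmas~\ref{lemma_sing_const1} and~\ref{lemma_sing_const2} to guarantee that the quantities $k\pm l$ (resp.\ $k+1\pm l$) appearing halved in the exponents are even. The Lemma~\ref{lemma_sing_const2} case introduces nothing new: replacing $k$ by $k+1$ alters only the column offsets, leaving the row offsets — hence all the incidence counts of Cases~(i) and~(ii) — unchanged.
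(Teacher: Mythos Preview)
Your proposal is correct and follows essentially the same approach as the paper's proof: both argue Case~(i) and Case~(ii) by reading off the filled cells from Lemmas~\ref{lemma_sing_const1} and~\ref{lemma_sing_const2} and counting how many land in each row and column. Where the paper merely asserts distinctness of the four cells per row (and says ``it can be verified'' for Case~(ii)), you actually compute the six pairwise column differences and check each is nonzero modulo~$M$ under the hypotheses $k\neq l$, $k\neq M/2$, $l\neq M/2$, and you explicitly identify the coincidences $A(i)=B(i+M/2)$ (when $k=M/2$) and $A(i)=B(i)$ (when $l=M/2$); this is the same argument, carried out with more care.
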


For example, the PFLS corresponding to the singular fade state $\frac{\sin(\frac{3 \pi}{8})}{\sin(\frac{\pi}{8})}$, shown in Fig. \ref{fig:PFLS_3_1}, is a 4-plex of order 8 with 16 symbols.

If the same symbol is filled in those cells which belong to two different singularity constraints, the singularity removal constraints are said to be combined. It is shown in the following lemma that by combining the singularity constraints, the CPLS which is a $4$-plex of order $M$ with $2M$ symbols can be transformed into a $4$-plex of order $M$.

\begin{lemma}
\label{lemma_PFLS}
For $k\neq M/2,l \neq M/2$, the CPLS corresponding to the singular fade state which lies on the circle $\frac{\sin(\frac{\pi k}{M})}{\sin(\frac{\pi k}{M})}, k \neq l$, obtained after filling in the singularity constraints, can be transformed in to a $4$-plex by appropriately combining the singularity constraints.
\begin{proof}
Since the elements of both the sets $\lbrace i,i-k,i+M/2,i-k+M/2 \rbrace$ and $\lbrace i-M/2-(k-l)/2,i+M/2-(k+l)/2,i-(k-l)/2,i-(k+l)/2 \rbrace$ are distinct, the following constraints in Lemma \ref{lemma_sing_const1}, can be combined:

{\footnotesize
\begin{align*}
&\left\lbrace\left(i,i-\frac{M}{2}-\frac{\left(k-l\right)}{2}\right),\left(i-k,i+\frac{M}{2}-\frac{\left(k+l\right)}{2}\right)\right\rbrace,\\
&\left\lbrace\left(i+\frac{M}{2},i-\frac{\left(k-l\right)}{2}\right),\left(i-k+\frac{M}{2},i-\frac{\left(k+l\right)}{2}\right)\right\rbrace.
\end{align*}
} 

Similarly, the following constraints can also be combined:

{\footnotesize
\begin{align*}
&\left\lbrace\left(i,i-\frac{\left(k+l\right)}{2}\right),\left(i-k,i-\frac{\left(k-l\right)}{2}\right)\right\rbrace,\\
&\left\lbrace\left(i+\frac{M}{2},i+\frac{M}{2}-\frac{\left(k+l\right)}{2}\right),\left(i-k+\frac{M}{2},i+\frac{M}{2}-\frac{\left(k-l\right)}{2}\right)\right\rbrace.
\end{align*}
}
In this way, a 4-plex is obtained from the PFLS for the case when $k,l \neq \frac{M}{2}$ and both $k$ and $l$ are even or both are odd. Following, exactly the same procedure, the lemma can be proved for the case when one among $k$ and $l$ is odd and the other is even. 
\end{proof}
\end{lemma}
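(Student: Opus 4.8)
\emph{Proof idea.} The plan is to exploit the fact — established by the preceding lemma — that, for $k\neq M/2$ and $l\neq M/2$, the CPLS is \emph{already} a $4$-plex of order $M$, only with $2M$ symbols rather than $M$. Since relabelling symbols never changes which cells are filled, and the array already has exactly four filled cells in every row and column, it suffices to merge the $2M$ symbol-classes into $M$ classes of four cells each in such a way that the partial-Latin-square property survives; the outcome is then automatically a $4$-plex of order $M$, and because the merging keeps each original singularity constraint inside a single class, the transformed array still removes the singular fade state.

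First I would normalise: by Lemma~\ref{lemma_col_perm} it is enough to handle one fade state per circle, so the phase factor may be taken to be $1$ and, in the sub-case where $k,l$ are both odd or both even, Lemma~\ref{lemma_sing_const1} gives for each $i\in\mathbb{Z}_M$ the two constraints
\[
\begin{aligned}
A_i&=\Bigl\{\bigl(i,\ i-\tfrac{M}{2}-\tfrac{k-l}{2}\bigr),\ \bigl(i-k,\ i+\tfrac{M}{2}-\tfrac{k+l}{2}\bigr)\Bigr\},\\
B_i&=\Bigl\{\bigl(i,\ i-\tfrac{k+l}{2}\bigr),\ \bigl(i-k,\ i-\tfrac{k-l}{2}\bigr)\Bigr\},
\end{aligned}
\]
all arithmetic in $\mathbb{Z}_M$ and all the halves being integers by the parity hypothesis. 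The merging I propose is to combine $A_i$ with $A_{i+M/2}$ and $B_i$ with $B_{i+M/2}$. Since $i\mapsto i+M/2$ is a fixed-point-free involution of $\mathbb{Z}_M$, this packages the $2M$ constraints (which are pairwise disjoint, being the $2M$ symbol-classes of the $4$-plex supplied by the previous lemma) into $M/2+M/2=M$ merged classes, each a union of two old classes and hence covering exactly four cells.

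The substance is the distinctness check. For $A_i\cup A_{i+M/2}$ the four row-indices are $\{\,i,\ i-k,\ i+\tfrac{M}{2},\ i-k+\tfrac{M}{2}\,\}$, whose pairwise differences are $k$, $\tfrac{M}{2}$ and $\tfrac{M}{2}\pm k$; these are all nonzero in $\mathbb{Z}_M$ exactly because $k\neq 0$ (indeed $1\le k\le M/2$) and $k\neq M/2$. Reducing the four column-indices modulo $M$ — here one must keep track of $-\tfrac{M}{2}\equiv\tfrac{M}{2}$ — turns them into $\{\,i+\tfrac{M}{2}-\tfrac{k-l}{2},\ i+\tfrac{M}{2}-\tfrac{k+l}{2},\ i-\tfrac{k-l}{2},\ i-\tfrac{k+l}{2}\,\}$, with pairwise differences $l$, $\tfrac{M}{2}$ and $\tfrac{M}{2}\pm l$, nonzero precisely because $l\neq 0$ and $l\neq M/2$; this is exactly where both hypotheses are spent. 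The same computation, with $k+l$ and $k-l$ interchanging roles, settles $B_i\cup B_{i+M/2}$. Finally I would confirm that no merged class shows up twice in a line: row $i$ meets only $A_i,A_{i+k},B_i,B_{i+k}$, and after merging $A_i$ and $A_{i+k}$ lie in different classes because $i+k\not\equiv i,\ i+\tfrac{M}{2}$ (again $k\neq 0,M/2$), likewise for the $B$'s, while the $A$-classes and $B$-classes are kept apart by giving the former the labels $0,\dots,\tfrac{M}{2}-1$ and the latter $\tfrac{M}{2},\dots,M-1$; the column version is identical. Hence the relabelled array is a partial Latin square on $M$ symbols with four filled cells in every row and column and with every symbol in exactly four cells — i.e.\ a $4$-plex of order $M$.

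For the remaining case — one of $k,l$ odd, the other even — Lemma~\ref{lemma_sing_const2} produces constraints of exactly the same shape with $k$ replaced throughout by $k+1$ (the phase $e^{j(2m+1)\pi/M}$ again absorbed into a column permutation), and one checks $k+1\not\equiv 0,\tfrac{M}{2}$ wherever $k\not\equiv 0,\tfrac{M}{2}$ was needed, while the hypotheses still give $l\neq 0,\tfrac{M}{2}$; so the identical pairing $i\leftrightarrow i+M/2$ and the identical distinctness computation apply verbatim. The step I expect to be the genuine obstacle is precisely this distinctness bookkeeping: carrying the $\mathbb{Z}_M$ reductions honestly (especially the collapse $-\tfrac{M}{2}\equiv\tfrac{M}{2}$ among the column indices) and verifying that the $A/B$ split of labels yields exactly $M$ symbols with no hidden coincidence of an $A$-class and a $B$-class within some row or column; everything else is the counting for the involution $i\mapsto i+M/2$ together with the remark that relabelling leaves the set of filled cells untouched.
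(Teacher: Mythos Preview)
Your proposal is correct and follows essentially the same route as the paper: you pair the constraint $A_i$ with $A_{i+M/2}$ and $B_i$ with $B_{i+M/2}$, and verify that the four row indices $\{i,i-k,i+M/2,i-k+M/2\}$ and the four column indices are distinct precisely because $k,l\neq 0,M/2$ --- exactly the paper's argument, though you carry out the bookkeeping more explicitly (including the row/column non-collision check across merged classes, which the paper leaves implicit). One small slip: in the mixed-parity case the row indices in Lemma~\ref{lemma_sing_const2} still involve $i-k$, not $i-(k+1)$, so the row-distinctness check still uses $k\neq 0,M/2$ rather than $k+1$; this does not affect your conclusion since both hypotheses are already available.
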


\begin{figure}
\centering
\includegraphics[totalheight=2in,width=2in]{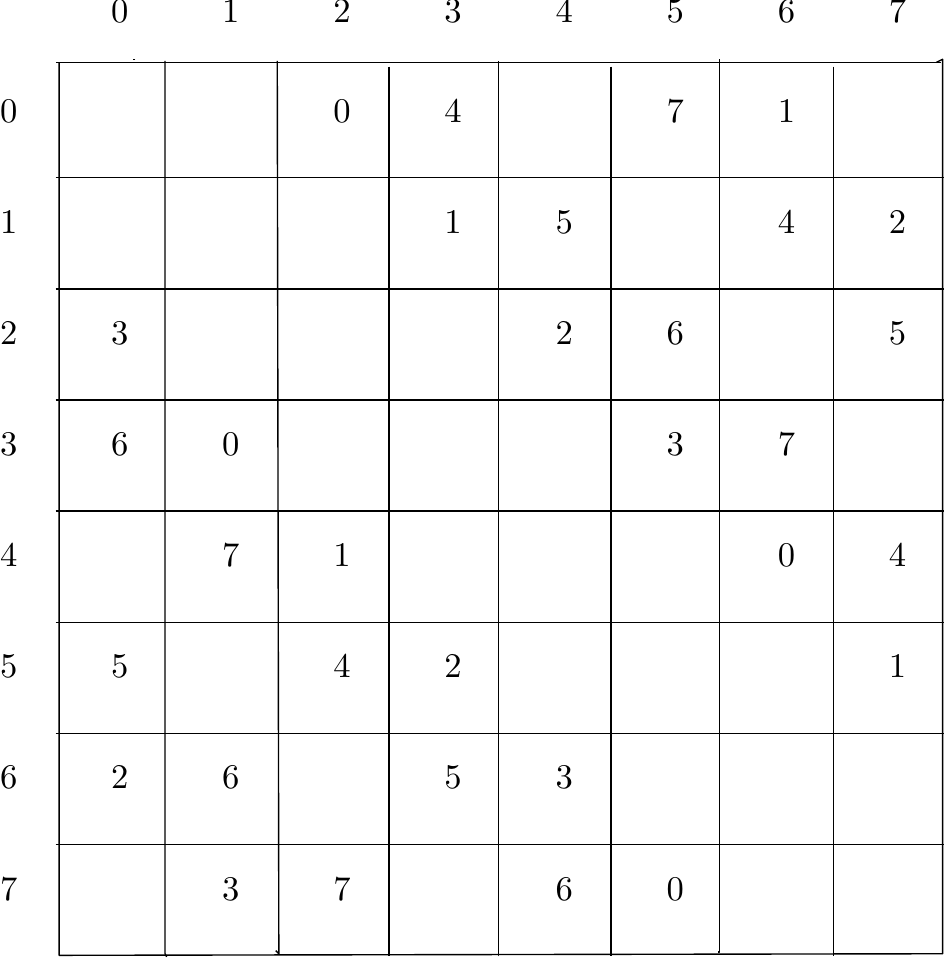}
\caption{4-plex of order 8 corresponding to the singular fade state $\frac{\sin(\frac{3 \pi}{8})}{\sin(\frac{\pi}{8})}$}
\label{fig:PFLS_3_1_combined}
\vspace{-.8 cm}
\end{figure}

 For example, the CPLS corresponding to the singular fade state $\frac{\sin(\frac{3 \pi}{8})}{\sin(\frac{\pi}{8})}$, shown in Fig. \ref{fig:PFLS_3_1}, can be transformed into a 4-plex of order 8, as shown in Fig. \ref{fig:PFLS_3_1_combined}. 
 
 The completability of a $t$-plex of order $M$ with $M$ symbols is discussed in the following subsection.
 \subsection{Completability of a $t$-plex of order $M$ with $M$ symbols} 
The completion of the 2-plexes and the 4-plexes of order $M$, obtained from the singularity removal constraints, using exactly $M$ symbols is advantageous for the reason outlined as follows: It was shown in \cite{APT1} that for the case when QPSK signal set is used during the MA phase, there exists certain values of fade state for which a clustering with cardinality 5 needs to be used to maximize the minimum cluster distance. While the use of clusterings with cardinality 5, reduces the impact of multiple access interference, it adversely impacts the performance during the BC phase. The 2-plexes and 4-plexes can be completed using $M$ symbols, means that the optimizing the performance during the MA phase does not come at the cost of a degraded performance during the BC phase.

 Whether a partially filled $M \times M$ Latin Square is completable using $M$ symbols or not is an open problem. Burton \cite{Bu} conjectured the following regarding the completability of the partially filled Latin Square which forms a $t$-plex of order $M$.
\begin{conjecture}
\label{Thm_comp}
A $t$-plex of a $M \times M$ Latin Square is completable using $M$ symbols, for $t \leq M/4$.  
\end{conjecture}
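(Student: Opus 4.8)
This is Burton's conjecture, which to my knowledge is still open in full generality; what follows is a plan of attack together with the special cases one can realistically expect to settle here.

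\textbf{Reformulation as a constrained edge colouring.} First I would translate completability of a $t$-plex $P$ of order $M$ into a bipartite edge-colouring problem. Let $H$ be the bipartite graph whose left and right vertices are the rows and columns of $P$ and whose edges are the \emph{empty} cells; since each row and each column of $P$ has exactly $t$ filled cells, $H$ is $(M-t)$-regular on both sides. Completing $P$ with $M$ symbols is equivalent to properly edge-colouring $H$ with the $M$ symbols so that each row-vertex $i$ is missing exactly the set $R_i$ of $t$ symbols already present in row $i$, and each column-vertex $j$ is missing exactly the set $C_j$ of $t$ symbols already present in column $j$. Thus every empty cell $(i,j)$ must receive a colour from $L(i,j)=\mathbb{Z}_M\setminus(R_i\cup C_j)$, and $|L(i,j)|\ge M-2t\ge M/2$ when $t\le M/4$. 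This is \emph{not} enough to invoke Galvin's list-edge-colouring theorem, because the degree of a cell in $H$ (the number of empty cells in its row) is $M-t$, which for $t\le M/4$ exceeds $M/2\ge M-2t$; so a pure list-colouring argument will not do, and one must exploit the extra ``row-part $\cup$ column-part'' structure of the forbidden lists.

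\textbf{Main tool: amalgamation and detachment.} The natural machinery is the amalgamation/detachment technique of Hilton and Rodger together with Nash-Williams--type edge-colouring detachment theorems. The plan is: (i) build an amalgamated object by identifying, for each symbol $s$, the $M-t$ rows not yet containing $s$ into one ``super-row'' and the $M-t$ columns not yet containing $s$ into one ``super-column''; in the amalgamated multigraph the Latin constraints degenerate into mild counting/balance conditions, so a proper edge-colouring of it exists easily; (ii) invoke a detachment theorem to split the amalgamated vertices back into genuine rows and columns while preserving properness and the prescribed colour-degrees, producing a colouring of $H$ with exactly the required missing-colour pattern, hence a completion of $P$. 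The hypothesis $t\le M/4$ is expected to enter precisely at step (ii): it is the regime in which the colour-degrees at the amalgamated vertices are balanced enough for the equalisability/divisibility side-conditions of the detachment theorem to hold.

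\textbf{Expected main obstacle.} The hard part will be running the row-detachment and the column-detachment \emph{simultaneously}, or in a controlled two-stage order: the clean detachment theorems split a single vertex, or an independent set of vertices, of a fixed edge-coloured graph, whereas here a colour class must end up being a matching of prescribed size on a prescribed vertex set, and the row-side and column-side amalgamations are interleaved. One will likely need a two-stage detachment that carries extra colour-degree bookkeeping through the first split, or a purpose-built simultaneous detachment lemma, and checking that $t\le M/4$ keeps every parity and divisibility condition satisfied at each stage is the technical crux. Short of the full conjecture, I would instead settle directly the $2$-plexes and $4$-plexes that actually occur as the CPLS of a singular fade state: for those the filled cells follow the explicit arithmetic pattern in $i,k,l,m$ given by Lemma \ref{lemma_sing_const1} and Lemma \ref{lemma_sing_const2}, so one can write down a completion by an explicit shift/permutation rule and then verify by inspection that no symbol repeats in any row or column, which is the route the remaining sections take.
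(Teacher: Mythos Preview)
You have correctly identified the situation: this statement is labelled a \emph{conjecture} in the paper (Burton's conjecture), and the paper does \emph{not} prove it. Immediately after stating it the paper says explicitly that ``the problem of completability of a $t$-plex in general remains unsolved'' and then retreats to exactly the fallback you describe in your last paragraph --- explicit constructions for the particular $2$-plexes and $4$-plexes that arise from the singularity constraints of Lemmas~\ref{lemma_sing_const1} and~\ref{lemma_sing_const2}, carried out in Section~VI via the shift/permutation algorithms for $\mathcal{L}_{k,l}^e$, $\mathcal{L}_{k,l}^o$ and the quadruplicate construction. So on the point that actually matters for comparison, your proposal and the paper agree.

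Your amalgamation/detachment plan is a reasonable research sketch, but as you yourself note it is not a proof: the simultaneous row- and column-side detachment with prescribed colour-degrees is precisely the obstruction that keeps Burton's conjecture open, and nothing in your outline overcomes it. The list-colouring reformulation and the observation that $|L(i,j)|\ge M-2t$ falls short of the degree $M-t$ are correct, but they only confirm that the naive route fails. Since the paper never claims more than the conjecture's statement plus the special-case constructions, there is no gap to flag on your side --- just be aware that the first two thirds of your write-up are programmatic rather than probative, and the only part that matches the paper's actual content is your final sentence.
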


By explicitly providing all the Latin Squares, for 8-PSK signal set, it was shown in the previous section that the CPLS corresponding to all the singular fade states are completable using 8 symbols. If Conjecture \ref{Thm_comp} were indeed true, the 4-plexes and the 2-plexes obtained from the CPLS corresponding to the different singular fade states, can be completed using $M$ symbols, for $M$ any power of 2 greater than or equal to 16. Even though, the problem of completability of a $t$-plex in general remains unsolved, in the next section, it is shown that the structure of the singularity constraints allows the construction of explicit Latin Squares, for some singular fade states. In other words, by providing some explicit constructions, it is shown that the $4$-plexes of order $M$ corresponding to some of the singular fade states are always completable using $M$ symbols. 
     
\section{SOME SPECIAL CONSTRUCTIONS OF LATIN SQUARES}

As discussed in section II, the singular fade states lie on circles with radii $\frac{\sin\left(\frac{\pi k}{M}\right)} {\sin\left(\frac{\pi l}{M}\right)}, 1 \leq k,l \leq M/2$. It was shown in Section II that all the Latin Squares which remove the singular fade states on the unit circle (i.e., the case when $k=l$) can be obtained from the bit-wise XOR map, by appropriate column permutation. In Subsection A of this section, an explicit construction algorithm is provided to obtain Latin Squares for the case when both $k$ and $l$ are odd. In Subsection B of this section, a construction procedure to obtain Latin Squares for the case when both $k$ and $l$ are even is provided. Note that the cells in the Latin Squares obtained are filled using exactly $M$ symbols.

It was observed in \cite{APT1} that for 4 PSK signal set there exists clusterings which remove more than one singular fade state. The Latin Squares obtained from the construction algorithm given in Subsection A remove multiple singular fade states. 

The construction procedure given in Subsection B involves obtaining Latin squares for $M$-PSK signal set from the ones already obtained for $M/2$-PSK signal set.
\subsection{Construction of Latin Squares which remove Multiple Singular Fade States}
Let $\mathcal{L}_{k,l}^e$ denote the Latin Square which removes multiple singular fade states on the circle $\frac{\sin\left(\frac{\pi k}{M}\right)} {\sin\left(\frac{\pi l}{M}\right)}, 1 \leq k,l \leq M/2$, both $k$ and $l$ are odd, which is obtained as follows:

\begin{algorithmic}[1]
\STATE Start with an empty Latin Square.
\STATE The first row of the $M \times M$ Latin Square is filled with numbers $0$ to $M-1$ in the same order.
\FOR{$0 \leq s \leq M-1$}
\FOR{$1 \leq t \leq M$}
\IF{$s$ is even}
\STATE The entry $(kt,s+lt)$ is filled in the Latin Square is filled to be $s$.
\ELSE
\STATE The entry $(kt,s-lt)$ is filled in the Latin Square is filled to be $s$ (Note that all the arithmetic operations are modulo $M$.)
\ENDIF
\ENDFOR
\ENDFOR
\end{algorithmic}
Since both $k$ and $l$ are odd, the algorithm given above is always guaranteed to result in a completely filled Latin Square. 

Let $\mathcal{L}_{k,l}^o$ denote the Latin Square obtained by replacing the word \textit{even} by the word \textit{odd} in Step 5 of the algorithm.

\begin{figure}
\centering
\subfigure[]{
\includegraphics[totalheight=2in,width=2in]{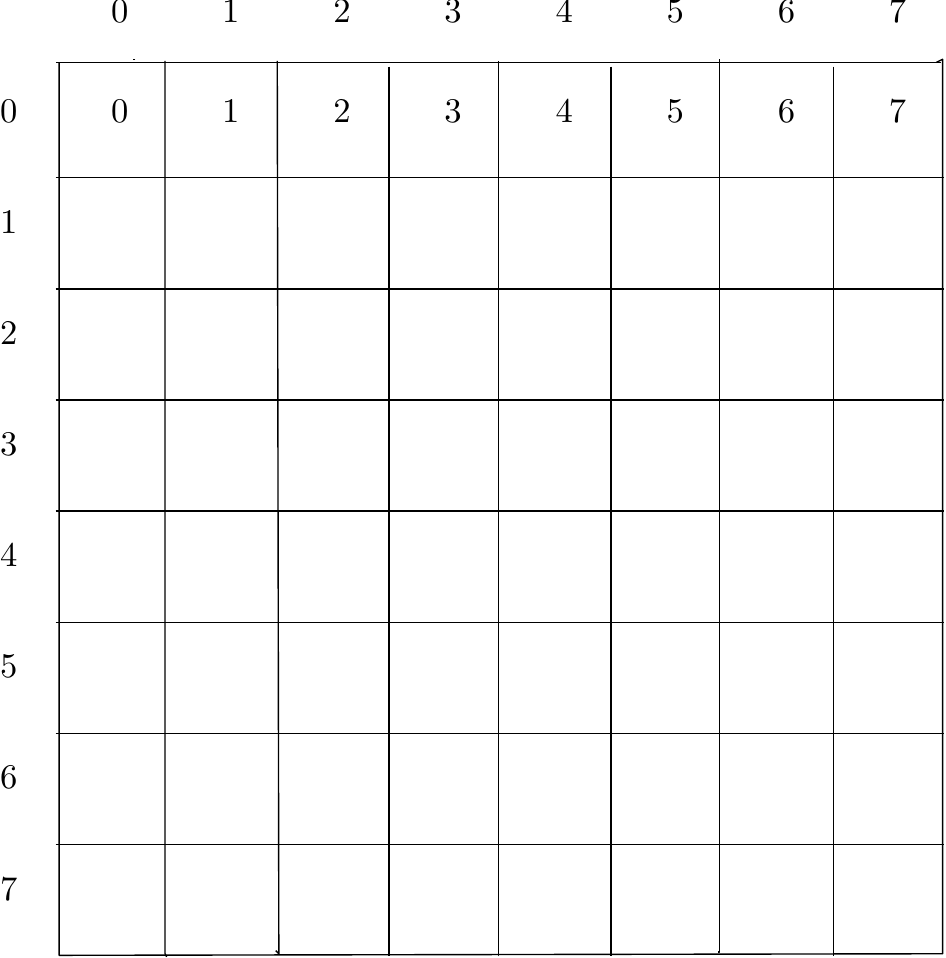}
\label{fig:latin_8psk_1}	
}

\subfigure[]{
\includegraphics[totalheight=2in,width=2in]{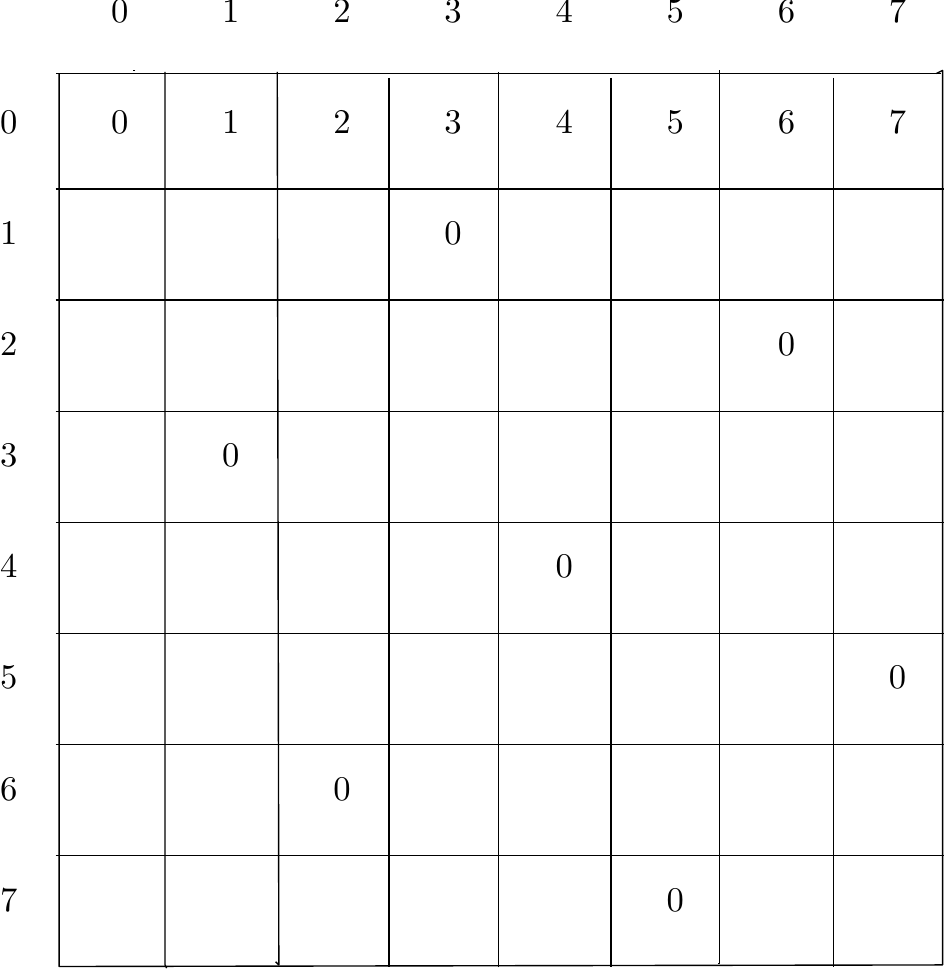}
\label{fig:latin_8psk_2}	
}

\subfigure[]{
\includegraphics[totalheight=2in,width=2in]{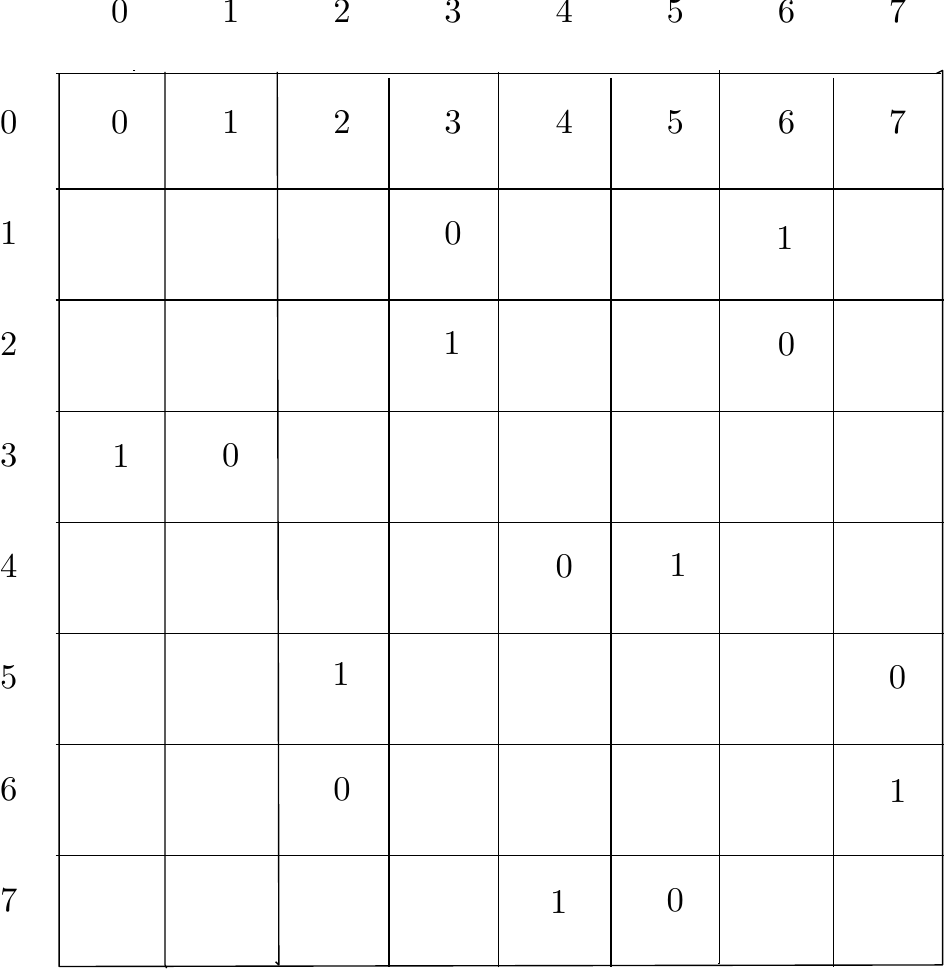}
\label{fig:latin_8psk_3}	
}

\subfigure[]{
\includegraphics[totalheight=2in,width=2in]{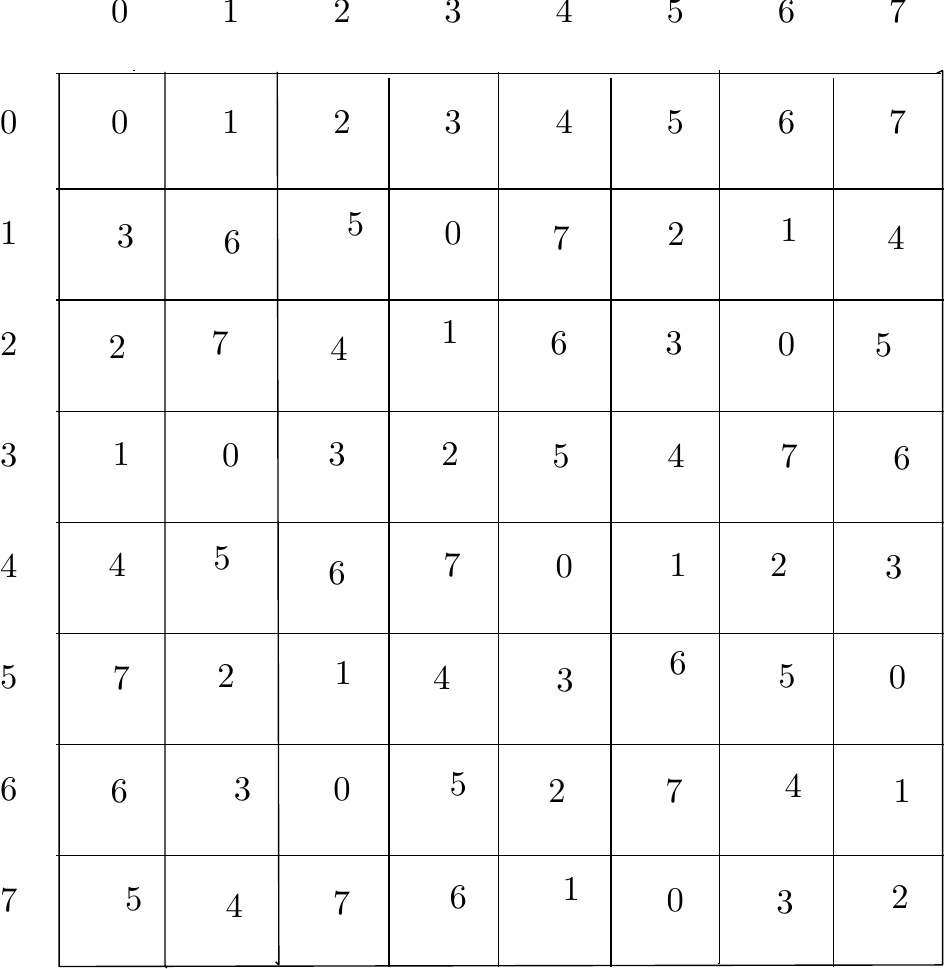}
\label{fig:latin_8psk_4}	
}
\caption{Construction of the Latin Square $\mathcal{L}_{3,1}^e$}
\label{fig:latin_8psk_even}
\end{figure}

\begin{figure}
\centering
\includegraphics[totalheight=2in,width=2in]{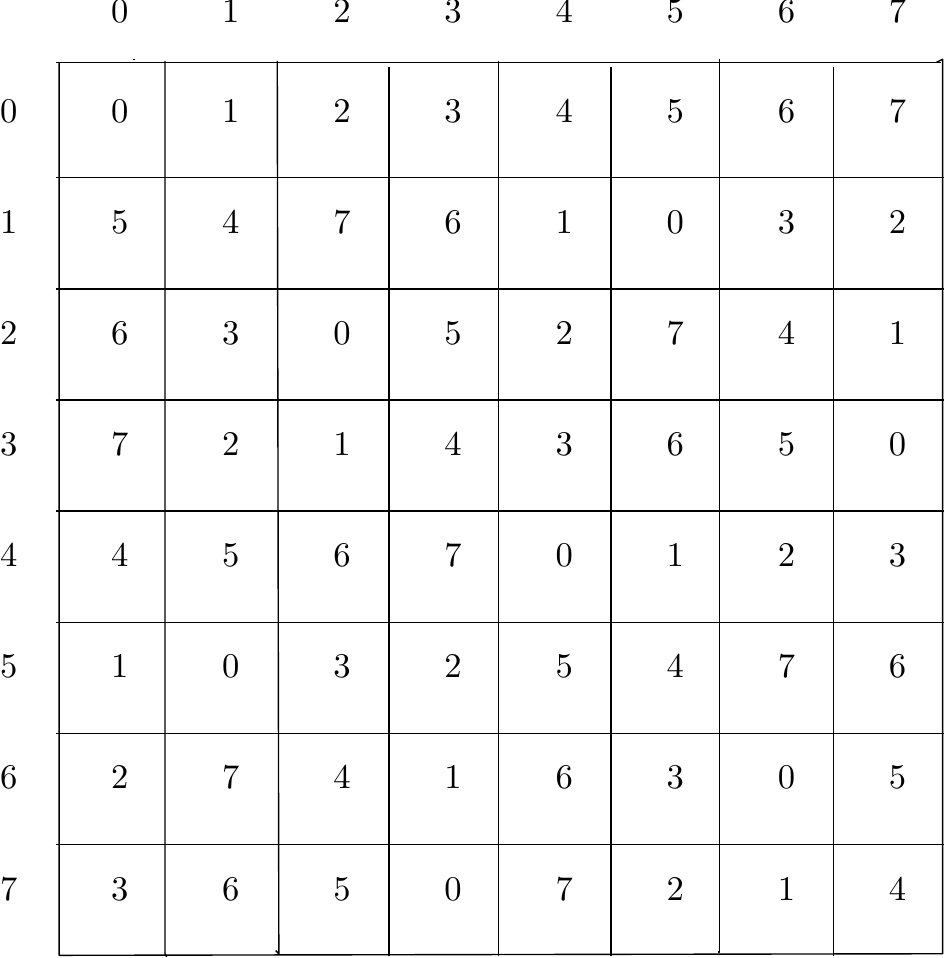}
\caption{The Latin Square $\mathcal{L}_{3,1}^o$}
\label{fig:latin_8psk_odd}
\end{figure}

For example consider the case when $k=3,l=1$. We start with only the first row filled as shown in Fig. \ref{fig:latin_8psk_1}. The cells in which $0$'s appear are filled next, by traversing the Latin Square, $k=3$ steps forward and $l=1$ step to the right, as depicted in Fig. \ref{fig:latin_8psk_2}. The cells in which $1$'s appear are filled next, by traversing the Latin Square, $k=3$ steps forward and $l=1$ step to the left, as depicted in Fig. \ref{fig:latin_8psk_3}. In a similar way, the remaining cells in the Latin Square can also be filled to obtain the Latin Square $\mathcal{L}_{3,1}^e$ shown in Fig. \ref{fig:latin_8psk_4}. Similarly, the Latin Square $\mathcal{L}_{3,1}^o$ can be obtained as shown in Fig. \ref{fig:latin_8psk_odd}.

The following Theorem states that multiple singular fade states on multiple circles are removed by the Latin Squares $\mathcal{L}_{k,l}^e$ and $\mathcal{L}_{k,l}^o$.

\begin{theorem}
\label{thm_can_LS}
One of two the Latin Squares $\mathcal{L}_{k,l}^e$ and $\mathcal{L}_{k,l}^o$ (both $k$ and $l$ odd), removes any singular fade state of the form $\frac{\sin{\frac{k_1 \pi}{M}}} {\sin{\frac{l_1 \pi}{M}}} e^{\frac{\j m 2 \pi}{M}}$, where $0 \leq m \leq M-1$ and $(k_1,l_1)$ belongs to the set, $\left\lbrace (nk,nl) \vert 1 \leq n \leq \frac{M}{2}-1, n \textrm{ odd}\right\rbrace$. 
\begin{proof}
The singularity constraints for the singular fade state $\frac{\sin{\frac{k \pi}{M}}} {\sin{\frac{l \pi}{M}}} e^{\frac{\j m 2 \pi}{M}}$ are given by Lemma \ref{lemma_sing_const1}.

 Assuming $i'=i-k$, The singularity constraints given in Lemma \ref{lemma_sing_const1} can be rewritten as,
 
 {\footnotesize
\begin{align}
\label{eqn_sing_const3}
 \left\lbrace\left(i',i'+m+\frac{M}{2}+\frac{\left(k-l\right)}{2}\right),\left(i'+k,i'+m-\frac{M}{2}+\frac{\left(k+l\right)}{2}\right)\right\rbrace\\
 \label{eqn_sing_const4}
 \left\lbrace\left(i',i'+m+\frac{\left(k+l\right)}{2}\right),\left(i'+k,i'+m+\frac{\left(k-l\right)}{2}\right)\right\rbrace,
 \end{align}
 }where $0 \leq i' \leq M-1$. 
 
From \eqref{eqn_sing_const3}, the second entry  in the the singularity constraint can be obtained from the first entry by traversing $k$ rows downwards and $l$ columns to the left. Similarly, from \eqref{eqn_sing_const4}, the second entry  in the the singularity constraint can be obtained from the first entry by traversing $k$ rows downwards and $l$ columns to the right. Furthermore, for a given $i'$, the column indices of the first cells given in the singularity constraint given in \eqref{eqn_sing_const3} and \eqref{eqn_sing_const4} differ by an odd number. Hence, one of the two Latin squares $\mathcal{L}_{k,l}^e$ and $\mathcal{L}_{k,l}^o$ removes the singular fade state $\frac{\sin{\frac{k \pi}{M}}} {\sin{\frac{l \pi}{M}}} e^{\frac{\j m 2 \pi}{M}}$, depending on whether $m+\frac{\left(k+l\right)}{2}$ is odd or even. From, the construction procedure described in the beginning of the sub-section, it can be verified easily that $\mathcal{L}_{k,l}^e$ is the same as $\mathcal{L}_{nk,nl}^e$ and $\mathcal{L}_{k,l}^o$ is the same as $\mathcal{L}_{nk,nl}^o$, for all odd $n$. This completes the proof.
\end{proof}
\end{theorem}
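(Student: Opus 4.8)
The plan is to match the combinatorial pattern of the singularity-removal constraints, as described by Lemma~\ref{lemma_sing_const1}, against the diagonal-orbit structure built into the construction of $\mathcal{L}_{k,l}^e$ and $\mathcal{L}_{k,l}^o$. First I would settle the base case $n=1$ --- the singular fade state $\frac{\sin(k\pi/M)}{\sin(l\pi/M)}e^{jm2\pi/M}$ for an arbitrary $m$ --- and then lift to general odd $n$.

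For the base case, I would start from Lemma~\ref{lemma_sing_const1}, substitute $i'=i-k$, and reduce the column indices modulo $M$ to bring the two families of constraints into the form \eqref{eqn_sing_const3}--\eqref{eqn_sing_const4}. This rewriting exposes two facts: within each constraint the second cell is reached from the first by moving $k$ rows down and $l$ columns to the right (first family) or to the left (second family); and, for fixed $i'$, the column indices of the two ``first cells'' differ by $M/2-l$, which is odd because $M$ is a power of two (so $M/2$ is even) and $l$ is odd. On the other side, the construction gives that in $\mathcal{L}_{k,l}^e$ the cells holding a fixed even symbol $s$ are exactly $\{(kt,\,s+lt):t\in\mathbb{Z}_M\}$ and those holding a fixed odd symbol are exactly $\{(kt,\,s-lt):t\in\mathbb{Z}_M\}$, the two roles being interchanged in $\mathcal{L}_{k,l}^o$. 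Hence in $\mathcal{L}_{k,l}^e$ a pair of cells separated by ``down $k$, right $l$'' shares a symbol precisely when that symbol is even, and a pair separated by ``down $k$, left $l$'' shares a symbol precisely when that symbol is odd.

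Combining these, each of the two constraint families is satisfied by $\mathcal{L}_{k,l}^e$ (respectively $\mathcal{L}_{k,l}^o$) as soon as the symbol occupying the relevant ``first cell'' has the matching parity. A short computation --- using that $k$, $l$, and hence $k^{-1}\bmod M$, are odd --- shows that the symbol in cell $(r,c)$ of $\mathcal{L}_{k,l}^e$ is even iff $c-r$ is even; thus the parity of each first-cell symbol is constant in $i'$, and the two first-cell symbols (one per family) have opposite parities, since the two first cells share the row $i'$ but their columns differ by the odd number $M/2-l$. As the two parity requirements imposed by selecting $\mathcal{L}_{k,l}^e$ (even for the first family, odd for the second) are likewise opposite, exactly one of $\mathcal{L}_{k,l}^e$, $\mathcal{L}_{k,l}^o$ meets both requirements simultaneously, and that Latin square removes the fade state; tracing the indices, the choice is governed by the parity of $m+\tfrac{k+l}{2}$. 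To finish, I would note that for odd $n$ the map $t\mapsto nt$ is a bijection of $\mathbb{Z}_M$, whence $\mathcal{L}_{nk,nl}^e=\mathcal{L}_{k,l}^e$ and $\mathcal{L}_{nk,nl}^o=\mathcal{L}_{k,l}^o$; applying the base case with $(k,l)$ replaced by the (still odd) pair $(nk,nl)$ --- and observing $m+\tfrac{nk+nl}{2}\equiv m+\tfrac{k+l}{2}\pmod 2$ --- gives the claim for every $m$ and every odd $n$ with $1\le n\le M/2-1$.

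The step I expect to be the main obstacle is the index bookkeeping in the third paragraph: correctly identifying which symbol sits in each ``first cell'', checking that its parity is uniform in $i'$, and verifying that the prescribed left/right column shift genuinely matches the two constraint families without the orbit description wrapping around $\mathbb{Z}_M$ in a way that breaks it. Once the invariant ``$c-r\bmod 2$ decides whether cell $(r,c)$ carries an even or an odd symbol'' is secured, the remainder is a routine comparison of indices.
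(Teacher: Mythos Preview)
Your proposal is correct and follows essentially the same route as the paper: rewrite the constraints of Lemma~\ref{lemma_sing_const1} via $i'=i-k$, observe that the two constraint families correspond to ``down $k$, right $l$'' and ``down $k$, left $l$'' moves with first-cell columns of opposite parity, and then invoke $\mathcal{L}_{nk,nl}^{e}=\mathcal{L}_{k,l}^{e}$ (via the bijection $t\mapsto nt$ on $\mathbb{Z}_M$) for the lift to general odd $n$. In fact you supply more justification than the paper does---the parity invariant ``$c-r\bmod 2$ determines the symbol parity'' and the explicit computation $M/2-l$ odd are exactly the bookkeeping the paper leaves to the reader.
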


It follows from Theorem \ref{thm_can_LS} that a single Latin Square $\mathcal{L}_{k,l}^e$ removes half of the singular fade states which lie on $M/4$ circles, i.e., a total of $\frac{M^2}{8}$ singular fade states. Two Latin Squares $\mathcal{L}_{k,l}^e$ and $\mathcal{L}_{k,l}^o$ remove all the fade states which lie on $M/4$ circles. Also, the Latin square $\mathcal{L}_{k,l}^e$ ($\mathcal{L}_{k,l}^o$) is the same as the Latin Squares $\mathcal{L}_{nk,nl}^e$ ($\mathcal{L}_{nk,nl}^o$), where $3\leq n \leq M/2-1$, $n$ odd. 

For example, the Latin Squares $\mathcal{L}_{3,1}^e$ and $\mathcal{L}_{3,1}^o$ , shown respectively in Fig. \ref{fig:latin_8psk_4} and Fig. \ref{fig:latin_8psk_odd}, remove all the 16 singular fade states which lie on the 2 circles with radii $\frac{\sin(\frac{k_1 \pi}{M})}{{\sin(\frac{l_1 \pi}{M})}}$, where $(k_1,l_1) \in \lbrace(1,3),(3,1)\rbrace$.

\begin{note}
The Latin Square $\mathcal{L}_{k,l}^o$ is the same as the one obtained by a single column permutation of $\mathcal{L}_{k,l}^e$, with the symbols alone permuted. Since the clustering is unchanged by symbol permutation, it is enough if we specify either $\mathcal{L}_{k,l}^e$ or $\mathcal{L}_{k,l}^o$.
\end{note}

Out of the $\frac{M^2}{4}-\frac{M}{2}+1$ circles on which the singular fade states lie, $\frac{M^2}{16}-\frac{M}{4}$ circles have radii of the form  $\frac{\sin(\frac{k \pi}{M})}{{\sin(\frac{l \pi}{M})}}$, where both $k$ and $l$ are odd. From Theorem 1, the Latin Squares which remove all the singular fade states which lie on such circles can be obtained from $\frac{\frac{M^2}{16}-\frac{M}{4}}{\frac{M}{4}}=\frac{M}{4}-1$ Latin Squares, which are constructed using the algorithm given in the beginning of this subsection. 

\subsection{Construction of Latin Squares for $M$ PSK signal set from that of $M/2$ PSK signal set}

In the previous subsection, an explicit construction of Latin Squares which remove the singular fade states on the circle $\frac{\sin{\frac{\pi k }{M}}}{\sin{\frac{\pi l }{M}}}$, both $k$ and $l$ odd, was provided. In this subsection, it is shown that for $M$ PSK signal set, Latin Squares which remove the singular fade states on the circle $\frac{\sin{\frac{\pi k }{M}}}{\sin{\frac{\pi l }{M}}}$, both $k$ and $l$ even, $k \neq l, k \neq M/2$ and $l \neq M/2$, can be constructed from the Latin Squares obtained for $M/2$ PSK signal set.

Let $\mathcal{L}(M,k,l,\phi)$ denote an $M \times M$ Latin Square which removes the singular fade state $\frac{\sin(\frac{k\pi}{M})} {\sin(\frac{l\pi}{M})} e^{j \phi}$. 

For the $M \times M$ Latin Square $\mathcal{L}(M,k,l,\phi)$, let $\mathcal{L}^{oe}(M,k,l,\phi)$ denote the $M/2 \times M/2$ Latin Square obtained by taking only the odd numbered rows and even numbered columns of $\mathcal{L}(M,k,l,\phi)$. In a similar way, the  the $M/2 \times M/2$ Latin Squares $\mathcal{L}^{ee}(M,k,l,\phi),\mathcal{L}^{oo}(M,k,l,\phi),\mathcal{L}^{eo}(M,k,l,\phi)$ can be defined.

\begin{definition}
The \textit{quadruplicate} of an $M \times M$ Latin Square $\mathcal{L}(M,k,l,\phi)$ is the ordered set consisting of the four $M/2 \times M/2$ Latin Squares 
{\small
$\lbrace \mathcal{L}^{ee}(M,k,l,\phi),\mathcal{L}^{oo}(M,k,l,\phi),\mathcal{L}^{eo}(M,k,l,\phi),\mathcal{L}^{oe}(M,k,l,\phi) \rbrace.$
}  
\end{definition} 

The Latin Square is $\mathcal{L}(M,k,l,\phi)$ uniquely determined by its quadruplicate. 

For the Latin Square $\mathcal{L}(M,k,l,\phi)$, let $\mathcal{L}_{c}(M,k,l,\phi)$ denote the Latin Square obtained by adding integer $c$ to all the cells of the Latin Square.

The following theorem gives the construction of the Latin Square $\mathcal{L}(M,k,l,\phi)$, for the case when both $k$ and $l$ are even, $k \neq l, k \neq M/2$, $l \neq M/2$ and $\mod\left(\frac{k}{2}+\frac{l}{2},2\right)=0$.
\begin{theorem}
\label{theorem_even_odd1}
The elements of the quadruplicate of $\mathcal{L}(M,k,l,\phi)$, if $\mod\left(\frac{k}{2}+\frac{l}{2},2\right)=0$, are given by,

{\footnotesize
\begin{align*}
\mathcal{L}^{ee}(M,k,l,\phi)=\mathcal{L}^{oo}(M,k,l,\phi)&=\mathcal{L}(M/2,k/2,l/2,\phi),\\
 \mathcal{L}^{eo}(M,k,l,\phi)=\mathcal{L}^{oe}(M,k,l,\phi)&=\mathcal{L}_{M/2}(M/2,k/2,l/2,\phi).
 \end{align*} }
\begin{proof}
From Lemma \ref{lemma_sing_const1}, the singularity constraints are given by,

 {\footnotesize
\begin{align}
\label{eqn_sing_const5}
 \left\lbrace\left(i,i-\frac{M}{2}-\frac{\left(k-l\right)}{2}\right),\left(i-k,i+\frac{M}{2}-\frac{\left(k+l\right)}{2}\right)\right\rbrace\\
 \label{eqn_sing_const6}
 \left\lbrace\left(i,i-\frac{\left(k+l\right)}{2}\right),\left(i-k,i-\frac{\left(k-l\right)}{2}\right)\right\rbrace,
 \end{align}
 }where $0 \leq i \leq M-1$. 
 
 The singularity constraints given above can be further split into cases depending on whether $i$ is even or odd. 
 
\noindent{Case 1: $i$ is even}
 \noindent 
 
 Assume $i'=i/2$, the singularity constraints in \eqref{eqn_sing_const5} and \eqref{eqn_sing_const6} can be rewritten as,

 {\footnotesize
\begin{align}
\nonumber
 &\left\lbrace\left(2i',2i'-2\frac{M/2}{2}-\frac{2\left(k/2-l/2\right)}{2}\right)\right.,\\
\label{eqn_sing_const7}
&\hspace{2 cm}\left.\left(2i'-2\frac{k}{2},2i'+2\frac{M/2}{2}-2\frac{\left(k/2+l/2\right)}{2}\right)\right\rbrace\\ 
\nonumber
&\left\lbrace\left(2i',2i'-2\frac{\left(k/2+l/2\right)}{2}\right),\right.\\
 \label{eqn_sing_const8}
 &\hspace{3 cm}\left.\left(2i'-2\frac{k}{2},2i'-2\frac{\left(k/2-l/2\right)}{2}\right)\right\rbrace,
 \end{align}
 }where $0 \leq i' \leq M/2-1$. 
 
 The singularity constraints given in \eqref{eqn_sing_const7} and \eqref{eqn_sing_const8}, are the same as that of the singular fade state $\frac{\sin(\frac{\pi k/2}{M/2})}{\sin(\frac{\pi l/2}{M/2})}$ with the row and column indices doubled. Hence, $\mathcal{L}(M,2k,2l,\phi)^{ee}=\mathcal{L}(M/2,k,l,\phi)$.
 
 \noindent{Case 2: $i$ is odd}
 \noindent 
 
  Using a similar argument as that of case 1, it can be shown that the singularity constraints for this case are the same as that of the singular fade state $\frac{\sin(\frac{\pi k/2}{M/2})}{\sin(\frac{\pi l/2}{M/2})}$ with the row and column indices doubled and shifted by one place. Hence, $\mathcal{L}^{oo}(M,k,l,\phi)=\mathcal{L}(M/2,k/2,l/2,\phi)$. Note that by taking both $\mathcal{L}^{ee}(M,k,l,\phi)$ and $\mathcal{L}^{oo}(M,k,l,\phi)$ to be $\mathcal{L}(M/2,k/2,l/2,\phi)$, the conditions required for $\mathcal{L}(M,k,l,\phi)$ to be a Latin Square are not violated.
 
 Since  {\footnotesize $\mathcal{L}^{ee}(M,k,l,\phi)=\mathcal{L}^{oo}(M,k,l,\phi)=\mathcal{L}(M/2,k/2,l/2,\phi)$}, half of the cells in the Latin Square $\mathcal{L}(M,k,l,\phi)$ are filled with numbers from $0$ to $M/2-1$. Hence, $\mathcal{L}^{oe}(M,k,l,\phi)$ and $\mathcal{L}^{eo}(M,k,l,\phi)$ can be taken to be $\mathcal{L}_{M/2}(M/2,k/2,l/2,\phi)$.
 
\end{proof}
\end{theorem}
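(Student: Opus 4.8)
The plan is to read off the singularity removal constraints from Lemma~\ref{lemma_sing_const1} and to use the hypothesis that $k/2+l/2$ is even to show that these constraints live entirely inside two of the four blocks of the quadruplicate, leaving the other two blocks free. Throughout I would work with the representative angle $\phi$ corresponding to $m=0$ in Lemma~\ref{lemma_sing_const1}; the Latin squares for the remaining singular fade states on the same circle are obtained from it afterwards by Lemma~\ref{lemma_col_perm}. For this $\phi$ each constraint from Lemma~\ref{lemma_sing_const1} has the form $\{(i,\,i-\tfrac{M}{2}-\tfrac{k-l}{2}),(i-k,\,i+\tfrac{M}{2}-\tfrac{k+l}{2})\}$ or $\{(i,\,i-\tfrac{k+l}{2}),(i-k,\,i-\tfrac{k-l}{2})\}$. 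Since $k,l$ are even, $M/2$ is even ($M$ being a power of two), and $\tfrac{k\pm l}{2}=\tfrac{k}{2}\pm\tfrac{l}{2}$ is even because $\tfrac{k}{2}+\tfrac{l}{2}$ is even, every row shift and every column shift appearing above is even. Hence if $i$ is even both cells of a constraint have even row and even column indices, and if $i$ is odd both cells have odd row and odd column indices: all constrained cells lie in the even--even and odd--odd blocks $\mathcal{L}^{ee},\mathcal{L}^{oo}$, while the blocks $\mathcal{L}^{eo},\mathcal{L}^{oe}$ carry no constraint at all.

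Next I would rescale. Splitting the constraints by the parity of $i$ and dividing every index by $2$ (the map $x\mapsto x/2$ is a bijection from the even residues mod $M$ onto $\mathbb{Z}_{M/2}$ compatible with addition and with reduction mod $M/2$), the shifts $k,l,\tfrac{M}{2},\tfrac{k\pm l}{2}$ become $\tfrac{k}{2},\tfrac{l}{2},\tfrac{M/2}{2},\tfrac12(\tfrac{k}{2}\pm\tfrac{l}{2})$. Comparing with Lemma~\ref{lemma_sing_const1} applied to the parameters $(M/2,k/2,l/2)$ — for which $k/2$ and $l/2$ have the same parity — the rescaled pairs obtained from the even-$i$ constraints are exactly the singularity removal constraints of the $M/2$-PSK singular fade state $\tfrac{\sin(\pi (k/2)/(M/2))}{\sin(\pi (l/2)/(M/2))}e^{j\phi}=\tfrac{\sin(\pi k/M)}{\sin(\pi l/M)}e^{j\phi}$, read on the even--even sublattice of indices. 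Therefore any $M/2\times M/2$ Latin square that removes this $M/2$-PSK singular fade state is an admissible filling of $\mathcal{L}^{ee}$, and in particular we may set $\mathcal{L}^{ee}(M,k,l,\phi)=\mathcal{L}(M/2,k/2,l/2,\phi)$. Repeating the computation for odd $i$ after first subtracting $1$ from every index (this is precisely the change of coordinates $(2a+1,2b+1)\mapsto(a,b)$ identifying the odd--odd sublattice with $\mathbb{Z}_{M/2}^2$) yields the same $M/2$-PSK constraint set, so we may also set $\mathcal{L}^{oo}(M,k,l,\phi)=\mathcal{L}(M/2,k/2,l/2,\phi)$.

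It then remains to fill the two free blocks and verify that the assembled array is a Latin square. With $\mathcal{L}^{ee}=\mathcal{L}^{oo}=\mathcal{L}(M/2,k/2,l/2,\phi)$, every even--even and odd--odd cell already carries a symbol from $\{0,\dots,M/2-1\}$, so in each row (resp. each column) exactly the even--odd and odd--even cells remain, and they must receive each of $M/2,\dots,M-1$ once. Choosing $\mathcal{L}^{eo}(M,k,l,\phi)=\mathcal{L}^{oe}(M,k,l,\phi)=\mathcal{L}_{M/2}(M/2,k/2,l/2,\phi)$ — an $M/2\times M/2$ Latin square on the symbols $\{M/2,\dots,M-1\}$ — achieves this, and a direct check shows that every row and every column of the resulting $M\times M$ array is then a permutation of $\{0,\dots,M-1\}$. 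Since the singularity constraints were already satisfied inside $\mathcal{L}^{ee}$ and $\mathcal{L}^{oo}$, this Latin square removes the given singular fade state, which establishes the stated form of the quadruplicate.

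The step I expect to be delicate is the rescaling: one must check carefully that, under the hypothesis that $k/2+l/2$ is even, every shift occurring in Lemma~\ref{lemma_sing_const1} is genuinely even so that halving is well defined modulo $M/2$, and that after halving one lands precisely on the constraint set produced by Lemma~\ref{lemma_sing_const1} for the parameters $(M/2,k/2,l/2)$ rather than on some column-shifted variant — the "subtract $1$ first" normalisation in the odd-$i$ case is exactly what prevents this shift. Once the modular bookkeeping is in place, the completion of the Latin square in the last step is routine.
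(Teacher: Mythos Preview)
Your proposal is correct and follows essentially the same approach as the paper: split the constraints from Lemma~\ref{lemma_sing_const1} by the parity of $i$, rescale by $2$ to recognise the $M/2$-PSK constraints inside the even--even and odd--odd blocks, and then complete the remaining two blocks with the shifted Latin square $\mathcal{L}_{M/2}(M/2,k/2,l/2,\phi)$. Your write-up is in fact more explicit than the paper's at the one point that matters, namely in using the hypothesis $\bmod(k/2+l/2,2)=0$ to guarantee that the column shifts $(k\pm l)/2$ are even so that the constraints really do stay inside the $ee$ and $oo$ blocks; the paper leaves this implicit.
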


\begin{figure}
\centering
\subfigure[]{
\includegraphics[totalheight=2in,width=2in]{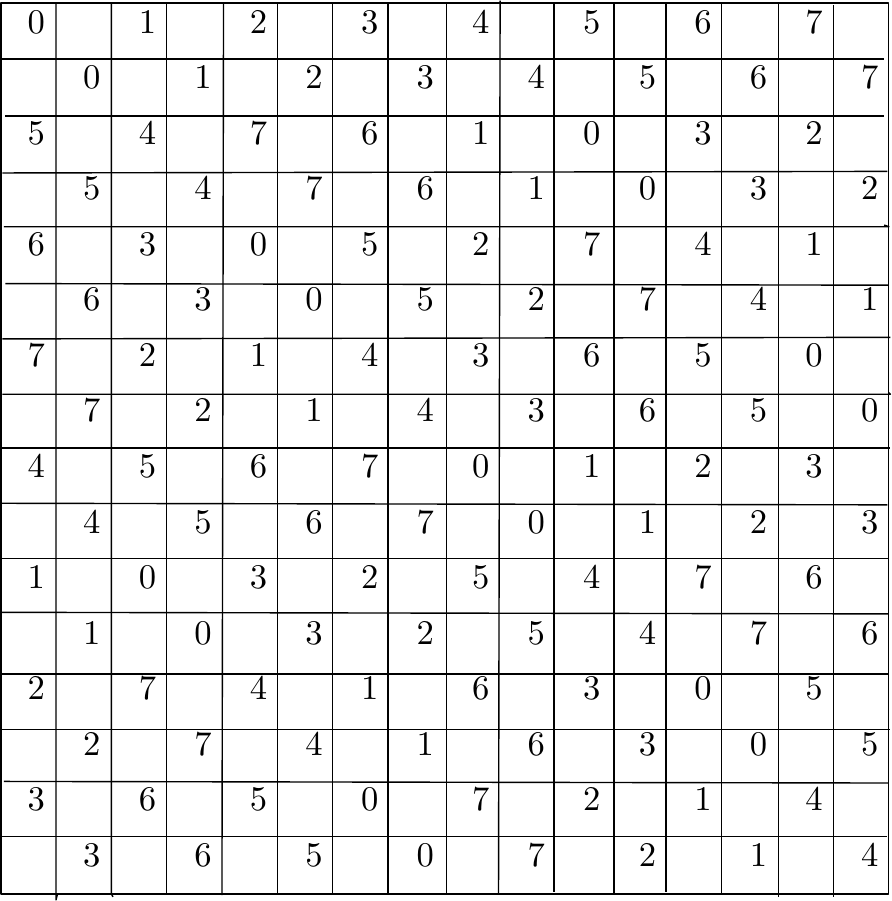}
\label{fig:latin_16psk_2}	
}

\subfigure[]{
\includegraphics[totalheight=2in,width=2in]{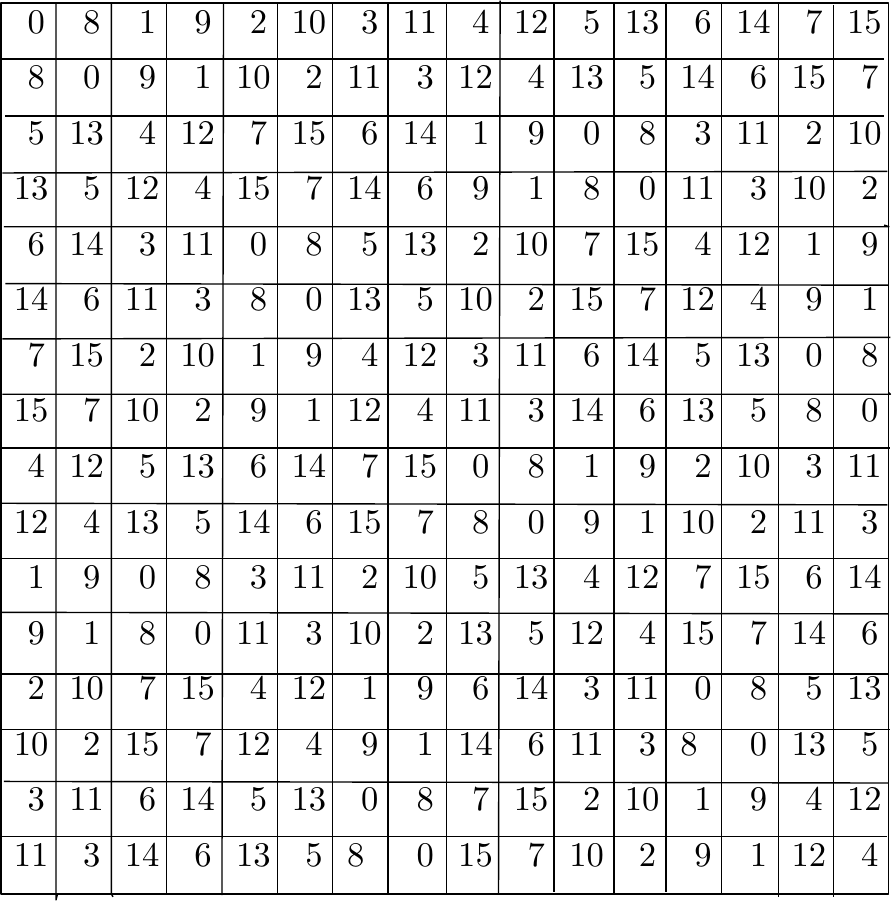}
\label{fig:latin_16psk_3}	
}
\caption{Construction of the Latin Square $\mathcal{L}(16,6,2,0)$}
\label{fig:latin_16psk_even}
\end{figure}

For example, the Latin Square $\mathcal{L}(16,6,2,0)$ can be obtained from $\mathcal{L}(8,3,1,0)$ (which is shown in Fig. \ref{fig:latin_8psk_odd}) as follows: $\mathcal{L}^{ee}(16,6,2,0)$ and $\mathcal{LS}^{oo}(16,6,2,0)$ are chosen to be $\mathcal{L}(8,3,1,0)$ as shown in Fig. \ref{fig:latin_16psk_2}. The Latin Squares $\mathcal{L}^{oe}(16,6,2,0)$ and $\mathcal{L}^{eo}(16,6,2,0)$ are chosen to be $\mathcal{L}_8(8,3,1,0)$ to obtain the Latin Square $\mathcal{L}(16,6,2,0)$ shown in Fig. \ref{fig:latin_16psk_3}.

The following theorem gives the construction of the Latin Square $\mathcal{L}(M,k,l,\phi)$, for the case when both $k$ and $l$ are even, $k \neq l, k \neq M/2$, $l \neq M/2$ and $\mod\left(\frac{k}{2}+\frac{l}{2},2\right)=1$.
\begin{theorem}
The elements of the quadruplicate of $\mathcal{L}(M,k,l,\phi)$, if $\mod\left(\frac{k}{2}+\frac{l}{2},2\right)=1$, are given by,

{\footnotesize
\begin{align*}
&\mathcal{L}^{oe}(M,k,l,\phi)=\mathcal{L}(M/2,k/2,l/2,\phi),\\
&\mathcal{L}^{eo}(M,k,l,\phi)=\mathcal{L}(M/2,k/2,l/2,\phi-2\pi/M),\\
&\mathcal{L}^{oo}(M,k,l,\phi)=\mathcal{L}^{ee}(M,k,l,\phi)=\mathcal{L}_{M/2}(M/2,k/2,l/2,\phi).
 \end{align*} }
\begin{proof}
The proof is similar to Theorem \ref{theorem_even_odd1} and is omitted.
\end{proof}
\end{theorem}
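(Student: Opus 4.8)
The plan is to mirror the proof of Theorem~\ref{theorem_even_odd1} line for line in structure; the only change is the parity bookkeeping forced by the hypothesis $\mod\left(\frac{k}{2}+\frac{l}{2},2\right)=1$. First I would invoke Lemma~\ref{lemma_sing_const1} to write down the two families of singularity-removal constraints for the fade state $\frac{\sin(\pi k/M)}{\sin(\pi l/M)}e^{j\phi}$ (legitimate since both $k$ and $l$ are even), and then split each family according to whether the running index $i$ is even or odd, substituting $i=2i'$ in the first case and $i=2i'+1$ in the second, exactly as in Case~1 and Case~2 of Theorem~\ref{theorem_even_odd1}.

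Here the hypothesis enters. Because $\frac{k+l}{2}$ (equivalently $\frac{k-l}{2}$, which has the same parity) is now \emph{odd}, each column index appearing in the constraints, namely $i-\frac{k+l}{2}$, $i-\frac{k-l}{2}$, $i-\frac{M}{2}-\frac{k-l}{2}$, $i+\frac{M}{2}-\frac{k+l}{2}$, has the \emph{opposite} parity to $i$, while the row indices $i$ and $i-k$ retain the parity of $i$. Consequently---in contrast with Theorem~\ref{theorem_even_odd1}, where the analogous cells fell into the ``diagonal'' blocks $ee$ and $oo$---every constraint coming from $i$ odd occupies a cell with odd row and even column, i.e.\ lies in $\mathcal{L}^{oe}(M,k,l,\phi)$, and every constraint coming from $i$ even occupies a cell with even row and odd column, i.e.\ lies in $\mathcal{L}^{eo}(M,k,l,\phi)$; the blocks $\mathcal{L}^{ee}$ and $\mathcal{L}^{oo}$ receive no constraints.

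Next I would halve the row and column indices within each of the two occupied blocks and identify the reindexed constraints. As in the two cases of Theorem~\ref{theorem_even_odd1}, after reindexing one obtains the singularity-removal constraints of an $(M/2)$-PSK singular fade state lying on the circle $\frac{\sin(\pi (k/2)/(M/2))}{\sin(\pi (l/2)/(M/2))}=\frac{\sin(\pi k/M)}{\sin(\pi l/M)}$. Since the even-$i$ family and the odd-$i$ family of cells are offset by exactly one column position relative to one another, and since such a one-step offset of the $M$-PSK array rescales to a rotation of the effective fade state by $2\pi/M$ on the $(M/2)$-PSK grid (the column-shift/rotation correspondence of Lemma~\ref{lemma_col_perm}), the two reindexed fade states differ by a phase of $2\pi/M$, which gives the claimed identities $\mathcal{L}^{oe}(M,k,l,\phi)=\mathcal{L}(M/2,k/2,l/2,\phi)$ and $\mathcal{L}^{eo}(M,k,l,\phi)=\mathcal{L}(M/2,k/2,l/2,\phi-2\pi/M)$. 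Finally, since $\mathcal{L}^{ee}$ and $\mathcal{L}^{oo}$ carry no constraints, I would set $\mathcal{L}^{ee}=\mathcal{L}^{oo}=\mathcal{L}_{M/2}(M/2,k/2,l/2,\phi)$ and check, using that the symbols $0,\dots,M/2-1$ occur only in the $oe$ and $eo$ blocks while $M/2,\dots,M-1$ occur only in the $ee$ and $oo$ blocks, and that each of the four blocks is itself a Latin square of order $M/2$, that no symbol repeats in any row or column of the assembled $M\times M$ array; hence it is a Latin square that removes the target fade state.

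I expect the main obstacle to be this third step: carefully tracking the row- and column-parity of every cell in the two constraint families and, above all, pinning down that the phase offset between the two off-diagonal blocks is exactly $-2\pi/M$ and that it attaches to $\mathcal{L}^{eo}$ rather than $\mathcal{L}^{oe}$. This amounts to translating the halving of an $M$-PSK index into the corresponding $(M/2)$-PSK constellation point and seeing how the single-step column shift rotates the effective fade state; the hypothesis $\mod(k/2+l/2,2)=1$ is precisely what sends the constraints into the off-diagonal blocks and produces this nontrivial phase, so getting the signs right is the delicate part. Everything else is a routine transcription of the argument for Theorem~\ref{theorem_even_odd1}, which is why the authors can legitimately say ``the proof is similar \dots and is omitted.''
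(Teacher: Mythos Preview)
Your proposal is correct and follows exactly the route the paper intends: it is the natural transcription of the proof of Theorem~\ref{theorem_even_odd1} with the parity roles swapped, which is precisely what the paper means by ``the proof is similar \dots and is omitted.'' You have correctly isolated the one new ingredient---that $\frac{k+l}{2}$ odd sends all constraints into the off-diagonal blocks $\mathcal{L}^{oe}$ and $\mathcal{L}^{eo}$---and correctly flagged the phase bookkeeping (including the $-2\pi/M$ shift and which block it attaches to) as the only place requiring care; one small point worth making explicit when you write it out is that, since $k/2$ and $l/2$ now have opposite parities, the reindexed $(M/2)$-PSK constraints match the form of Lemma~\ref{lemma_sing_const2} rather than Lemma~\ref{lemma_sing_const1}.
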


In this subsection, construction procedure was provided to obtain the Latin Squares which remove the singular fade states which lie on circles with radii $\frac{\sin(\frac{k \pi}{M})}{\sin(\frac{l \pi}{M})}$, where both $k$ and $l$ are even, $k \neq l, k \neq M/2$ and $l \neq M/2$. The total number of such circles is $\left(\frac{M}{4}-1\right)^2-\left(\frac{M}{4}-1\right)=\frac{M^2}{16}-\frac{3M}{4}+2$. It is enough if we obtain one Latin Square per circle for those circles which lie inside the unit circle, since the Latin Square corresponding to all other singular fade states can be obtained by appropriately performing column permutation and taking transpose. Hence, to remove all the singular fade states on circles with radii $\frac{\sin(\frac{k \pi}{M})}{\sin(\frac{l \pi}{M})}$, where both $k$ and $l$ are even, $k \neq l, k \neq M/2$ and $l \neq M/2$, it is enough if we obtain $\frac{\left(\frac{M^2}{16}-\frac{3M}{4}+2\right)}{2}$ Latin Squares from the construction procedure described in this subsection.

The Latin Squares constructed so far (the bit-wise XOR map and the ones obtained in Subsection A and B of this section), which are 

{\footnotesize 
$$1+\left(\frac{M}{4}-1\right)+\frac{\left(\frac{M^2}{16}-\frac{3M}{4}+2\right)}{2}=\frac{M^2}{32}-\frac{M}{8}+1$$
}in number, are sufficient to obtain all the Latin Squares which remove the singular fade states which lie on 

{\footnotesize
$$1+\left(\frac{M^2}{16}-\frac{M}{4}\right)+\left(\frac{M^2}{16}-\frac{3M}{4}+2\right)=\frac{M^2}{8}-M+3$$
}circles. The Latin Squares which remove the singular fade states which lie on the other circles inside the unit circle can be found using computer search. It can be verified that it is sufficient to obtain $\frac{3M^2}{32}+\frac{M}{8}$ Latin Squares in total, from which all the other Latin Squares can be obtained, for $M \geq 8$.
\section{Bidirectional Relaying With Unequal Transmission Rates and Latin Rectangles}

In this section we consider the scenario in which both the end nodes use PSK constellations of different sizes. This results in different transmission rates for both the users. Assume node A and node B use PSK constellations $\mathcal{S}_A$ and $\mathcal{S}_B$ of size $M (=2^\lambda)$ and $N (=2^\mu)$ respectively. The relay should use a constellation of size at least $\max(M,N)$. The exclusive law for this case is given by,

{\footnotesize
\begin{align}
\left.
\begin{array}{ll}
\nonumber
\mathcal{M}^{\gamma,\theta}(x_A,x_B) \neq \mathcal{M}^{\gamma,\theta}(x'_A,x_B), \; \mathrm{where} \;x_A \neq x'_A \; \mathrm{,} \;x_B \in  \mathcal{S}_B,\\
\nonumber
\mathcal{M}^{\gamma,\theta}(x_A,x_B) \neq \mathcal{M}^{\gamma,\theta}(x_A,x'_B), \; \mathrm{where} \;x_B \neq x'_B \; \mathrm{,} \;x_A \in \mathcal{S}_A.
\end {array}
\right\} \\
\label{ex_law}
\end{align}
}
\begin{definition} A Latin rectangle L on the symbols from the set $\mathbb{Z}_t=\{0,1, \cdots ,t-1\}$ is an \textit{M} $\times$ \textit{N}  array, in which each cell contains one symbol and each symbol occurs at most once in each row and column.  
\end{definition}

 The relay clusterings which satisfy the exclusive law form Latin rectangles with $M$ rows and $N$ columns. As before, the study of clusterings which satisfy the exclusive law can be equivalently carried out as the study of Latin rectangles with appropriate parameters. Without loss of generality, it is assumed that $M> N$. It is shown that the set of singular fade states for this case is a subset of the set of singular fade states for the case when both A and B use $M$-PSK signal set. Also, it is shown that the $M \times N$ Latin Rectangle for this case can be obtained from $M \times M$ Latin Square obtained for the case when both A and B use $M$-PSK signal set.
 
\begin{lemma}
The set of singular fade states for the case when $M$-PSK signal set is used at A and $N$-PSK signal set is used at B, both $M$ and $N$ powers of 2 and $M >N$, is a subset of the set of singular fade states for the case when both A and B use $M$-PSK signal set.
\end{lemma}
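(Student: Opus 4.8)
The plan is to reduce the statement to a containment of difference constellations and then invoke the characterization of singular fade states from Section II. Recall that $\gamma e^{j\theta}$ is a singular fade state for the symmetric case (both nodes using $M$-PSK) if and only if $\gamma e^{j\theta}=-\delta/\delta'$ for some nonzero $\delta,\delta'\in\Delta\mathcal{S}_A$, where $\mathcal{S}_A$ is the $M$-PSK signal set. For the unequal case, $\gamma e^{j\theta}$ is singular precisely when there exist $(x_A,x_B)\neq(x'_A,x'_B)$ with $x_A,x'_A\in\mathcal{S}_A$, $x_B,x'_B\in\mathcal{S}_B$ and $(x_A-x'_A)+\gamma e^{j\theta}(x_B-x'_B)=0$; since $\gamma>0$, such a coincidence forces $x_B\neq x'_B$, and then $x_A\neq x'_A$ as well, so $\gamma e^{j\theta}=-(x_A-x'_A)/(x_B-x'_B)$ with $x_A-x'_A\in\Delta\mathcal{S}_A\setminus\{0\}$ and $x_B-x'_B\in\Delta\mathcal{S}_B\setminus\{0\}$. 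Hence it suffices to prove $\Delta\mathcal{S}_B\setminus\{0\}\subseteq\Delta\mathcal{S}_A\setminus\{0\}$: once this is in hand, every singular fade state of the unequal case is a ratio of two nonzero $M$-PSK differences, and therefore a singular fade state of the symmetric $M$-PSK case.

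The remaining task is the inclusion $\Delta\mathcal{S}_B\subseteq\Delta\mathcal{S}_A$, and here the one point I would be careful about is the relative placement of the two constellations. In the setting of this section $\mathcal{S}_B$ is the set of $N$ columns retained from the $M\times M$ array, which is exactly what makes meaningful the later statement that the $M\times N$ Latin rectangle is obtained by deleting columns of the $M\times M$ Latin square; that is, $\mathcal{S}_B$ is the sub-constellation of $\mathcal{S}_A$ consisting of every $(M/N)$-th point of the $M$-PSK. Since $M/N$ is a power of two, these $N$ equally spaced points form a (rotated) $N$-PSK and $\mathcal{S}_B\subseteq\mathcal{S}_A$; consequently $\Delta\mathcal{S}_B=\{\,b-b':b,b'\in\mathcal{S}_B\,\}\subseteq\{\,a-a':a,a'\in\mathcal{S}_A\,\}=\Delta\mathcal{S}_A$, which is what is needed. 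Alternatively, one could argue directly with the parametrization of Lemma \ref{lemma_singularity}: the modulus of $-\delta_A/\delta_B$ is $\sin(\pi n_A/M)/\sin(\pi n_B/N)=\sin(\pi n_A/M)/\sin(\pi(n_B M/N)/M)$, which is of the admissible form $\sin(k_1\pi/M)/\sin(k_2\pi/M)$, and one would then check that the associated phase falls on the admissible grid for that radius; but the sub-constellation viewpoint makes this bookkeeping unnecessary.

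I expect the crux to be precisely this alignment, not any computation. If $\mathcal{S}_B$ were instead the ``standard'' $N$-PSK $e^{j(2k+1)\pi/N}$ carrying its own rotation while $\mathcal{S}_A=\{e^{j(2k+1)\pi/M}\}$, then on the shared radii the points of $\Delta\mathcal{S}_B$ would lie at phases differing from those of $\Delta\mathcal{S}_A$ by an odd multiple of $\pi/M$, the inclusion would fail, and the conclusion would hold only after a global rotation of the complex plane. So the substance of the lemma is that, with $\mathcal{S}_B$ positioned as a sub-constellation of $\mathcal{S}_A$ (the natural choice for the column-deletion construction), the difference constellations nest; everything else is the short chain of implications spelled out above.
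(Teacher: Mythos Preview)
Your argument is correct and follows essentially the same route as the paper: both proofs reduce the claim to the observation that $\mathcal{S}_B\subseteq\mathcal{S}_A$, hence every ratio $-(x_A-x'_A)/(x_B-x'_B)$ arising in the unequal case is already a ratio of nonzero $M$-PSK differences and therefore a singular fade state of the symmetric $M$-PSK system. Your explicit discussion of the alignment issue (that $\mathcal{S}_B$ must be taken as the sub-constellation of $\mathcal{S}_A$, not the independently rotated ``standard'' $N$-PSK $e^{j(2k+1)\pi/N}$) is a point the paper simply asserts without comment, so your treatment is in fact more careful.
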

\begin{proof}
Let $\mathcal{S}_M$ and $\mathcal{S}_N$ denote the symmetric $M$-PSK and $N$-PSK signal sets used at A and B respectively. 

Assume $M=2^{\lambda}$ and $N=2^{\mu}$, $\lambda > \mu$.

Note that the signal set $\mathcal{S}_N$ is a subset of the signal set $\mathcal{S}_M$.

The singular fade states for this case are of the form,

{\footnotesize
\begin{align}
\label{eqn_sing_rectan}
\gamma e^{j \theta}=-\frac{x(k)-x(k')}{y(l)-y(l')}, x(k)\neq x(k') \in \mathcal{S}_M, y(l)\neq y(l') \in \mathcal{S}_N
\end{align}
}

The singular fade states given by \eqref{eqn_sing_rectan}, are also singular fade states for the case when both A and B use $M$-PSK signal set, since $y(l)$ and $y(l')$ are points in the $M$-PSK signal set as well. This completes the proof.
\end{proof}

 The $M \times N$ system can be interpreted as a $M \times M$ system with node B not transmitting some symbols. The corresponding Latin Rectangle is obtained from the Latin Square for $M \times M$ system by removing the columns corresponding to the unused symbols.
 The following example illustrates this.
\begin{example}
  Consider the case when 8-PSK signal set is used at A and 4-PSK signal set is used at B. For $\gamma=\sin(\pi/4)$ and $\theta=0$, by removing the columns  1, 3, 5 and 7 (equivalently viewed as B transmitting only the 4 symbols 0, 2, 4 and 6 in the 8-PSK signal set) from the Latin Square given in Fig. \ref{8L6}, the Latin Rectangle shown in Fig. \ref{latin_rectang} is obtained.
\end{example}

\begin{figure}[htbp]
\centering
{
\begin{tabular}{|c|c|c|c|}
\hline 4 & 6 & 1 & 7 \\ 
\hline 2 & 0 & 3 & 5 \\
\hline 0 & 7 & 5 & 6 \\  
\hline 3 & 4 & 2 & 1 \\ 
\hline 5 & 2 & 0 & 3 \\ 
\hline 7 & 1 & 6 & 4 \\
\hline 1 & 3 & 4 & 2 \\ 
\hline 6 & 5 & 7 & 0 \\
\hline
\end{tabular}}
\caption{Latin Rectangle for  $\gamma=\sin(\pi/4), \theta=0$}
\label{latin_rectang}
\end{figure}
\vspace{-0.1 cm}
\section{DISCUSSION}
In this paper, for the design of modulation schemes for the physical layer network-coded two way relaying scenario with the protocol which employs two phases: Multiple access (MA) Phase and Broadcast (BC) phase, we identified a relation between the required exclusive laws satisfying clusterings and Latin Squares. This relation is  used to get all the maps to be used at the relay efficiently. Further we illustrated the results presented for the case, where both the end nodes use QPSK constellation as well as 8-PSK constellations. Here we concentrated only on singular fade states and the clusterings to remove that with only the minimum cluster distance under consideration. We are not considering the entire distance profile as done in \cite{APT1}. Our work eliminate the singular fade states effectively and these clusterings can be used in other regions in the complex plane of $(\gamma, \theta)$, as shown in  \cite{VNR}.
\section*{Acknowledgement}
This work was supported  partly by the DRDO-IISc program on Advanced Research in Mathematical Engineering through a research grant as well as the INAE Chair Professorship grant to B.~S.~Rajan.
\end{document}